\newtheorem{theorem}{Theorem}
\newtheorem{proposition}{Proposition}
\newtheorem{lemma}{Lemma}
\newtheorem{assumption}{Assumption}
\renewcommand{\algocf@captiontext}[2]{#1\algocf@typo. \AlCapFnt{}#2} 
\def\@algocf@capt@plain{top}
\renewcommand{\algocf@makecaption}[2]{%
  \addtolength{\hsize}{\algomargin}%
  \sbox\@tempboxa{\algocf@captiontext{#1}{#2}}%
  \ifdim\wd\@tempboxa >\hsize
    \hskip .5\algomargin%
    \parbox[t]{\hsize}{\algocf@captiontext{#1}{#2}}
  \else%
    \global\@minipagefalse%
    \hbox to\hsize{\box\@tempboxa}
  \fi%
  \addtolength{\hsize}{-\algomargin}%
}
\newcommand{\blind}{1}
\begin{document}

\def\spacingset#1{\renewcommand{\baselinestretch}%
{#1}\small\normalsize} \spacingset{1}


\if1\blind
{
  \title{\bf Inference on covariance structure in high-dimensional multi-view data}
  \author{Lorenzo Mauri
  \hspace{.2cm}\\
    Department of Statistical Science, Duke University\\
    \small{\texttt{lorenzo.mauri@duke.edu}}\\
    David B. Dunson\\
        Department of Statistical Science, Duke University\\
    }
  \maketitle
} \fi

\if0\blind
{
  \bigskip
  \bigskip
  \bigskip
  \begin{center}
    {\LARGE\bf Inference on covariance structure in high-dimensional multi-view data}
\end{center}
  \medskip
} \fi

\bigskip
\begin{abstract}
This article focuses on covariance estimation for multi-view data. Popular approaches rely on factor-analytic decompositions that have shared and view-specific latent factors. Posterior computation is conducted via expensive and brittle Markov chain Monte Carlo (MCMC) sampling or variational approximations that underestimate uncertainty and lack theoretical guarantees. Our proposed methodology employs spectral decompositions to estimate and align latent factors that are active in at least one view. Conditionally on these factors, we choose jointly conjugate prior distributions for factor loadings and residual variances. The resulting posterior is a simple product of normal-inverse gamma distributions for each variable, bypassing MCMC and facilitating posterior computation. We prove favorable increasing-dimension asymptotic properties, including posterior contraction and central limit theorems for point estimators. We show excellent performance in simulations, including accurate uncertainty quantification, and apply the methodology to integrate four high-dimensional views from a multi-omics dataset of cancer cell samples. 
\end{abstract}

\noindent%
{\it Keywords:} Factor analysis; 
High-dimensional; 
Latent variable model; 
Multi-omics;
Multi-view;
Scalable Bayesian computation;
Singular value decomposition
\vfill

\newpage

\section{Introduction}
Multi-view data are routinely collected in many applied settings, ranging from biology to ecology. In biomedical and precision medicine applications, researchers commonly collect multi-omics data, ranging from gene expression to metabolomics. In such settings, there is often interest in inferring the within- and between-view covariance structure in the data. However, it is common for different data views to have very different dimensionalities, distributions, and signal-to-noise ratios, so that methods that ignore the multi-view structure and concatenate the data prior to analysis can produce sub-optimal results. 
This has motivated a rich and rapidly expanding literature on methods for integrating data from multiple sources, including \citet{jive, ajive,  mofa, mofa+, sjive, mefisto, coop_mw, yi_et_al, clic, jafar, stack_svd}. Factor models \citep{harman1976modern, West2003} provide a powerful way to reduce dimensionality and accurately estimate covariance in high dimensions. For multi-view data, a popular approach models the $i$-th observation from the $m$-th view as 
\begin{equation}\label{eq:mofa}
\begin{aligned}
      y_{mi} &=  \Lambda_m \eta_{mi} + \epsilon_{mi}, \quad \epsilon_{mi} \sim N_p(0, \Psi_m),  \\
     \eta_{mi}  &= A_m^\top \eta_i , \quad \eta_i \sim  N_{k_0}(0, I_{k_0}),
\end{aligned}
\quad  (i=1, \dots, n; m=1, \dots, M),
\end{equation}
where $ \Psi_m = \text{diag}(\sigma_{m1}^2, \dots, \sigma_{m p_m}^2)$ are residual variances. $A_m \in \mathbb  R^{k_0 \times k_m}$ is a matrix that selects the $k_m$ factors $\eta_{mi}$ that are active in the $m$-th view from the vector of all factors $\eta_i$. In particular, each element of $A_m$ is either $0$ or $1$, with each column having a unique non-zero entry and each row having at most one non-zero entry.

Model \eqref{eq:mofa} induces the following factorizations of the intra- and inter-view covariances
\begin{equation}\label{eq:covariances}
    cov(y_{mi}) = \Lambda_m \Lambda_m^\top + \Psi_m, \quad cov(y_{mi}, y_{m'i}) = \Lambda_m A_m^\top A_{m'}\Lambda_{m'}^\top
\end{equation}
and, using results for multivariate Gaussian distributions, for a pair of two views $m \neq m'$, we have
\begin{equation}\label{eq:conditional_distribution}\begin{aligned}
 y_{mi}&\mid y_{m'i} \sim N_{p}(\mu_{i,m \mid m'}, \Sigma_{m \mid m}),\\
    \mu_{i,m \mid m'}& = \Lambda_m A_m^\top A_{m'}\Lambda_{m'}^\top\begin{bmatrix}
        \Lambda_{m'} \Lambda_{m'}^\top + \Psi_{m'}
    \end{bmatrix}^{-1}y_{m'i}, \\
    \Sigma_{m \mid m} &= \Lambda_m \Lambda_m^\top + \Psi_m - \Lambda_m A_m^\top A_{m'}\Lambda_{m'}^\top\begin{bmatrix}
        \Lambda_{m'} \Lambda_{m'}^\top + \Psi_{m'}
    \end{bmatrix}^{-1} \Lambda_{m'} A_{m'}^\top A_{m}\Lambda_{m}^\top.
\end{aligned}   
\end{equation}
The results in \eqref{eq:conditional_distribution} can easily be generalized to more than two views. Equations \eqref{eq:covariances} and \eqref{eq:conditional_distribution} illustrate that the joint modeling of multiple views facilitates straightforward inferences on the covariance structure as well as the prediction of variables in certain views given the others.

\citet{gfa, gfa_2, mofa} adopt a Bayesian approach to fitting model \eqref{eq:mofa} using automatic relevance determination or spike-and-slab priors \citep{spike_slab} to
infer which factors are active in each view. \citet{mofa+, mefisto, mofa_flex} expand \citet{mofa}'s approach, allowing for group structure and spatial correlation across observations. \citet{motl} develop a transfer learning approach to improve inference on small target datasets leveraging larger datasets and \citet{domain} allow for the incorporation of prior information. The posterior is approximated using mean-field variational inference, which is likely to substantially underestimate uncertainty and lacks guarantees on accuracy of the posterior approximation.
\citet{bgfa_sparse} infer the factor loadings with global-local shrinkage priors and compute a point estimate via an Expectation-Maximization algorithm.

Alternatively, Gibbs samplers or other Markov chain Monte Carlo algorithms can be developed for posterior computation under model \eqref{eq:mofa}. However, these sampling algorithms can suffer from slow convergence and mixing even for single-view factor models when data are high dimensional. Additionally, shrinkage priors that are used to identify the number of active factors and in which view each factor is active can be highly sensitive to hyperparameter choice, calling for non-trivial and computationally intensive tuning routines. 

We also highlight another issue with full likelihood inference for model \eqref{eq:mofa}. In the presence of heterogeneity of dimensions and signal-to-noise ratios across data modalities, views with smaller dimension or lower signal-to-noise ratio have a relatively small contribution to the overall data likelihood. The latent factors that act only on these views might be poorly estimated. This issue is exacerbated by methods that employ information criteria or shrinkage priors to estimate the number of latent factors, which might completely drop these latent factors.
The same problem affects most state-of-the-art factor analysis methods, such as \citet{rotate}'s method, applied to the dataset obtained by concatenating the $Y_m$'s. 

In a parallel line of research, \citet{jive, ajive} propose a matrix factorization approach (\texttt{JIVE}) that decomposes each data set into a low-rank joint structure, a low-rank view-specific component, and additive noise. Properties of \texttt{JIVE} have been studied in \citet{yang2025estimating}. The key limitation of \texttt{JIVE} is that it does not provide any quantification of uncertainty.

\citet{fable} recently propose an alternative to Gibbs sampling for Bayesian analysis of high-dimensional single-view factor models. Their approach consists of estimating the latent factors via a matrix factorization and inferring loadings and residual error variances via surrogate regression problems, where the estimates for latent factors are treated as observed covariates. They show posterior
concentration and valid coverage for credible intervals of entries of the covariance. Their approach is related to joint estimation for factor models \citep{chen_jmle, chen_identfiability, lee24, flair, blast}, which has been shown to be consistent in double asymptotic scenarios in which both sample size and data dimension diverge. \citet{basil} expand on this approach to incorporate auxiliary information. 
\citet{fable}'s method can have excellent empirical performance when applied to
the concatenated $Y_m$'s in recovering the overall covariance matrix, but suffers from the aforementioned problems in estimating certain intra- or inter-view covariances in presence of view heterogeneity, as we show in our numerical experiments.  

Motivated by these considerations, we propose Factor Analysis for Multi-view data via spectral Alignment (\texttt{FAMA}). First, \texttt{FAMA} estimates the active latent factors in each view (the matrices of $\{\eta_{mi}\}_{i=1}^n$ for $m=1, \dots, M$) by spectral decompositions treating each view separately. This enables accurate estimation of the latent factors for each view, regardless of signal-to-noise ratio and dimension heterogeneity across views. Next, we use another spectral decomposition to align factor estimates from each view. This step provides an estimate of the matrix of $\eta_i$'s, the latent factors that are active in at least one view, up to an orthogonal transformation. Finally, we infer posterior distributions for the loadings and residual variances via surrogate Bayesian regression tasks using the estimate for the latent factors as observed covariates and employing conjugate normal-inverse gamma priors. These posteriors can be computed analytically in parallel for the variables in each view, and induce posteriors on the inter- and intra-view covariances. With a mild analytic inflation of the posterior variance, these covariance posteriors have asymptotically valid frequentist coverage.

We provide theoretical support for our methodology by showing posterior contraction around the true parameter when both the sample size and view dimensions diverge, a central limit theorem for the point estimator, and a Bernstein-von Mises-type result characterizing the asymptotic behavior of the posterior distribution. We show excellent performance in simulations and apply the methodologies to integrate four omics datasets of cancer cell samples from \citet{linkedomics}.

Our contributions include: (i) a spectral decomposition-based estimate of latent factors that are active in any of the views, (ii) a method to obtain point estimates that is notably faster than many alternatives, (iii) a method to quantify uncertainty without any expensive Markov chain Monte Carlo routine, and (iv) substantial theoretical support for our methodology.

\section{Methodology}\label{sec:method}
\subsection{Notation}
For a matrix $A$, we denote by $||A||$, $||A||_F$ and $||A||_\infty$ its spectral, Frobenius, and entry-wise infinity norm, respectively, and by $s_l(A)$ its $l$-th largest singular value. For a vector $v$, we denote by $||v||$ and $||v||_\infty$ its Euclidean and entry-wise infinity norm, respectively.

\subsection{Spectral estimation of latent factors}\label{subsec:factor_estimation}

In this section, we illustrate how to obtain an estimate of the matrix of latent factors.
First, we rewrite the model in its equivalent matrix form:
\begin{equation}\label{eq:model_matrix}
    Y_m  =  F A_m\Lambda_m^\top  + E_m, \quad (m=1, \dots, M),
\end{equation}
where 
\begin{equation*}\begin{aligned}
    Y_m = \begin{bmatrix}
        y_{m1} ~ \cdots ~  y_{mn}
    \end{bmatrix}^{\top} & \in   \mathbb R^{n \times p_m}, \quad F = \begin{bmatrix}
        \eta_{1}~ \cdots ~ \eta_{n}
    \end{bmatrix}^\top \in  \mathbb R^{n \times k_0}, \\ E_m &=\begin{bmatrix}
            \epsilon_{m1} ~ \cdots ~ \epsilon_{mn}
    \end{bmatrix}^{\top} \in  \mathbb R^{n \times p_m}.
\end{aligned}
\end{equation*}

Here, we consider the latent dimensions, $k_0$ and $\{k_m\}_{m=1}^M$, to be known. Section \ref{subsec:hyperparams} describes an information criterion-based approach to estimate them. First, we compute the singular value decomposition of each view and let $P_m = U_m U_m^{\top}$, where $U_m \in  R^{n \times k_m}$ is the matrix of left singular vectors of $Y_m$ associated to its leading $k_m$ singular values. Proposition \ref{prop:reovery_P_m} in the supplementary material shows that $P_m$ approximates the orthogonal projection matrix projecting on the column space spanned by the true signal $F A_m \Lambda_m^\top$ for large values of $n$ and $p_m$. 
Hence, intuitively, for each pair $m \neq m'$, $P_m$ and $P_{m'}$ approximate projections that span subspaces with a partially shared basis, corresponding to the main axes of variation of latent factors that are shared by views $m$ and $m'$. Moreover, $\sqrt{n} U_m$ coincides with the principal component estimates of latent factors for $Y_m$ \citep{bai_03}.
Next, we let $\tilde P$ be the empirical average of the $P_m$'s:
\begin{equation}\label{eq:P_tilde}
    \tilde P = \frac{1}{M}\sum_{m=1}^M P_m.
\end{equation}
Finally, the estimate of the latent factors is $\hat F = \sqrt{n} U$, where $U \in  R^{n \times k_0}$ is the matrix of left singular vectors of $\tilde P$ associated to its $k_0$ leading singular values. Since the latent factors are independent standard Gaussian random variables, the different columns of $F$ span linearly independent directions. Therefore, for each column of $F$, at least one $P_m$ will project onto a subspace with its basis including the corresponding direction. Hence, the first $k_0$ singular vectors of $\tilde P$ approximately span the same space as the latent factors and their corresponding singular values will approximately be $\geq 1/M$.
Theorem \ref{thm:recovery_factors_overall} in Section \ref{sec:theory} shows that the Procrustes-Frobenius error in $\hat F$ vanishes as $n$ and the $p_m$'s diverge.  This procedure (summarized in Algorithm \ref{alg:factor_estimation}) provides an estimate (up to rotation) of the matrix of the latent factors that are active in at least one view. Hence, we obtain an estimate of the latent factors that are active in at least one view, instead of trying to infer in which view each latent factor is active. This allows us to avoid some of the statistical and computational hurdles associated to such task, while still yielding accurate covariance estimates. In Section \ref{sec:bd} of the Supplementary Material, we discuss a way to recover shared and view-specific latent factors.
\begin{algorithm}
\caption{ Latent factor estimation procedure.
}\label{alg:factor_estimation}\vspace{-2em}
\begin{tabbing}
  Input: The data matrices $\{Y_m\}_{m=1}^M$, %
   and the view-specific latent dimensions $\{k_m\}_{m=1}^M$.\\
   Step 1: For each $m=1, \dots, M$, compute the singular value decomposition of $Y_m$ and let \\
   \quad \enspace \quad \quad $P_m = U_m U_m^\top$, where $U_m \in  R^{p_m \times k_m}$ denotes the matrix of right singular vectors of \\  \quad \enspace \quad \quad $P_m$ corresponding to the $k_m$ leading singular values. \\
   \text{Step 2:} Compute the empirical mean of the $P_m$'s as $\tilde P = \frac{1}{M} \sum_{m=1}^M P_m$.
\\
  \text{Step 3:} Estimate the overall latent dimension $k_0$ using the criterion \eqref{eq:k_0_hat}.
\\

  \text{Step 4:} Estimate the latent factors as $\hat F = \sqrt{n} U$, where $U \in  R^{n \times k_0}$ is the matrix of  left \\ \quad \enspace \quad \quad singular vectors of $\tilde P$ associated to the $k_0$ leading singular values.\\

 \text{Output:} The estimate for the latent factors $\hat F$.
\end{tabbing}
\end{algorithm}

\subsection{Inference of loadings}\label{subsec:inference_Lambda}

After estimating the latent factors, we infer the loadings via the surrogate multilinear regression:
\begin{equation} 
    \hat Y_m = \hat F  \tilde \Lambda_m^\top +\tilde E_m, \quad \text{vec}\big(\tilde E_m \big) \sim N_{np_m}(0, \tilde \Psi_m \otimes I_n),
\end{equation} 
where we introduced new parameters $\tilde \Lambda_m \in  R^{p_m \times k_0}$ and $ \tilde \Psi_m = \text{diag}\big(\tilde \sigma_{m1}^2, \dots, \tilde \sigma_{m p_m}^2\big)$ and treated the estimated latent factors matrix $\hat F$ as the observed covariate matrix. We estimate the factors up to an orthogonal transformation: $\hat F  \approx F R$ for some orthogonal matrix $R$. Ideally, if $\hat F \approx F$, we could estimate $\tilde \Lambda_m$ via a sparsity-inducing prior or penalty and recover $\Lambda_m A_m^\top$ by identifying which columns of $\hat F$ are active in view $m$. However, since $\hat F \approx F R$, we only want $\tilde \Lambda_m$ to be such that $\tilde \Lambda_m  \approx \Lambda_m A_m^\top R$ and to recover intra- and inter-omics covariances through quantities $ \tilde \Sigma_m= \tilde \Lambda_m \tilde \Lambda_m^\top + \tilde \Psi_m$ and $\tilde \Sigma_{mm'} = \tilde \Lambda_m \tilde \Lambda_{m'}^\top$.  
In high dimensions, we have $k_0 < \sum_{m=1}^M k_m \ll p_m$ for $m=1, \dots, M$. Hence, our approach still massively reduces the dimensionality in estimating the covariances to a problem of inferring $p=\sum_{m=1}^M p_m$ $k_0$-dimensional regression problems. We adopt conjugate Normal-Inverse Gamma priors on the rows of $\tilde \Lambda_m$ and residual variances ($\{\tilde \lambda_{mj}, \tilde \sigma_{mj}^2\}_{j=1, m=1}^{p_m, M}$),
\begin{equation}
     \tilde \lambda_{mj} \mid \tilde \sigma_{mj}^2 \sim N_{k_0}\big(0, \tau_m^2 \tilde \sigma_{mj}^2 I_{k_0}\big),  \quad \tilde \sigma_{mj}^2 \sim IG \Big( \frac{\nu_0}{2}, \frac{\nu_0 \sigma_0^2}{2}  \Big),
\quad (j=1, \dots, p_m; m=1, \dots, M).
    \label{eq:prior_Lambda}
      \end{equation} 
The resulting posterior, 
for $j=1, \dots, p_m$ and $m=1, \dots, M$, has the form 
\begin{equation}\begin{aligned}
    (\tilde \lambda_{mj}, \tilde \sigma_{mj}^2) \mid y_m^{(j)}& \sim  NIG \big(\tilde \lambda_{mj}, \tilde \sigma_{mj}^2; \hat \lambda_{mj}, K_m,  \nu_n/2,  \nu_n \delta_{mj}^2/2\big)\\
    &= N_{k_0}\big(\tilde \lambda_{mj}; \hat \lambda_{mj}, \tilde \sigma_{mj}^2 K_m\big)IG \big(\tilde \sigma_{mj}^2;  \nu_n/2,  \nu_n \delta_{mj}^2/2\big),
\end{aligned}
\end{equation}
where 
\begin{equation}\label{eq:posterior_params}
    \begin{aligned}
         \hat \lambda_{mj} &=  \big(\hat F^\top \hat F + \tau_m^{-2} I_{k_0} \big)^{-1}\hat F^{\top} y_m^{(j)} = \frac{1}{n + \tau_\Lambda^{-2}} \hat F^\top y_m^{(j)}, \\
         K_m & = \big(\hat F^\top \hat F + \tau_m^{-2} I_{k_0}\big)^{-1} = \frac{1}{n + \tau_m^{-2}} I_{k_0},\\
         \nu_{n} &= \nu_0 + n,\\
         \delta_{mj}^2 &= \frac{1}{\nu_n} \left(\nu_0 \sigma_0^2  +y_m^{(j)\top} y_m^{(j)} - \hat \lambda_{mj}^\top K^{-1}\hat \lambda_{mj}\right),
    \end{aligned}
\end{equation}
and $y_m^{(j)} $ is the $j$-th column of $Y_m$. The posterior mean for $\Lambda$ is given by
\begin{equation}\label{eq:mu_Lambda}
 \hat \Lambda_m =   \begin{bmatrix}
        \hat \lambda_{m1} & 
        \cdots & 
        \hat \lambda_{mp_m}
    \end{bmatrix}^\top = \frac{1}{n + \tau_m^{-2}} Y_m^\top \hat F,
\end{equation}
which is equivalent to the 
$L_2$-regularized least squares estimate \citep{ridge},
\begin{equation*}
     \hat \Lambda_m  = \underset{ B_m \in \mathbb R^{p_m \times k_0}}{\operatorname{argmin}} || Y_m - \hat F B_m^\top ||_F^2 + \tau_m^{-2} ||B_m||_F^2.
\end{equation*}

Although this procedure produces accurate point estimates, it induces a posterior distribution with slight undercoverage, as formalized in Section \ref{sec:theory}. We solve this problem by inflating the posterior variance of $\tilde \Lambda_m$  by a factor $\rho_m^2 > 1$. Section \ref{subsec:hyperparams} describes a tuning procedure for the $\rho_m$'s, which guarantees asymptotically valid frequentist coverage. The resulting coverage-corrected posterior takes the following form:
\begin{equation}\label{eq:lambda_sigma_cc}\begin{aligned}
    (\tilde \lambda_{mj}, \tilde \sigma_{mj}^2) \mid y_m^{(j)} \sim & NIG \big(\tilde \lambda_{mj}, \tilde \sigma_{mj}^2; \hat \lambda_{mj}, \rho_m^2 K, \nu_n/2, \nu_n \delta_{mj}^2/2\big)\\
    &= N_{k_0}\big(\tilde \lambda_{mj}; \hat \lambda_{mj},  \rho_m^2 \tilde  \sigma_{mj}^2 K\big)IG \big(\tilde \sigma_{mj}^2; \nu_n/2, \nu_n \delta_{mj}^2/2\big).
\end{aligned}
\end{equation}

\subsection{Hyperparameter selection }\label{subsec:hyperparams}
This section describes how to choose hyperparameters of \texttt{FAMA}. First, we propose an information criterion-based approach for estimating latent dimensions $k_m,m=1,\ldots,M,k_0$. We estimate $k_m$ by minimizing the joint likelihood-based information criterion (JIC) of \citet{chen_jic},
\begin{equation*}\label{eq:JIC}
    \text{JIC}_m(k) = -2 l_{mk} + k \max(n,p_m) \log \{\min(n,p_m)\},
\end{equation*}
where $l_{mk}$ is the log-likelihood of the data in the $m$-th view computed at the joint maximum likelihood estimate when the latent dimension is equal to $k$. Calculating the joint maximum likelihood estimate for each value of $k$ can be computationally expensive. Therefore, we approximate $l_{mk}$ with $ l_{mk} \approx \hat l_{mk} $, where $\hat l_{mk}$
 is the likelihood obtained by estimating the latent factors as the leading left singular values of $Y_m$ scaled by $\sqrt{n}$ and the factor loadings by their conditional mean given such an estimate. More details are provided in the supplementary material. Thus, we set
\begin{equation}\label{eq:k_hat}
    \hat k_m = \arg \min_{k_m=1, \dots, k_{m, \max}} \widehat{ \text{JIC}}_m(k_m), \quad \widehat{ \text{JIC}}_m(k)= -2 \hat l_{mk} + k \max(n,p_m) \log\{\min(n,p_m)\},
\end{equation}
where $k_{m, max}$ is an upper bound on $k_m$. Next, $k_0$ is estimated as 
\begin{equation}\label{eq:k_0_hat}
    \hat k_0 = \arg \max_{j=\min\{\hat k_1, \dots, \hat k_S\}, \dots, k_{0, \max}} s_{j}(\tilde P) - s_{j+1}(\tilde P)
    , \quad \text{such that } s_{j+1}(\tilde P) < 1/M - \tau_1, 
\end{equation}
where $k_{0, max}$ is an upper bound on $k_0$ and $\tau_1$ is a small value, which is set to
$1/(2M)$ in practice. The rationale for \eqref{eq:k_0_hat} is as follows. Since latent factors are independent standard Gaussian random variables, the columns of $F$ are approximately orthogonal. Hence, the singular vectors of $\tilde P$ associated with singular values larger than $1/M - \tau_1$ correspond to latent factors that are active in at least one view, while the others have minimal impact and can be removed.  

To tune the variance inflation terms, we take inspiration from \citet{fable} and, for $j,j' = 1, \dots, p_m$, we let \begin{equation}\label{eq:b_ms}
    b_{mjj'} = \begin{cases}
        \big( 1 +  \frac{||\hat \lambda_{mj}||_2^2 ||\hat \lambda_{mj'} ||_2^2 + (\hat \lambda_{mj}^\top \hat \lambda_{mj'})^2 }{ \delta_{mj}^2 ||\hat \lambda_{mj'} ||_2^2 + \delta_{mj'}^2||\hat \lambda_{mj} ||_2^2} \big)^{1/2}, \quad &\text{if } j \neq j'   \\   \big(1 + \frac{ ||\hat \lambda_{mj} ||_2^2 }{ 2\delta_{mj}^2 }\big)^{1/2}, \quad &\text{otherwise.} 
              \end{cases}
\end{equation}
The choice $\rho_m = \max_{1 \leq j \leq j' \leq p_m} b_{mjj'}$ guarantees valid frequentist coverage of entry-wise credible intervals. In practice, we chose $\rho_m = {p_m \choose 2}^{-1} \sum_{1 \leq j \leq j' \leq p_m} b_{mjj'}$ which allows control of the coverage on average across entries. We justify this choice in Section \ref{sec:theory}.

Finally, we propose a data-adaptive strategy for the prior variances. Conditionally on $\{\tilde \sigma_{mj}^2\}_{j=1}^{p_m}$, the \textit{a priori} expected value of the squared Frobenius norm of $\tilde \Lambda_m$ is $E[|| \tilde \Lambda_m||_F^2 \mid \tau_m^2, \{\tilde \sigma_{mj}^2\}_{j=1}^{p_m} ] = \tau_m^2k_m \sum_{j=1}^{p_m}\tilde \sigma_{mj}^2$. Note that $L_m = || U_m^\top Y_m||_F^2 /n$ and $\hat \sigma_m^2 = \sum_{j=1}^{p_m} \hat \sigma_{jm}^2$, where $\hat \sigma_{mj}^2 = ||(I_n - U_m^c U_m^{c\top}) y_m^{(j)}||_2^2/n$, are consistent estimates for $||\Lambda||_F^2$ and $\sum_{j=1}^{p_m} \sigma_{mj}^2$. Thus, we set $\tau_m^2$ to $\hat \tau_m^2 = \frac{L_m}{k_m \hat \sigma_m^2}$.

\subsection{Overall algorithm and inference}

The \texttt{FAMA} procedure is summarized in Algorithm \ref{alg:fama}. This algorithm generates samples from all the model components, including intra- and inter-view covariances and residual variances. These samples are drawn independently, substantially improving the computational efficiency over MCMC. Similarly to MCMC, posterior summaries, including point and interval estimates, can be calculated for any functional of the model parameters. In addition, conditionally on these parameters, one can impute missing or held-out values of the data.

\begin{algorithm}
\caption{\texttt{FAMA} procedure to obtain $N_{MC}$ approximate posterior samples.}\label{alg:fama}\vspace{-2em}
\begin{tabbing}
    Input: The data matrices $ \{Y_{m}\}_{m=1}^M$, the number of Monte Carlo samples $N_{MC}$, and an upper\\ \qquad \enspace \hspace{0.3em} bound on the number of factors $\{k_{m, \max}\}_{m=1}^M, k_{0, \max}$. \\
   Step 1: For each $m=1, \dots, M$, estimate the number of latent factors for each view $k_m$ \\ \qquad \enspace \hspace{0.3em} via equation \eqref{eq:k_hat}.\\
   
    \text{Step 2:} Obtain the estimates for the latent factors $\hat  F_s$  using Algorithm \ref{alg:factor_estimation}. \\
  \text{Step 3:} For each $m=1, \dots, M$, compute the variance inflation term for $\tilde \Lambda_m$, as  \\ \qquad \enspace \hspace{0.3em} $\rho_m = \frac{1}{{p_m \choose 2}} \sum_{1 \leq j \leq j' \leq p_m} b_{mjj'}$ where the $b_{mjj'}$'s are defined in \eqref{eq:b_ms}.\\ 
  \text{Step 4:} For each $m=1, \dots, M$, estimate the hyperparameters $\tau_m$ as described in Section \ref{subsec:hyperparams}.\\
 \text{Step 5:} For each $m=1, \dots, M$, estimate the mean for $ \tilde \Lambda_m$, $\hat \Lambda_m$, as in \eqref{eq:mu_Lambda} and, for each  $j=1, \dots, p_m$  \\ \qquad \enspace \hspace{0.3em} in parallel, for $t = 1, \dots, N_{MC}$, sample independently $(  \tilde \lambda_{mj}^{(t)},  \tilde \sigma_{mj}^{2(t)})$ from \eqref{eq:lambda_sigma_cc}.  \\
 \text{Output:} $N_{MC}$ samples of the intra-view covariance low-rank components and residual variances \\ \qquad \enspace \hspace{0.3em}  $\{ \tilde \Lambda_m^{(1)} \tilde \Lambda_m^{(1)\top}, \ldots,  \tilde \Lambda_m^{(N_{MC})}\tilde \Lambda_m^{(N_{MC})\top}\}_{m=1}^M$,  $\{\{ \tilde \sigma_{mj}^{2 (1)}\}_{j=1}^{p_m}, \dots,\{ \tilde \sigma_{mj}^{2 (N_{MC})}\}_{j=1}^{p_m} \}_{m=1}^M$,  \\ \qquad \enspace \hspace{0.3em} and inter-view covariances $\{ \tilde \Lambda_m^{(1)} \tilde \Lambda_{m'}^{(1)\top}, \ldots,  \tilde \Lambda_m^{(N_{MC})}\tilde \Lambda_{m'}^{(N_{MC})\top}\}_{1 \leq m \neq m' \leq M}$.

\end{tabbing}
\end{algorithm}

\section{Theoretical support}\label{sec:theory}
\subsection{Preliminaries}\label{subsec:prelim}
This section provides theoretical support for our methodology. We provide favorable results in the double asymptotic regime, that is when $n$ and the $p_m$'s diverge. In this sense, our method enjoys a blessing of dimensionality, with its accuracy increasing as the dimensionalities (and the sample size) grow. 
Before stating our results, we enumerate some regularity conditions.

\begin{assumption}[Model]\label{assumption:model} The data are generated from model \eqref{eq:mofa}, and the true loading matrix and residual error variances for the $m$-th view are denoted by $\Lambda_{0m}$ and $\{\sigma_{0mj}^2\}_{j=1}^{p_m}$ respectively. We denote the $j$-th row of $\Lambda_{0m}$ by $\lambda_{0mj}$. We denote the true latent factors by $F_0$.
\end{assumption}

\begin{assumption}[Dimensions and sample size]\label{assumption:dimensions}
    We have $p_1, \dots, p_M \to \infty$ with $(\log p_m) / n =  o(1)$ for $m=1, \dots, M$.
\end{assumption}

\begin{assumption}[Idiosyncratic variances]\label{assumption:sigma}
The idiosyncratic error variances are such that $0< c_{\sigma 1} \leq \sigma_{0mj}^2 \leq c_{\sigma 2} < \infty$ for $j=1, \dots, p_m$ and $m=1, \dots, M$.   
\end{assumption}

\begin{assumption}[Loadings]\label{assumption:Lambda}
The true loading matrices are such that $s_l(\Lambda_{0m}) \asymp \sqrt{p_m}$ for $l=1, \dots, k_m$, $||\Lambda_{0m}||_\infty \leq  \infty$ and $\min_{j=1, \dots, p_m } ||\lambda_{0mj}||_2 \geq c_{\Lambda} >0$ for $m=1, \dots, M$.    
\end{assumption}

\begin{assumption}[Hyperparameters]\label{assumption:hyperparameters}
    The hyperparameters $\nu_0, \delta_0, k_0, k_1, \dots, k_M, \rho_1, \dots, \rho_M$, $ \tau_1, \dots, \tau_M$ are fixed constants.
\end{assumption}

\subsection{Accuracy of latent factor estimates}\label{subsec:accuracy_factors}
The first result quantifies the accuracy of the overall factor estimates.
\begin{theorem}[Recovery of the overall latent factors]\label{thm:recovery_factors_overall}
   Suppose Assumptions \ref{assumption:model}--\ref{assumption:hyperparameters} hold,  
   then, as $n, p_1, \dots, p_M, \to \infty$,
    with probability at least $1-o(1)$,
    \begin{equation*}
        \begin{aligned}
           \min_{R \in  \mathbb R^{k_0 \times k_0} : R^\top R = I_{k_0}} \frac{1}{\sqrt{n}}||\hat F R- F_0||& \lesssim \frac{1}{\sqrt{n}} + \frac{1}{p_{\min}}.
        \end{aligned}
    \end{equation*}
where $p_{\min} = \min_{m=1, \dots M}p_m$.
\end{theorem}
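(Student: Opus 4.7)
The plan is to chain three approximations: $P_m \approx P_{0m}$, $\tilde P \approx \tilde P_0$, and $\sqrt{n}\,U_0 \approx F_0$, where $P_{0m}$ is the oracle projector onto the column space of $F_0 A_m$ (equivalently, of $F_0 A_m \Lambda_{0m}^\top$, since $\Lambda_{0m}$ has rank $k_m$ by Assumption~\ref{assumption:Lambda}), $\tilde P_0 = \frac{1}{M}\sum_{m=1}^M P_{0m}$, and $U_0 \in \mathbb{R}^{n\times k_0}$ collects the top-$k_0$ eigenvectors of $\tilde P_0$. This reduces the theorem to a Davis--Kahan subspace comparison between $U$ and $U_0$ plus a Procrustes comparison between $\sqrt{n}\,U_0$ and $F_0$.

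The core spectral step is to show that $\tilde P_0$ has a constant eigengap at rank $k_0$. Gaussian concentration gives $\|F_0^\top F_0/n - I_{k_0}\| = O_p(1/\sqrt{n})$, from which $P_{0m} = \tfrac{1}{n} F_0 A_m A_m^\top F_0^\top + O_p(1/\sqrt{n})$ and hence
\begin{equation*}
\tilde P_0 \;=\; \tfrac{1}{n}\, F_0\, B\, F_0^\top + O_p(1/\sqrt{n}), \qquad B \;=\; \tfrac{1}{M}\sum_{m=1}^M A_m A_m^\top.
\end{equation*}
The matrix $B$ is diagonal with entries in $[1/M,1]$, since every latent factor is active in at least one view. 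The nonzero eigenvalues of the leading term coincide with those of $B^{1/2}(F_0^\top F_0/n) B^{1/2}$, which lie in $[1/M - O_p(1/\sqrt{n}),\, 1+O_p(1/\sqrt{n})]$; the remaining $n-k_0$ eigenvalues are zero. Hence the gap at position $k_0$ is bounded below by $1/(2M)$ for $n$ large, and the top-$k_0$ eigenspace of $\tilde P_0$ coincides with the column span of $F_0$.

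Combining Proposition~\ref{prop:reovery_P_m} (bounding each $\|P_m - P_{0m}\|$) with the triangle inequality gives $\|\tilde P - \tilde P_0\| \lesssim 1/\sqrt{n} + 1/p_{\min}$ with probability $1-o(1)$. Davis--Kahan's $\sin\Theta$ theorem then yields
\begin{equation*}
\min_{R^\top R = I_{k_0}}\|U R - U_0\| \;\lesssim\; \frac{\|\tilde P - \tilde P_0\|}{\mathrm{gap}} \;\lesssim\; \frac{1}{\sqrt{n}} + \frac{1}{p_{\min}}.
\end{equation*}
To translate this into the stated bound, write $F_0 = \sqrt{n}\, U_0 H$ with $H = U_0^\top F_0/\sqrt{n}$, so $H^\top H = F_0^\top F_0/n = I_{k_0} + O_p(1/\sqrt{n})$. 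The polar decomposition $H = R_H T_H^{1/2}$ then satisfies $\|H - R_H\| = \|T_H^{1/2}- I_{k_0}\|=O_p(1/\sqrt{n})$, and unitary invariance of the spectral norm on the right gives
\begin{equation*}
\tfrac{1}{\sqrt{n}}\min_{R}\|\hat F R - F_0\|
\;=\; \min_R \|U R - U_0 H\|
\;\leq\; \min_R \|U R - U_0 R_H\| + \|U_0(R_H - H)\|,
\end{equation*}
which is $\lesssim 1/\sqrt{n} + 1/p_{\min}$ by the previous display.

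The main obstacle is supplying the refined $1/p_{\min}$ rate inside Proposition~\ref{prop:reovery_P_m}. A naive Wedin bound on the leading left singular vectors of $Y_m = F_0 A_m \Lambda_{0m}^\top + E_m$ only delivers $\|P_m - P_{0m}\| \lesssim \|E_m\|/s_{k_m}(\text{signal}) \lesssim 1/\sqrt{n} + 1/\sqrt{p_m}$, since $\|E_m\|\asymp \sqrt{n}+\sqrt{p_m}$ while $s_{k_m}(F_0 A_m \Lambda_{0m}^\top)\asymp \sqrt{np_m}$. Obtaining $1/p_m$ in place of $1/\sqrt{p_m}$ requires a second-order perturbation step: after projection onto the signal subspace, the dominant noise contributions $E_m E_m^\top$ and $F_0 A_m \Lambda_{0m}^\top E_m^\top$ average to order $p_m^{-1}$, in the spirit of entrywise singular subspace expansions (leave-one-out or Neumann-series arguments for spiked models). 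The remainder of the proof, once the view-level projector bound is in hand, is the routine chaining described above.
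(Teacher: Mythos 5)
Your proposal is correct and follows essentially the same route as the paper's proof: bound $\|\tilde P - \frac{1}{M}\sum_m P_{0m}\|$ via Proposition~\ref{prop:reovery_P_m}, establish an order-$1/M$ eigengap at position $k_0$ for the noiseless averaged projector using the fact that $\sum_m A_m A_m^\top$ has diagonal entries at least one, apply Davis--Kahan to align $U$ with $U_0$, and convert to the Procrustes error on $\hat F$ using the concentration of $F_0^\top F_0/n$ around $I_{k_0}$ (your polar-decomposition step is equivalent to the paper's bound $\max_l |\sqrt{n}-d_{0l}| \lesssim \sqrt{k_0}$). The only substantive divergence is procedural: the paper routes the subspace perturbation through a result of Gratton et al.\ inside Proposition~\ref{prop:recovery_P_0} rather than a direct Davis--Kahan comparison, and it imports the $1/p_m$ rate for $\|P_m-P_{0m}\|$ from the cited FABLE proposition exactly as you anticipate, so the second-order perturbation analysis you flag as the main obstacle is indeed delegated rather than re-proved.
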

Theorem \ref{thm:recovery_factors_overall} states that the Procrustes Frobenius error for the matrix of all latent factors is small with high probability as $n$ and the $p_m$'s diverge. We consider the Procrustes error, minimizing over the set of orthogonal transformations, because the parameters in the factor models are only identifiable up to rotation without additional restrictions. This result justifies the use of our point estimate for the latent factors as a useful low-dimensional summary of the high-dimensional data.

\subsection{Asymptotics for covariances estimates}\label{subsec:accuracy_loadings}
Next, we show results on estimates of parameters in the model \eqref{eq:mofa}.  
\begin{theorem}\label{thm:posterior_contraction}
    If Assumptions \ref{assumption:model}--\ref{assumption:hyperparameters} hold, then, as $n, p_1, \dots, p_M \to \infty$, there exist finite constants $\{D_m\}_{m=1}^M$ such that
    \begin{equation}\label{eq:posterior_contraction_intra}
    \begin{aligned}
        E\left[\tilde \Pi\left\{\frac{||\tilde \Lambda_m \tilde \Lambda_m^\top - \Lambda_{0m}  \Lambda_{0m}^\top||}{||\Lambda_{0m}  \Lambda_{0m}^\top||} > D_m \left( \frac{1}{\sqrt{n}} + \frac{1}{\sqrt{p_m}} + \frac{1}{p_{\min}} \right) \right\} \right] &\to 0,\quad (m=1, \dots, M).
    \end{aligned}
    \end{equation}
Moreover, for all $m, m' \in \{1, \dots, M\}$, with $m \neq m'$, such that $||\Lambda_{0m} A_m^\top A_{m'} \Lambda_{0m'}^\top || \asymp \sqrt{p_m p_{m'}}$, there exist finite constants $\{D_{mm'}\}_{1 \leq m \neq m' \leq M}$, such that 
 \begin{equation}\label{eq:posterior_contraction_inter}
    \begin{aligned}
        E\left[\tilde \Pi\left\{\frac{||\hat \Lambda_m\hat \Lambda_{m'}^\top  - \Lambda_{0m} A_m^\top A_{m'} \Lambda_{0m'}^\top||}{||\Lambda_{0m} A_m^\top A_{m'} \Lambda_{0m'}^\top||} > D_{mm'} \left( \frac{1}{\sqrt{n}} + \frac{1}{\sqrt{p_m}} + \frac{1}{\sqrt{p_{m'}}} + \frac{1}{p_{\min}} \right) \right\} \right] &\to 0,
    \end{aligned}
    \end{equation}
In particular, for the low-rank component of the intra-view covariance, we have, 
 with probability at least $1-o(1)$, 
 \begin{equation*}
 \frac{||\hat \Lambda_m \hat \Lambda_m^\top  - \Lambda_{0m}  \Lambda_{0m}^\top||}{||\Lambda_{0m}  \Lambda_{0m}^\top||} \lesssim \frac{1}{\sqrt{n}} + \frac{1}{\sqrt{p_m}} + \frac{1}{p_{\min}},
\end{equation*}
and for the inter-view covariances (with $ ||\Lambda_{0m}  A_m^\top A_{m'}\Lambda_{0m'}^\top || \asymp \sqrt{p_m p_{m'}}$),  we have 
\begin{equation*}
 \frac{||\hat \Lambda_m\hat \Lambda_{m'}^\top  - \Lambda_{0m}  A_m^\top A_{m'} \Lambda_{0m'}^\top||}{||\Lambda_{0m}  A_m^\top A_{m'}\Lambda_{0m'}^\top||} \lesssim \frac{1}{\sqrt{n}} + \frac{1}{\sqrt{p_m}} + \frac{1}{\sqrt{p_{m'}}} + \frac{1}{p_{\min}}.
\end{equation*}
\end{theorem}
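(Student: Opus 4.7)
The strategy is to split the posterior error $\tilde \Lambda_m \tilde \Lambda_m^\top - \Lambda_{0m} \Lambda_{0m}^\top$ via the triangle inequality into a point-estimate bias and a posterior sampling fluctuation around $\hat \Lambda_m$, then handle each piece separately. Both rely on the algebraic identity $\Lambda_{0m} A_m^\top Q^\top Q A_m \Lambda_{0m}^\top = \Lambda_{0m} \Lambda_{0m}^\top$, which holds for any $k_0 \times k_0$ orthogonal $Q$ because $Q^\top Q = I_{k_0}$ and the Boolean selector $A_m$ has orthonormal columns ($A_m^\top A_m = I_{k_m}$). Taking $Q$ to be the optimal rotation from Theorem~\ref{thm:recovery_factors_overall}, this identity permits comparing $\hat \Lambda_m$ to the rotated truth $\Lambda_{0m} A_m^\top Q^\top$ rather than to $\Lambda_{0m}$ directly, after which the factorization $B_1 B_1^\top - B_2 B_2^\top = (B_1 - B_2) B_1^\top + B_2 (B_1 - B_2)^\top$ together with the submultiplicative bound $\|\cdot\| \leq (\|B_1\| + \|B_2\|) \|B_1 - B_2\|$ reduces everything to one-sided operator-norm bounds combined with $\|\hat \Lambda_m\|, \|\Lambda_{0m}\| \asymp \sqrt{p_m}$. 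The inter-view case uses the companion identity $\Lambda_{0m} A_m^\top Q^\top Q A_{m'} \Lambda_{0m'}^\top = \Lambda_{0m} A_m^\top A_{m'} \Lambda_{0m'}^\top$, and the extra $1/\sqrt{p_{m'}}$ in \eqref{eq:posterior_contraction_inter} emerges from normalizing by $\|\Lambda_{0m} A_m^\top A_{m'} \Lambda_{0m'}^\top\| \asymp \sqrt{p_m p_{m'}}$.

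\textbf{Key bounds.} I would substitute $Y_m = F_0 A_m \Lambda_{0m}^\top + E_m$ and $\hat F = F_0 Q^\top + \Xi$, with $\|\Xi\|$ controlled by Theorem~\ref{thm:recovery_factors_overall}, into the closed-form expression \eqref{eq:mu_Lambda}, and decompose $\hat \Lambda_m - \Lambda_{0m} A_m^\top Q^\top$ into four pieces: a ridge-shrinkage bias $(\tau_m^{-2}/(n+\tau_m^{-2})) \Lambda_{0m} A_m^\top Q^\top$ of order $\sqrt{p_m}/n$; a sample-covariance term involving $F_0^\top F_0/n - I_{k_0}$ of order $\sqrt{p_m/n}$ (via sub-exponential concentration of the $k_0 \times k_0$ Wishart); a factor-error term $\Lambda_{0m} A_m^\top F_0^\top \Xi / n$ of order $\sqrt{p_m}(1/\sqrt{n} + 1/p_{\min})$ (combining $\|F_0\| \lesssim \sqrt{n}$ with Theorem~\ref{thm:recovery_factors_overall}); and a noise term $E_m^\top \hat F / n$ controlled using $\|E_m^\top F_0\| \lesssim \sqrt{n p_m}$ and $\|E_m\| \lesssim \sqrt{n + p_m}$. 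These combine to give $\|\hat \Lambda_m - \Lambda_{0m} A_m^\top Q^\top\| \lesssim \sqrt{p_m}(1/\sqrt{n} + 1/\sqrt{p_m} + 1/p_{\min})$ with probability $1 - o(1)$, which after the factorization above and normalizing by $\|\Lambda_{0m}\Lambda_{0m}^\top\| \asymp p_m$ yields the stated rate for the point estimate. For the sampling step, conditionally on the data and $\{\tilde \sigma_{mj}^2\}$, the matrix $\tilde \Lambda_m - \hat \Lambda_m$ has independent Gaussian rows with covariance $\rho_m^2 \tilde \sigma_{mj}^2 (n+\tau_m^{-2})^{-1} I_{k_0}$; the inverse-gamma posteriors concentrate uniformly around $\delta_{mj}^2 \asymp \sigma_{0mj}^2$ (Assumption~\ref{assumption:sigma}) via a union bound accommodated by Assumption~\ref{assumption:dimensions}, and the Davidson--Szarek bound then gives $\|\tilde \Lambda_m - \hat \Lambda_m\| \lesssim \sqrt{p_m/n}$ with posterior probability $1-o(1)$. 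Combining the two bounds, taking expectation over the data, and invoking Markov's inequality delivers \eqref{eq:posterior_contraction_intra} and, by the analogous decomposition applied to both sides, \eqref{eq:posterior_contraction_inter}; the two high-probability statements on $\hat \Lambda_m \hat \Lambda_m^\top$ and $\hat \Lambda_m \hat \Lambda_{m'}^\top$ themselves drop out by skipping the sampling step.

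\textbf{Main obstacle.} The delicate term is the cross product $E_m^\top \Xi / n$ in the point-estimate decomposition: the factor-error matrix $\Xi$ depends on all view-specific noise matrices $E_1, \dots, E_M$ through the SVD-based construction of $\hat F$ in Algorithm~\ref{alg:factor_estimation}, so $E_m$ and $\Xi$ are not independent and finer chaining arguments that would require independence are unavailable. I would accept a mildly conservative bound by combining only the deterministic operator-norm control on $\|\Xi\|$ from Theorem~\ref{thm:recovery_factors_overall} with the worst-case Gaussian bound $\|E_m\| \lesssim \sqrt{n+p_m}$, which nevertheless matches the stated rate. A secondary subtlety is that the IG posterior concentration of the $p_m$ quantities $\tilde \sigma_{mj}^2$ around $\delta_{mj}^2$ must be uniform in $j$ for the Davidson--Szarek step to apply; this is handled by a sub-exponential union bound whose cost is affordable under Assumption~\ref{assumption:dimensions}.
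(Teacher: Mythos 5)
Your proposal is correct and reaches the stated rates, but the way you establish accuracy of the point estimate differs from the paper's route in a way worth noting. The paper never introduces a rotation at this stage: it writes $\hat \Lambda_m \hat \Lambda_{m'}^\top = g_{mm'} Y_m^\top U U^\top Y_{m'}$, splits off the projector perturbation $Y_m^\top (U U^\top - U_0 U_0^\top) Y_{m'}$ using Proposition \ref{prop:recovery_P_0} (the bound $\|UU^\top - U_0U_0^\top\| \lesssim 1/n + 1/p_{\min}$), and then expands $Y_m^\top U_0 U_0^\top Y_{m'}$ term by term with $Y_m = F_0 A_m \Lambda_{0m}^\top + E_m$; the rotation ambiguity disappears automatically because $UU^\top$ is rotation-invariant. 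You instead align $\hat F$ with $F_0$ via the Procrustes rotation $Q$ from Theorem \ref{thm:recovery_factors_overall}, prove a loading-level bound $\|\hat\Lambda_m - \Lambda_{0m}A_m^\top Q^\top\| \lesssim \sqrt{p_m}(1/\sqrt{n}+1/\sqrt{p_m}+1/p_{\min})$, and transfer it to the quadratic via the identity $A_m^\top Q^\top Q A_{m'} = A_m^\top A_{m'}$ and the factorization $B_1B_1^\top - B_2B_2^\top = (B_1-B_2)B_1^\top + B_2(B_1-B_2)^\top$. Since Theorem \ref{thm:recovery_factors_overall} is itself derived from Proposition \ref{prop:recovery_P_0} by Davis--Kahan, the two routes consume essentially the same concentration inputs ($F_0^\top F_0 \approx nI_{k_0}$, $\|E_m\| \lesssim \sqrt{n}+\sqrt{p_m}$, the projector/factor recovery bound); your version buys an explicit operator-norm error bound on the loadings themselves as a by-product, at the cost of carrying the rotation through the argument, while the paper's version is slightly leaner because it works only with rotation-invariant quantities. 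Your handling of the $E_m^\top \Xi/n$ cross term by the crude product of operator norms is exactly what is needed (the paper does the analogous crude bound on $g_{mm'}\|Y_m\|\|Y_{m'}\|\|UU^\top - U_0U_0^\top\|$), and your posterior-fluctuation step --- Gaussian rows with variance $\rho_m^2\tilde\sigma_{mj}^2/(n+\tau_m^{-2})$, uniform concentration of the $\tilde\sigma_{mj}^2$, and a $\sqrt{p_m/n}$ spectral bound on $\tilde\Lambda_m - \hat\Lambda_m$ --- coincides with the paper's (which invokes Lemma \ref{lemma:convergence_tilde_sigma} and Lemma \ref{lemma:Lambda_m_hat}).
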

Theorem \ref{thm:posterior_contraction} bounds the relative error in the operator norm with high probability. We rescale the error by dividing by the norm of the true parameters, so the results are comparable as dimensionality grows. Equations \eqref{eq:posterior_contraction_intra} and 
\eqref{eq:posterior_contraction_inter} characterize the behavior of the posterior distribution on the low-rank components of the covariance, providing the contraction rate around the true values. The second part of the theorem supports the use of $\hat \Lambda_m \hat \Lambda_m^\top$ and $\hat \Lambda_m \hat \Lambda_{m'}^\top$ as point estimates for the low-rank components of the intra- and inter-omics covariance. The next result characterizes the asymptotic point-wise distribution for the entries of $\hat \Lambda_m \hat \Lambda_m^\top$ and $\hat \Lambda_m \hat \Lambda_{m'}^\top$.
\begin{theorem}[Central limit theorem]\label{thm:clt}
      Suppose Assumptions \ref{assumption:model}--\ref{assumption:hyperparameters} hold and $n/\sqrt{p_{\min}} = o(1)$. For $m=1, \dots, M$ and $j,j' = 1, \dots, p_m$, let $S_{0mjj'}$ be
       \begin{equation}\label{eq:S_0_sq}
              S_{0mm'jj'}^2 =    \begin{cases}
                 4\sigma_{0mj'}^2||\lambda_{0mj}||^2 + 2||\lambda_{0mj'}||^4 , \quad & \text{if } j=j' \text{ and } m=m',\\
                 \begin{aligned}
                     &\sigma_{0mj'}^2||\lambda_{0mj}||^2   + \sigma_{0mj}^2||\lambda_{0mj'}||^2 + (\lambda_{0mj}^\top \lambda_{0mj'})^2 \\
                     &+ ||\lambda_{0mj}||^2 ||\lambda_{0mj'}||^2, \\
                 \end{aligned}
              \quad & \text{if } j \neq j'  \text{ and } m=m',\\
              \begin{aligned}
             & \sigma_{0m'j'}^2||\lambda_{0mj}||^2   + \sigma_{0mj}^2||\lambda_{0m'j'}||^2 + (\lambda_{0mj}^\top A_m^\top A_{m'} \lambda_{0m'j'})^2 \\  
             &+  || \lambda_{0mj}||^2 || \lambda_{0m'j'}||^2,     
             \end{aligned}
 \quad & \text{otherwise}.
             \end{cases}
       \end{equation} 
       Then, as $n, p_1, \dots, p_M \to \infty$, for all $m, m'=1, \dots, M$,
       \begin{equation*}
           \begin{aligned}
             & \frac{\sqrt{n}}{S_{0mm'jj'}} \big(\hat \lambda_{mj}^\top \hat \lambda_{m'j'} - \lambda_{0mj}^\top A_m^\top A_{m'}\lambda_{0m'j'} \big) \Rightarrow N(0, 1),(j=1, \dots, p_m; j'=1, \dots, p_{m'}).
           \end{aligned} 
       \end{equation*}     
\end{theorem}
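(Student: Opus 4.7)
The plan is to expand $\hat\lambda_{mj}^\top \hat\lambda_{m'j'}$ as a sum of asymptotically independent pieces and apply a classical CLT to each. Writing $\hat\lambda_{mj} = (n+\tau_m^{-2})^{-1}\hat F^\top y_m^{(j)}$, the ridge factor $(1+\tau_m^{-2}/n)^{-1}$ contributes only $O(1/n)$, so up to an error of order $o_p(n^{-1/2})$,
\begin{equation*}
\hat\lambda_{mj}^\top \hat\lambda_{m'j'} = \frac{1}{n}\,y_m^{(j)\top} UU^\top y_{m'}^{(j')},\qquad U=\hat F/\sqrt{n}.
\end{equation*}
The first substantive step is to replace the empirical projection $UU^\top$ by the oracle projection $P_0 = F_0(F_0^\top F_0)^{-1}F_0^\top$. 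Combining Theorem \ref{thm:recovery_factors_overall} with a singular-subspace perturbation argument applied to the top-$k_0$ subspace of $\tilde P$ yields $\|UU^\top - P_0\|=O_p(1/\sqrt n + 1/p_{\min})$; a second-order expansion that isolates the bias coming from the dependence of $U$ on the noise in $y_m^{(j)}$, combined with the assumption $n/\sqrt{p_{\min}}=o(1)$, then renders the full residual $n^{-1}y_m^{(j)\top}(UU^\top-P_0)y_{m'}^{(j')}$ of order $o_p(n^{-1/2})$. I expect this sharp perturbation control to be the main technical obstacle.

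Once the oracle reduction is in hand, $P_0 F_0 = F_0$ gives
\begin{equation*}
\frac{1}{n}y_m^{(j)\top}P_0 y_{m'}^{(j')} = \underbrace{\frac{\lambda_{0mj}^\top A_m^\top F_0^\top F_0 A_{m'}\lambda_{0m'j'}}{n}}_{T_1} + \underbrace{\frac{\lambda_{0mj}^\top A_m^\top F_0^\top e_{m'}^{(j')}}{n}}_{T_2} + \underbrace{\frac{\lambda_{0m'j'}^\top A_{m'}^\top F_0^\top e_m^{(j)}}{n}}_{T_3} + \underbrace{\frac{e_m^{(j)\top} P_0 e_{m'}^{(j')}}{n}}_{T_4}.
\end{equation*}
Centering $T_1$ at $\lambda_{0mj}^\top A_m^\top A_{m'}\lambda_{0m'j'}$ and scaling by $\sqrt n$ gives the i.i.d.\ sum $n^{-1/2}\sum_i\{(\eta_i^\top A_m\lambda_{0mj})(\eta_i^\top A_{m'}\lambda_{0m'j'}) - \lambda_{0mj}^\top A_m^\top A_{m'}\lambda_{0m'j'}\}$, whose Gaussian limit by the Lindeberg CLT has variance $\|\lambda_{0mj}\|^2\|\lambda_{0m'j'}\|^2 + (\lambda_{0mj}^\top A_m^\top A_{m'}\lambda_{0m'j'})^2$ by Isserlis' formula for the Gaussian fourth moment. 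The scaled cross terms $\sqrt n\,T_2$ and $\sqrt n\,T_3$ are i.i.d.\ sums of products $(\eta_i^\top a)\epsilon_i$, whose Gaussian limits have variances $\|\lambda_{0mj}\|^2\sigma_{0m'j'}^2$ and $\|\lambda_{0m'j'}\|^2\sigma_{0mj}^2$. Finally, $T_4$ has mean $O(1/n)$ and standard deviation $O(1/n)$, so $\sqrt n\,T_4 = o_p(1)$.

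Joint asymptotic normality follows by the Cram\'er--Wold device: conditionally on $F_0$, both $T_2$ and $T_3$ are centered Gaussians linear in the independent noise $(e_m^{(j)}, e_{m'}^{(j')})$ and therefore independent of the $F_0$-measurable leading fluctuation of $T_1$; in the off-diagonal case $(m,j)\neq(m',j')$ they are also mutually independent since the idiosyncratic errors are. Summing the three variances reproduces $S_{0mm'jj'}^2$ in each of the cases of \eqref{eq:S_0_sq}: in the collapsed case $(m,j)=(m',j')$, $T_2$ and $T_3$ coincide, so their coefficient doubles to give $4\|\lambda_{0mj}\|^2\sigma_{0mj}^2$, while the Isserlis expansion of $T_1$'s variance yields $2\|\lambda_{0mj}\|^4$. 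Slutsky's lemma then delivers the stated $N(0,1)$ limit after normalization by $S_{0mm'jj'}$.
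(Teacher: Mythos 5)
Your overall architecture coincides with the paper's: the reduction of the ridge factor to an $O(1/n)$ correction, the replacement of $UU^\top$ by the oracle projection, the four-term expansion of $n^{-1}y_m^{(j)\top}U_0U_0^\top y_{m'}^{(j')}$, the Lindeberg/Isserlis treatment of the quadratic signal term, the conditional-on-$F_0$ Gaussianity and asymptotic independence of the two cross terms, the negligibility of the noise--noise term, and the case bookkeeping that doubles the cross-term variance on the diagonal are all exactly the paper's steps, and the variances you assemble match \eqref{eq:S_0_sq}.

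The one genuine gap is precisely the step you flag as the main obstacle, and it is left unresolved. You assert $\|UU^\top - P_0\| = O_p(1/\sqrt n + 1/p_{\min})$, which is the rate one inherits by differencing the factor estimates of Theorem \ref{thm:recovery_factors_overall}; with that rate the first-order bound on the residual is $\sqrt n \cdot n^{-1}\|y_m^{(j)}\|\,\|y_{m'}^{(j')}\|\,\|UU^\top-P_0\| \asymp \sqrt n\,(1/\sqrt n + 1/p_{\min}) = O_p(1)$, which does not vanish, and you then defer to an unspecified ``second-order expansion that isolates the bias.'' That expansion is never supplied, and it is also unnecessary: the paper's Proposition \ref{prop:recovery_P_0} (built on Proposition \ref{prop:reovery_P_m}, adapted from FABLE) shows the projection concentrates at the faster rate $\|UU^\top - U_0U_0^\top\| \lesssim 1/n + 1/p_{\min}$ --- the top singular subspace of a low-rank-plus-noise matrix converges an order of magnitude faster than the factor estimates themselves because the rotation ambiguity and the $O(\sqrt{k_0})$ fluctuation of the singular values of $F_0$ are quotiented out of the projector. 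With that sharper rate the crude Cauchy--Schwarz bound already yields $\sqrt n\, g_{mm'}\,|y_m^{(j)\top}(UU^\top - U_0U_0^\top)y_{m'}^{(j')}| \lesssim 1/\sqrt n + \sqrt n/p_{\min} = o_p(1)$ under the stated hypothesis, and no bias-isolating argument is needed. As written, the dominant residual term in your proof is uncontrolled; to close it you must either establish the $1/n + 1/p_{\min}$ projection rate or actually carry out the second-order expansion you invoke.
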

Theorem \ref{thm:clt} shows that rescaled and centered entries of
$\hat \Lambda_m \hat \Lambda_m^\top$ are normally distributed asymptotically, that is, for large $n$ and $p_m$'s, $\hat \lambda_{mj}^\top \hat \lambda_{m'j'} \overset{\cdot}{\sim} N( \lambda_{0mj}^\top \lambda_{0m'j'}, \frac{S_{0mm'jj}^2}{n})$ where 
$S_{0mm'jj'}$ depend on unknown parameters that can be consistently estimated. In practice, we replace $S_{0mm'jj'}$ with $\hat S_{mm'jj'}$, which is obtained by using the $\hat \lambda_{mj}$'s and $\delta_{mj}$'s, which are defined in \eqref{eq:posterior_params}, instead of the true values of the parameters. In the supplemental, we show that $\hat S_{mm'jj'}$ is a consistent estimate for $S_{0mm'jj'}$. Hence, an asymptotically valid $(1-\alpha) \%$ confidence interval for $ \lambda_{0mj}^\top \lambda_{0m'j'}$ is 
\begin{equation}\label{eq:conf_int}
    \begin{aligned}
        \hat C_{mm'jj'} = \bigg[\hat \lambda_{mj}^\top \hat \lambda_{m'j'} \pm z_{1-\alpha/2}\frac{\hat S_{mjj'}}{\sqrt{n}}\bigg],
    \end{aligned}
\end{equation}
where $z_{q} = \Phi^{-1}(q)$ and $\Phi(\cdot)$ denotes the cumulative distribution function of a standard Gaussian random variable. 

\begin{theorem}[Bernstein von Mises theorem]\label{thm:bvm}
    Suppose Assumptions \ref{assumption:model}--\ref{assumption:hyperparameters}  hold and $n/\sqrt{p_{\min}} = o(1)$. For $m, m' = 1, \dots, M$, $j=1, \dots, p_m$ and $j' = 1, \dots, p_{m'}$, let
    \begin{equation}\label{eq:T_0_sq}
        T^2_{0mm'jj'}(\rho_m, \rho_{m'})  = \begin{cases}
4\rho_m^2 \sigma_{0mj}^2 || \lambda_{0mj}||^2, \quad &\text{if } j=j' \text{ and } m=m',\\
  \rho_{m'}^2 \sigma_{0m'j'}^2 || \lambda_{0mj}||^2 +  \rho_m^2 \sigma_{0mj}^2 || \lambda_{0m'j'}||^2 , \quad &  \text{otherwise}.
\end{cases}
    \end{equation}
    Then, as $n, p_1, \dots, p_M \to \infty$, we have 
    \begin{equation}
        \sup_{x \in \mathbb R} \Bigg| \tilde \Pi \bigg\{\frac{\sqrt{n}}{T_{0mm'jj'}(\rho_m, \rho_{m'})}\big(\tilde \lambda_{mj}^\top \tilde \lambda_{m'j'} - \hat \lambda_{mj}^\top \hat \lambda_{m'j'}\big) \leq x \bigg\} - \Phi(x)\Bigg| \overset{pr}{\to} 0,   \end{equation}
        for all $ j = 1, \dots, p_m, j' = 1, \dots, p_{m'}$ and $m,m' =1, \dots, M$.
\end{theorem}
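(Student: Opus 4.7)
The plan is to leverage the explicit conjugate Normal--Inverse-Gamma structure of the posterior in \eqref{eq:lambda_sigma_cc}. Under that posterior, for each pair $(m,j)$ independently I use the hierarchical representation $\tilde \lambda_{mj} = \hat \lambda_{mj} + \rho_m (n + \tau_m^{-2})^{-1/2} \tilde \sigma_{mj} z_{mj}$ with $z_{mj} \sim N(0, I_{k_0})$ conditional on $\tilde \sigma_{mj}^2 \sim IG(\nu_n/2, \nu_n \delta_{mj}^2/2)$. Setting $\Delta_{mj} = \tilde \lambda_{mj} - \hat \lambda_{mj}$, I expand
$$\tilde \lambda_{mj}^\top \tilde \lambda_{m'j'} - \hat \lambda_{mj}^\top \hat \lambda_{m'j'} = \hat \lambda_{m'j'}^\top \Delta_{mj} + \hat \lambda_{mj}^\top \Delta_{m'j'} + \Delta_{mj}^\top \Delta_{m'j'},$$
which collapses to $2\hat \lambda_{mj}^\top \Delta_{mj} + ||\Delta_{mj}||^2$ when $(m,j)=(m',j')$. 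The first two summands, linear in independent Gaussians, provide the leading contribution; the last is a quadratic remainder to be shown negligible.

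The first step is to control the remainder. Since under the posterior $\tilde \sigma_{mj}^2$ has mean of order $\delta_{mj}^2 = O_p(1)$ and variance $O_p(n^{-1})$, and $||z_{mj}||^2$ is a chi-square with fixed $k_0$ degrees of freedom, the remainder satisfies $\sqrt{n}\, \Delta_{mj}^\top \Delta_{m'j'} = O_p(n^{-1/2})$, which vanishes after rescaling by $T_{0mm'jj'}$ (bounded away from zero in probability under Assumptions \ref{assumption:sigma}--\ref{assumption:Lambda}). The second step is to analyze the linear part conditional on $(\tilde \sigma_{mj}^2, \tilde \sigma_{m'j'}^2)$: by posterior independence across $(m,j)$ pairs, it is exactly centered Gaussian with variance
$$V_n = \frac{n \rho_m^2 \tilde \sigma_{mj}^2}{n + \tau_m^{-2}}||\hat \lambda_{m'j'}||^2 + \frac{n \rho_{m'}^2 \tilde \sigma_{m'j'}^2}{n + \tau_{m'}^{-2}}||\hat \lambda_{mj}||^2$$
(the $(m,j)=(m',j')$ case contributes only a single term with coefficient $4$, matching $T_{0mmjj}^2$ in \eqref{eq:T_0_sq}). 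To show $V_n \overset{pr}{\to} T_{0mm'jj'}^2$ I combine three pieces of consistency: (a) $||\hat \lambda_{mj}||^2 \to ||\lambda_{0mj}||^2$ in true probability, which follows from Theorem \ref{thm:clt} applied to $\hat \lambda_{mj}^\top \hat \lambda_{mj}$ together with Assumption \ref{assumption:Lambda}; (b) concentration of the IG posterior of $\tilde \sigma_{mj}^2$ around $\delta_{mj}^2$ at rate $n^{-1/2}$, a direct consequence of $\nu_n = \nu_0 + n \to \infty$; and (c) consistency of the residual variance estimate $\delta_{mj}^2 \overset{pr}{\to} \sigma_{0mj}^2$, which follows from Theorem \ref{thm:recovery_factors_overall} and the projection identity $\nu_n \delta_{mj}^2 = \nu_0 \sigma_0^2 + ||y_m^{(j)}||^2 - (n+\tau_m^{-2})||\hat \lambda_{mj}||^2$ reducing to the surrogate-regression residual sum of squares.

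To assemble the Bernstein--von Mises conclusion, I show that the posterior characteristic function of the rescaled target converges to $e^{-t^2/2}$ in true probability for every $t \in \mathbb{R}$: condition on $(\tilde \sigma_{mj}^2, \tilde \sigma_{m'j'}^2)$ to exploit the exact Gaussian law of the linear part, absorb the remainder via a Slutsky-type bound, then integrate out the $\tilde \sigma^2$'s using the concentration from (b)--(c) and dominated convergence. Pointwise convergence of characteristic functions yields weak convergence of the posterior to $N(0,1)$ in probability, and since $\Phi$ is continuous, Polya's theorem upgrades pointwise CDF convergence to the uniform form of the claim. The main technical obstacle is step two: $V_n$ depends simultaneously on the posterior draw $\tilde \sigma^2$ and on the data-dependent constants $\hat \lambda$ and $\delta^2$, so the limiting argument must carefully disentangle posterior randomness from sampling randomness. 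I would handle this via an iterated-expectation argument that first conditions on the data (freezing $\hat \lambda_{mj}$, $\delta_{mj}^2$, and $\nu_n$) to obtain the conditional Gaussian limit, and then invokes the true-probability convergence of those data-dependent constants to reach the unconditional statement.
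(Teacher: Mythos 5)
Your proposal is correct and follows essentially the same route as the paper's proof: the same Normal--Inverse-Gamma representation $\tilde\lambda_{mj}=\hat\lambda_{mj}+\rho_m(n+\tau_m^{-2})^{-1/2}\tilde\sigma_{mj}z_{mj}$, the same linear-plus-quadratic decomposition with the quadratic remainder shown to be $O_p(n^{-1/2})$ after rescaling, the same conditional-Gaussian analysis of the linear term, and the same consistency inputs ($\|\hat\lambda_{mj}\|\to\|\lambda_{0mj}\|$ via the CLT and posterior contraction of $\tilde\sigma_{mj}^2$ to $\sigma_{0mj}^2$) to replace the data-dependent variance by $T^2_{0mm'jj'}$. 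The only cosmetic difference is that you close the argument with characteristic functions and Polya's theorem where the paper invokes a ready-made lemma (Lemma F.2 of the \texttt{FABLE} paper) for the uniform CDF convergence.
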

Theorem \ref{thm:bvm} shows that
the posterior distribution of centered and rescaled entries of the low-rank components converges to a zero-mean Gaussian distribution with variance $T_{0mm'jj'}(\rho_m, \rho_{m'})$. 
This result leads to approximate $(1-\alpha)\%$ credible intervals for large values of $n$ and $p_m$'s as  
\begin{equation}\label{eq:cred_int_approx}
    \begin{aligned}
        \tilde C_{mm'jj'}(\rho_m, \rho_{m'}) = \bigg[\hat \lambda_{mj}^\top \hat \lambda_{m'j'} \pm z_{1-\alpha/2}\frac{ \hat T_{mm'jj'}(\rho_m, \rho_{m'})}{\sqrt{n}}\bigg],
    \end{aligned}
\end{equation}
where we use a consistent estimator $\hat T_{mm'jj'}(\rho_m, \rho_{m'})$ of $T_{0mm'jj'}(\rho_m, \rho_{m'})$, obtained by replacing $\lambda_{0mj}$'s and $\sigma_{0mj}^2$'s with estimates $\hat \lambda_{mj}$'s and $\delta_{mj}^2$'s. The term $T_{0mm'jj'}(\rho_m, \rho_{m'})$ is needed to calibrate the credible intervals to the target frequentist level.  
Combining Theorems \ref{thm:clt} and \ref{thm:bvm}, and again plugging in consistent point estimates for unknown parameters, justifies setting $\rho_m =  b_{0mjj'}$, with $b_{0mjj'}$ defined in (\ref{eq:b_ms}), to achieve asymptotically valid coverage for the $j,j'$-th entry of the intra-view low-rank components. Setting $\rho_m $ to the empirical averages of the $b_{0mjj'}$'s allows approximately valid coverage across entries of the intra-view low-rank components. In Section
\ref{sec:numerical_experiments},
we provide numerical evidence that this approach also provides calibrated coverage on average for entries of intra-view covariances.

\section{Numerical experiments}\label{sec:numerical_experiments}

We compare \texttt{FAMA} with Multi-Omics Factor Analysis (\citet{mofa, mofa+}, \texttt{MOFA}), the angle-based joint and individual variance explained method (\texttt{AJIVE}, \citet{ajive}), a spectral estimator based on the singular value decompositions of each data view (\texttt{SVD}), which we describe in the supplemental, two factor analysis methods applied to the concatenated dataset, namely \citet{fable}'s \texttt{FABLE}
and \citet{rotate}'s \texttt{ROTATE}, and the empirical covariance matrix (\texttt{EMPIRICAL}). Additional details are provided in the supplemental. We take $M=4$ views, set the total number of factors $k_0 = 30$, and let $k_m = 20$ for all $m=1, \dots, 4$. The loadings and residual error variances are generated as
$\lambda_{0mj} \sim N_{k_{m}}(0, \psi_{m}^2)$ and $\sigma_{0mj}^2 \sim U(5, 10)$, $j=1, \dots, p_m; m=1, \dots 4.$

We consider an unbalanced scenario with $(p_1, \dots, p_4) = (5000, 1000, 1000, 1000)$ and $(\psi_1, \dots \psi_4) = (1, 0.4, 0.4, 0.4)$ and a balanced scenario with $p_m = 2000$ and $\psi_m = 0.5$ for all $m=1, \dots, 4$. For each scenario, we let $n \in \{250, 500, 1000\}$, and for any configuration, we replicate the experiments $50$ times, each time generating the data from the model \eqref{eq:mofa}.

We evaluated estimation accuracy via the Frobenius error rescaled by the Frobenius norm. We focus on the total, intra-view, and inter-view covariances. We evaluated the accuracy of uncertainty quantification via the frequentist average coverage of entrywise 95\% credible or confidence intervals intervals. For \texttt{FAMA}, we report the results using intervals
derived from the central limit theorem (Theorem \ref{thm:clt}) defined in \eqref{eq:conf_int} and
the Bernstein von Mises theorem (Theorem \ref{thm:bvm}) defined in \eqref{eq:cred_int_approx}. \texttt{AJIVE}, \texttt{MOFA}, and \texttt{ROTATE} only provide point estimates.

\begin{figure}
    \centering
    \includegraphics[width=\linewidth]{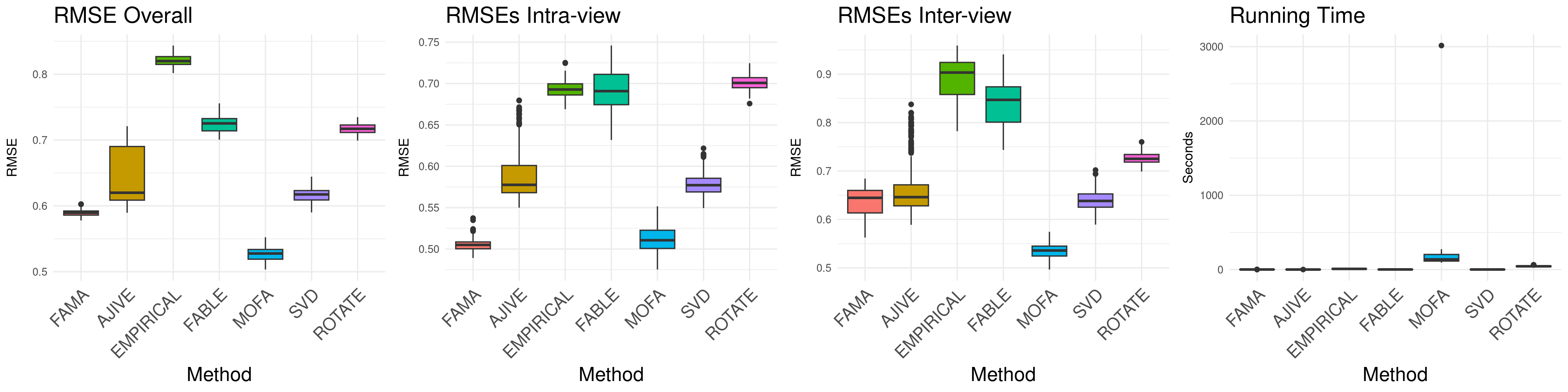}
    \vfill
    \includegraphics[width=\linewidth]{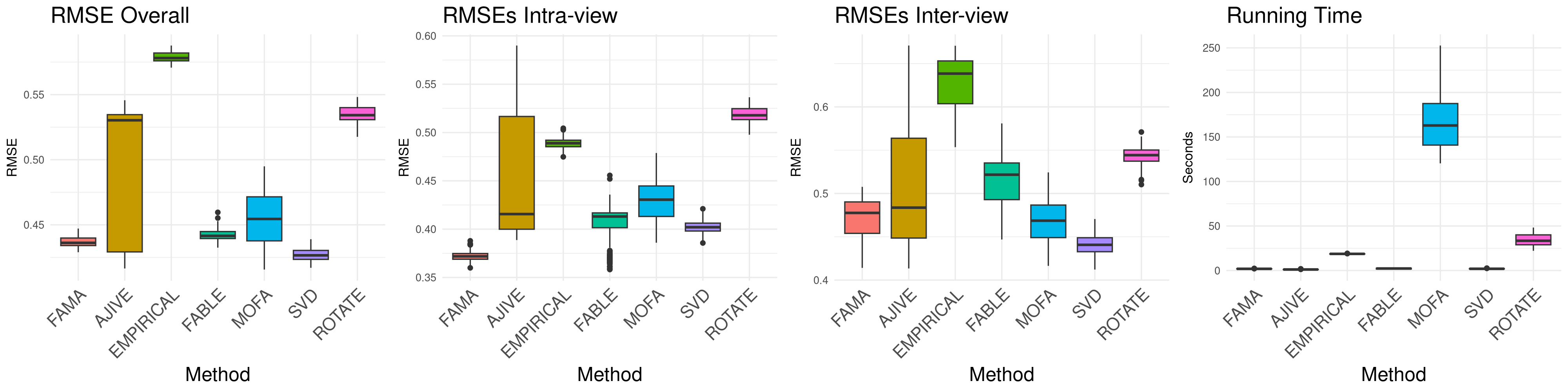}
    \vfill
     \includegraphics[width=\linewidth]{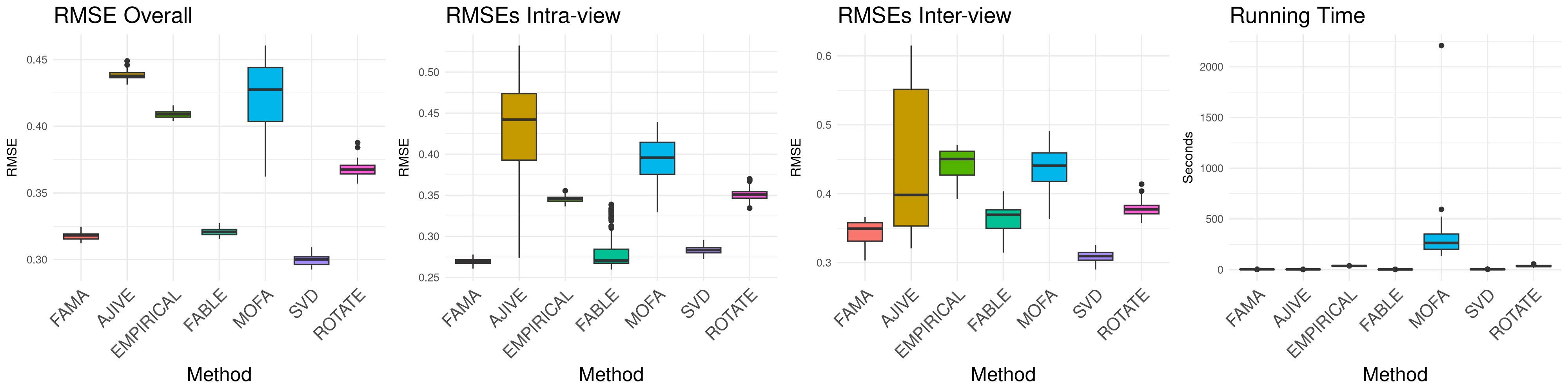}
    \caption{Comparison of different methods in terms of normalized root mean squared error of the overall covariance (left most panels), of the intra-view and inter-view covariances (middle panels) and running time (right most panel) in the balanced scenario with $n=250$ (top panels), $n=500$ (middle panels) and  $n=1000$ (bottom panels).}
    \label{fig:1}
\end{figure}

\begin{figure}
    \centering
    \includegraphics[width=\linewidth]{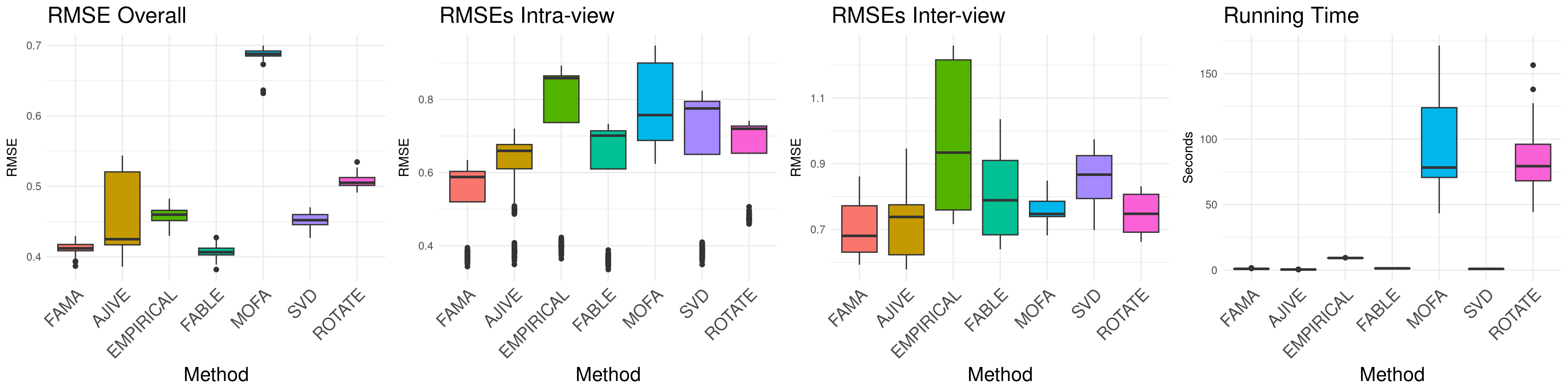}
    \vfill
    \includegraphics[width=\linewidth]{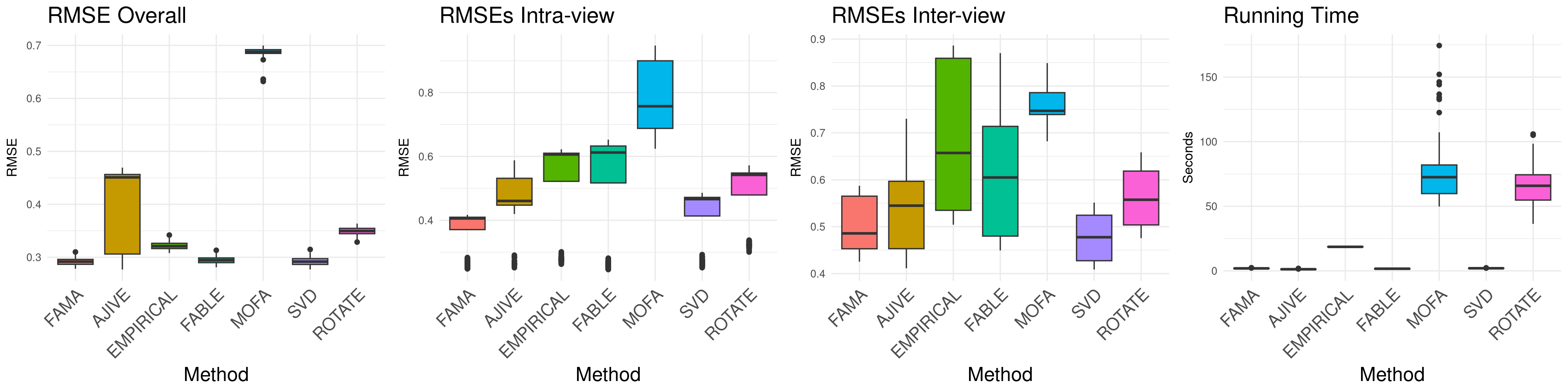}
    \vfill
    \includegraphics[width=\linewidth]{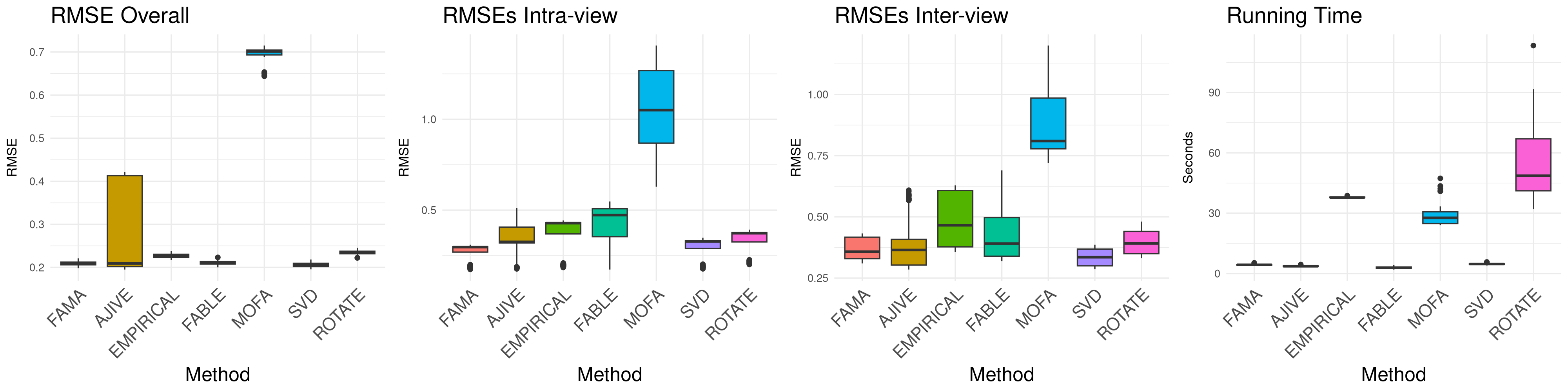}
    \caption{Comparison of different methods in terms of normalized root mean squared error of the overall covariance (left most panels), of the intra-view and inter-view covariances (middle panels) and running time (right most panel) in the unbalanced scenario with $n=250$ (top panels), $n=500$ (middle panels) and  $n=1000$ (bottom panels).}
    \label{fig:2}
\end{figure}

Figures \ref{fig:1} and \ref{fig:2} show the relative Frobenius error of competing methodologies in estimating the covariance (three left most panels) and the running time (right most panel) in the balanced and unbalanced cases, respectively. The left most panel displays the error in estimation of the overall covariance, while the two center panels report the error in estimating the intra- and inter-view covariances, respectively. 

In the balanced case (Figure \ref{fig:1}), when the sample size is small ($n=250$), \texttt{MOFA} is the best performing method, followed by \texttt{FAMA}. With higher sample size, \texttt{FAMA} tends to outperform competitors in recovering the intra-view covariances and \texttt{SVD} has slightly better accuracy for the inter-view covariances. 
In the unbalanced scenario (Figure \ref{fig:2}),  \texttt{FAMA} and \texttt{FABLE} obtain the best performance in recovering the overall covariance with \texttt{FAMA} doing slightly better. However, \texttt{FAMA} remarkably outperforms all other methods in estimating the inter-view covariances. 
As for the intra-view covariances, \texttt{FAMA} and \texttt{AJIVE} have the best performance with small sample sizes, while \texttt{SVD} is the best performing method and is closely followed by \texttt{FAMA}, \texttt{AJIVE}, and \texttt{ROTATE} in the other configurations. In all experiments, \texttt{FAMA}, \texttt{AJIVE}, \texttt{FABLE}, and \texttt{SVD} are the fastest methods with advantages of several orders of magnitude over other methodologies.

Table \ref{tab:uq} reports the average frequentist coverage across entries of the low-rank components of the covariance. Both
types of \texttt{FAMA} intervals yield precise coverage on average across entries of low-rank components of the intra-view and inter-view covariances. \texttt{FABLE}'s interval estimates have comparable length on average, but suffer from a undercoverage in many scenarios.

In summary, \texttt{FAMA} provides accurate estimates with valid uncertainty quantification in all settings considered, while having fast running times.

\begin{table}
	\caption{Average frequentist coverage for entries of random $100\times 100$ submatrices of each $\Lambda_m \Lambda_m^\top$ for all $m$ and $\Lambda_m \Lambda_{m'}^\top$ for all pairs $m \neq m'$. For \texttt{FAMA}, we report both the coverage of confidence intervals obtained from the central limit theorem (Theorem \ref{thm:clt}) defined in \eqref{eq:conf_int}, and of the analytic approximation of the credible intervals (Theorem \ref{thm:bvm}), defined in \eqref{eq:cred_int_approx}, which we refer to as  \texttt{FAMA}-CLT and \texttt{FAMA}-$\tilde \Pi$ respectively. All values have been multiplied by $10^2$. 	\label{tab:uq}
}
    \centering
{\begin{tabular}{c cc c cc c cc}
& \multicolumn{8}{c}{Balanced scenario} \\
& \multicolumn{8}{c}{$n=250$} \\
& \multicolumn{2}{c}{\texttt{FAMA}-CLT} 
& 
& \multicolumn{2}{c}{\texttt{FAMA}-$\tilde \Pi$} 
& 
& \multicolumn{2}{c}{\texttt{FABLE}} \\
& coverage ($\%$) & length & & coverage ($\%$) & length & & coverage ($\%$) & length \\
$\Lambda_m \Lambda_m^\top$    & 95.06 & 2.30 & & 95.06 & 2.30  & & 76.48 & 1.93 \\
$\Lambda_m \Lambda_{m'}^\top$ & 96.56  & 2.39 & & 96.56  & 2.39  & &  84.05 & 1.94 \\
& \multicolumn{8}{c}{$n=500$} \\
& \multicolumn{2}{c}{\texttt{FAMA}-CLT} 
& 
& \multicolumn{2}{c}{\texttt{FAMA}-$\tilde \Pi$} 
& 
& \multicolumn{2}{c}{\texttt{FABLE}} \\
& coverage ($\%$) & length & & coverage ($\%$) & length & & coverage ($\%$) & length \\
$\Lambda_m \Lambda_m^\top$    & 94.51 & 1.64 & & 94.51 & 1.64  & & 90.04  & 1.65 \\
$\Lambda_m \Lambda_{m'}^\top$ & 95.83 & 1.71 & & 95.83 & 1.71 & & 93.81 & 1.65\\
& \multicolumn{8}{c}{$n=1000$} \\
& \multicolumn{2}{c}{\texttt{FAMA}-CLT} 
& 
& \multicolumn{2}{c}{\texttt{FAMA}-$\tilde \Pi$} 
& 
& \multicolumn{2}{c}{\texttt{FABLE}} \\
& coverage ($\%$) & length & & coverage ($\%$) & length & & coverage ($\%$) & length \\
$\Lambda_m \Lambda_m^\top$    & 94.24 & 1.17 & & 94.24 & 1.17   & & 85.44 & 1.17 \\
$\Lambda_m \Lambda_{m'}^\top$ & 95.45 & 1.22 & & 95.46 & 1.22 & & 91.54 & 1.17  \\
& \multicolumn{8}{c}{Unbalanced scenario} \\
& \multicolumn{8}{c}{$n=250$} \\
& \multicolumn{2}{c}{\texttt{FAMA}-CLT} 
& 
& \multicolumn{2}{c}{\texttt{FAMA}-$\tilde \Pi$} 
& 
& \multicolumn{2}{c}{\texttt{FABLE}} \\
& coverage ($\%$) & length & & coverage ($\%$) & length & & coverage ($\%$) & length \\
$\Lambda_m \Lambda_m^\top$    & 92.78 & 2.87  & & 92.78 & 2.87  & &  93.50 & 3.07   \\
$\Lambda_m \Lambda_{m'}^\top$ & 95.18& 2.71  & & 95.18& 2.71  & &   97.09& 3.27 \\
& \multicolumn{8}{c}{$n=500$} \\
& \multicolumn{2}{c}{\texttt{FAMA}-CLT} 
& 
& \multicolumn{2}{c}{\texttt{FAMA}-$\tilde \Pi$} 
& 
& \multicolumn{2}{c}{\texttt{FABLE}} \\
&coverage ($\%$) & length & & coverage ($\%$) & length & & coverage ($\%$) & length \\
$\Lambda_m \Lambda_m^\top$    & 94.98 & 2.10 & &  94.98 & 2.10  & & 85.98 & 2.06 \\
$\Lambda_m \Lambda_{m'}^\top$ & 96.03 & 1.98 & & 96.02 & 1.98 & & 93.51 & 2.24 \\
& \multicolumn{8}{c}{$n=1000$} \\
& \multicolumn{2}{c}{\texttt{FAMA}-CLT} 
& 
& \multicolumn{2}{c}{\texttt{FAMA}-$\tilde \Pi$} 
& 
& \multicolumn{2}{c}{\texttt{FABLE}} \\
&coverage ($\%$) & length & & coverage ($\%$) & length & & coverage ($\%$) & length \\
$\Lambda_m \Lambda_m^\top$    &  94.69& 1.49 & & 94.69& 1.49 & & 84.94 & 1.49 \\
$\Lambda_m \Lambda_{m'}^\top$ & 95.59 & 1.41 & &  95.59 & 1.41  & & 94.26 & 1.59\\
\end{tabular}}

\end{table}

\section{Cancer Multiomics Application}\label{sec:application}

We analyze the data from \citet{linkedomics} available at \href{https://www.linkedomics.org}{https://www.linkedomics.org}. We consider four views collected from tumor samples, which include: 
(1) gene-level RNA sequencing (RNA-seq) data obtained from the Illumina GenomeAnalyzer platform, reported as normalized expression counts; (2) somatic copy number variation (SCNV) data at the gene level, quantified as log-ratios using the GISTIC2 pipeline to reflect copy number aberrations; (3) DNA methylation (DNA-met) data generated from the Illumina HM450K platform, summarized at the gene level as beta values; and (4) reverse phase protein array (RPPA) measurements providing gene-level, normalized protein expression profiles. The RPPA view contains 168 variables, while we retain the 4000 variables with the largest variance for each other view. We consider 170 samples for which data from each view are available, which are randomly divided into training and test sets of size 136 and 34 respectively. We fit each model on the training set after standard pre-processing, which we describe in the Supplementary Material. 

We compute the log-likelihood in the test set using point estimates for the overall covariance matrix of each method and test whether the out-of-sample log-likelihood of the \texttt{FAMA} estimate is larger than that of other methodologies by means of one-sided paired t tests. For \texttt{MOFA} and \texttt{ROTATE}, we fit the models with two different choices for the upper bound on the number of factors. More details are provided in the supplemental. The results are reported in Table \ref{tab:oos}. Overall \texttt{FAMA} produces excellent out-of-sample results that outperform competitors in almost all cases. Moreover, \texttt{FAMA} had a very competitive running time. Indeed, \texttt{FAMA} and \texttt{FABLE} only took $\sim 4$ seconds, while \texttt{MOFA} took 509 and 106 seconds, and \texttt{ROTATE} took 187 and 57 seconds.

\begin{table}[ht]
\caption{Comparison of out-of-sample log-likelihood of different methodologies. Out-of-sample log-likelihood of different methods and p-values of the paired one-sided t-tests for a greater log-likelihood for \texttt{FAMA} estimate. The first row refers to the concatenated dataset, the following four rows to each view separately, and the remaining rows to each possible pair of views. \label{tab:oos}}
\centering
\footnotesize
\resizebox{\textwidth}{!}{
\begin{tabular}{llcccccccc}
& & \texttt{FAMA} & \texttt{AJIVE}  & \texttt{FABLE} & \texttt{MOFA-1} & \texttt{MOFA-2} & \texttt{ROTATE-1} & \texttt{ROTATE-2} & \texttt{SVD} \\
\multirow{2}{*}{Overall}                 & log-lik & -106628.6 & -149324.9 & -140515.0 & -4371736.0 & -4408983.0& -152118.7 & -123055.9 & -109473.2 \\
                                         & p-value &  & $2 \times 10^{-3}$ & $8\times 10^{-8}$ & $0.04$ & $0.12$ & $8\times 10^{-8}$ & $8\times 10^{-8}$ & $0.43$ \\
\multirow{2}{*}{RNA-seq}                 & log-lik & -127946.0 & -161153.0 & -134057.5 & -128914.2 & -137424.2 & -145065.6 & -133610.3 & -135664.0 \\
                                         & p-value &  & $1 \times 10^{-13}$ & $2\times 10^{-7}$ & $0.03$ & $6\times 10^{-8}$ & $3\times 10^{-7}$ & $2\times 10^{-8}$ & $1\times 10^{-10}$ \\
\multirow{2}{*}{SCNV}                    & log-lik & 3435.1 & -31704.1 & -33278.9 & -4240487.0 & -4305088.0 & 3937.2 & -15005.9 & -31677.6 \\
                                         & p-value &  & $3 \times 10^{-5}$ & $7\times 10^{-17}$ & $0.04$ & $0.12$ & $0.52$ & $2\times 10^{-4}$ & $0.041$ \\
\multirow{2}{*}{DNA-met}                 & log-lik & 64677.7 & 45869.7 & 55418.0 & 60951.7 & 58374.3 & 65084.3 & 59163.8 & 58386.5 \\
                                         & p-value &  & $3 \times 10^{-9}$ & $3\times 10^{-7}$ & $0.10$ & $1 \times 10^{-4}$ & $0.60$ & $1\times 10^{-7}$ & $1.6\times 10^{-3}$ \\
\multirow{2}{*}{RPPA}                    & log-lik & -719.4 & 53.9 & -2088.1 & -1765.4 & -2337.2 & -1167.5 & -1888.7 & -362.6 \\
                                         & p-value &  & $1$ & $6\times 10^{-12}$ & $4\times 10^{-11}$ & $3\times 10^{-11}$ & $2\times 10^{-8}$ & $4\times 10^{-11}$ & $1.00$ \\
\multirow{2}{*}{\makecell{RNA-seq \&\\SCNV}}       & log-lik & -148123.6 & -194261.0 & -183485.9 & -4408290.0 & -4454298.0 & -189370.1 & -164391.1 & -167328.9 \\
                                                  & p-value &  & $9 \times 10^{-5}$ & $2\times 10^{-10}$ & $0.04$ & $0.13$ & $0.01$ & $1\times 10^{-4}$ & $0.14$ \\
\multirow{2}{*}{\makecell{RNA-seq \&\\DNA-met}}   & log-lik & -71350.3 & -116259.0 & -85061.7 & -74642.5 & -85183.3 & -92323.1 & -81535.5 & -77218.2 \\
                                                  & p-value &  & $2 \times 10^{-10}$ & $3\times 10^{-7}$ & $0.14$ & $2\times 10^{-9}$ & $9\times 10^{-6}$ & $5\times 10^{-8}$ & $0.011$ \\
\multirow{2}{*}{\makecell{RNA-seq \&\\RPPA}}   
                                                   & log-lik & -132550.0 & -161093.9 & -136851.5 & -131464.5 & -140158.0 & -148505.2 & -136481.4 & -135998.5 \\
                                                 & p-value &  & $9 \times 10^{-12}$ & $4\times 10^{-5}$ & $0.97$ & $2\times 10^{-6}$ & $8\times 10^{-7}$ & $1\times 10^{-5}$ & $8\times 10^{-5}$ \\
\multirow{2}{*}{\makecell{SCNV \&\\DNA-met}}  & log-lik & 48317.1 & 12731.4 & 10563.9 & -4211647.0 & -4258855.0 & 35098.5 & 27554.5 & 26732.2 \\
                                                  & p-value &  & $1 \times 10^{-3}$ & $4\times 10^{-11}$ & $0.04$ & $0.12$ & $0.20$ & $2\times 10^{-5}$ & $0.12$ \\
\multirow{2}{*}{\makecell{SCNV \&\\RPPA}}          & log-lik & -3615.8 & -31776.2 & -36637.4 & -4244299.0 & -4308319.0 & -1313.1 & -18429.7 & -32147.9 \\
                                                  & p-value &  & $1 \times 10^{-3}$ & $1\times 10^{-13}$ & $0.04$ & $0.12$ & $0.58$ & $5\times 10^{-4}$ & $0.070$ \\
\multirow{2}{*}{\makecell{DNA-met \&\\RPPA}}       & log-lik & 61122.7 & 45912.6 & 52556.5 & 58696.0 & 55765.0 & 62144.2 & 56284.4 & 58054.2 \\
                                                  & p-value &  & $6 \times 10^{-7}$ & $5\times 10^{-6}$ & $0.19$ & $2 \times 10^{-4}$ & $0.76$ & $2\times 10^{-5}$ & $0.056$ \\
\end{tabular}}
\end{table}

We also performed inference on the intra-view covariances, focusing on \texttt{FAMA} applied to the full 170 samples. For each view, we selected the 200 variables having absolute correlations greater than 0.5  with the largest number of other variables. For the RPPA, all the 168 variables were included. Figure \ref{fig:corr_plots} shows the correlation plot of the selected variables for each view.
\begin{figure}[htbp]
\centering
\begin{subfigure}{0.45\textwidth}
    \includegraphics[width=\linewidth]{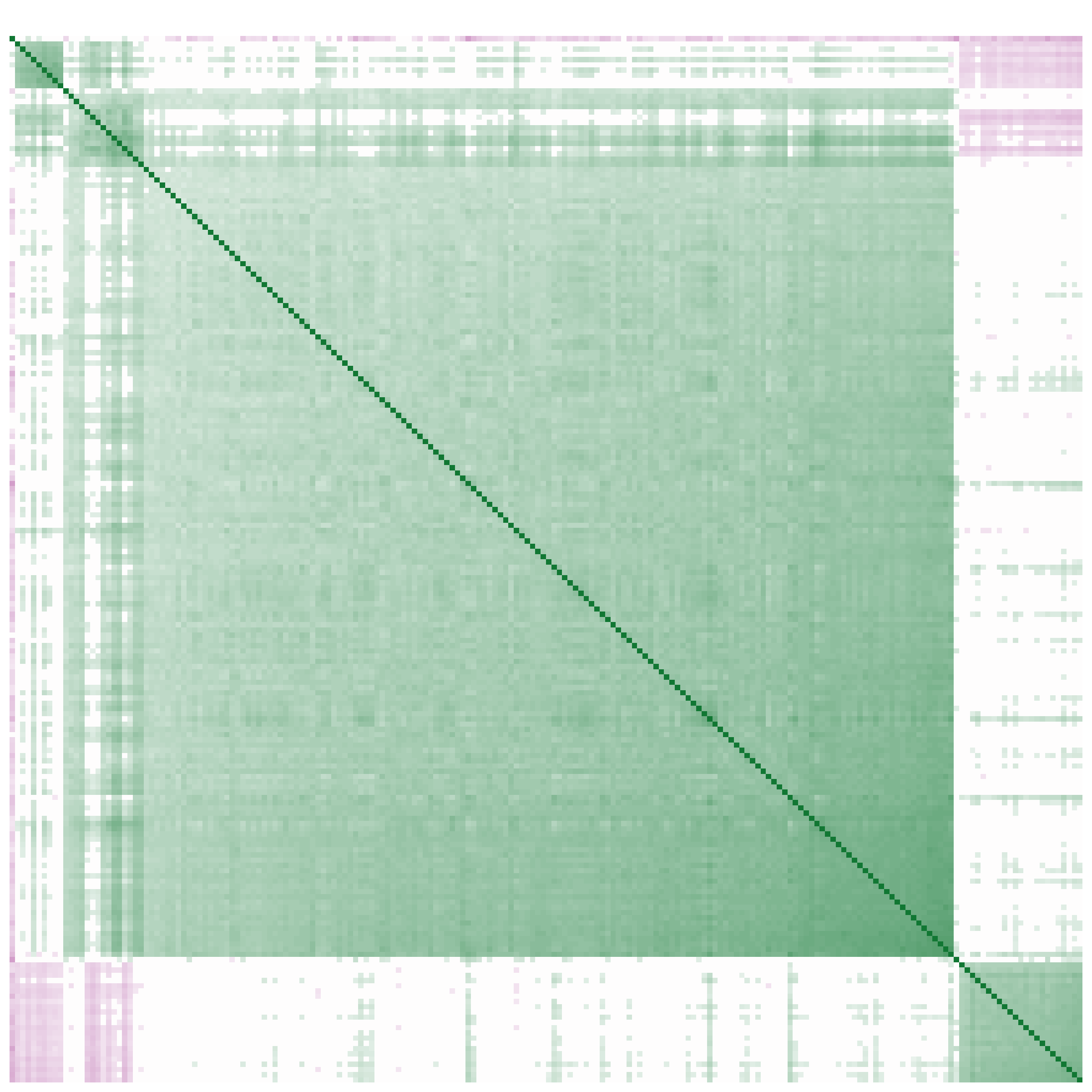}
    \caption{RNA sequencing (RNA-seq)}
\end{subfigure}
\hfill
\begin{subfigure}{0.45\textwidth}
    \includegraphics[width=\linewidth]{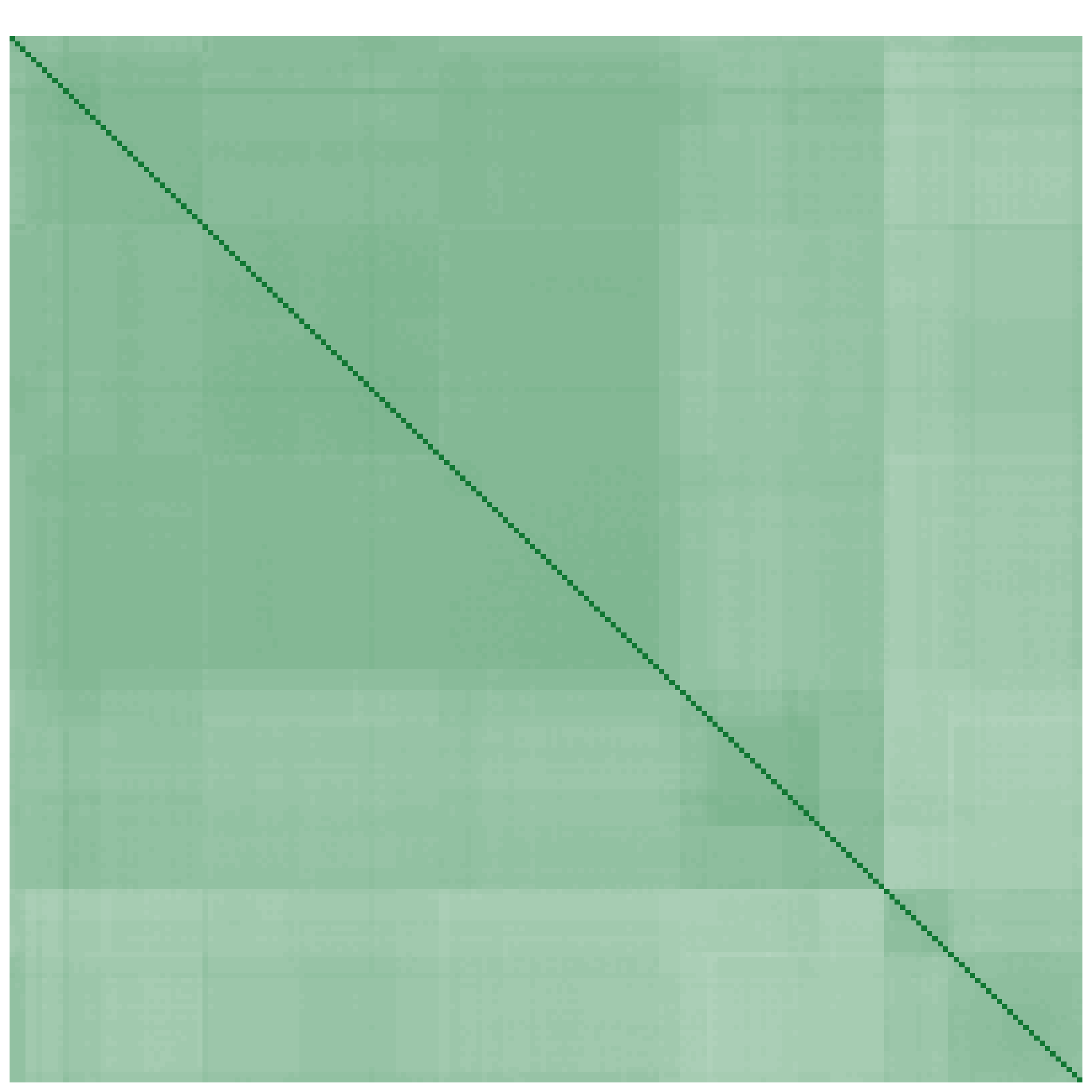}
    \caption{Somatic copy number variation (SCNV)}
\end{subfigure}
\vspace{0.5cm}
\begin{subfigure}{0.45\textwidth}
    \includegraphics[width=\linewidth]{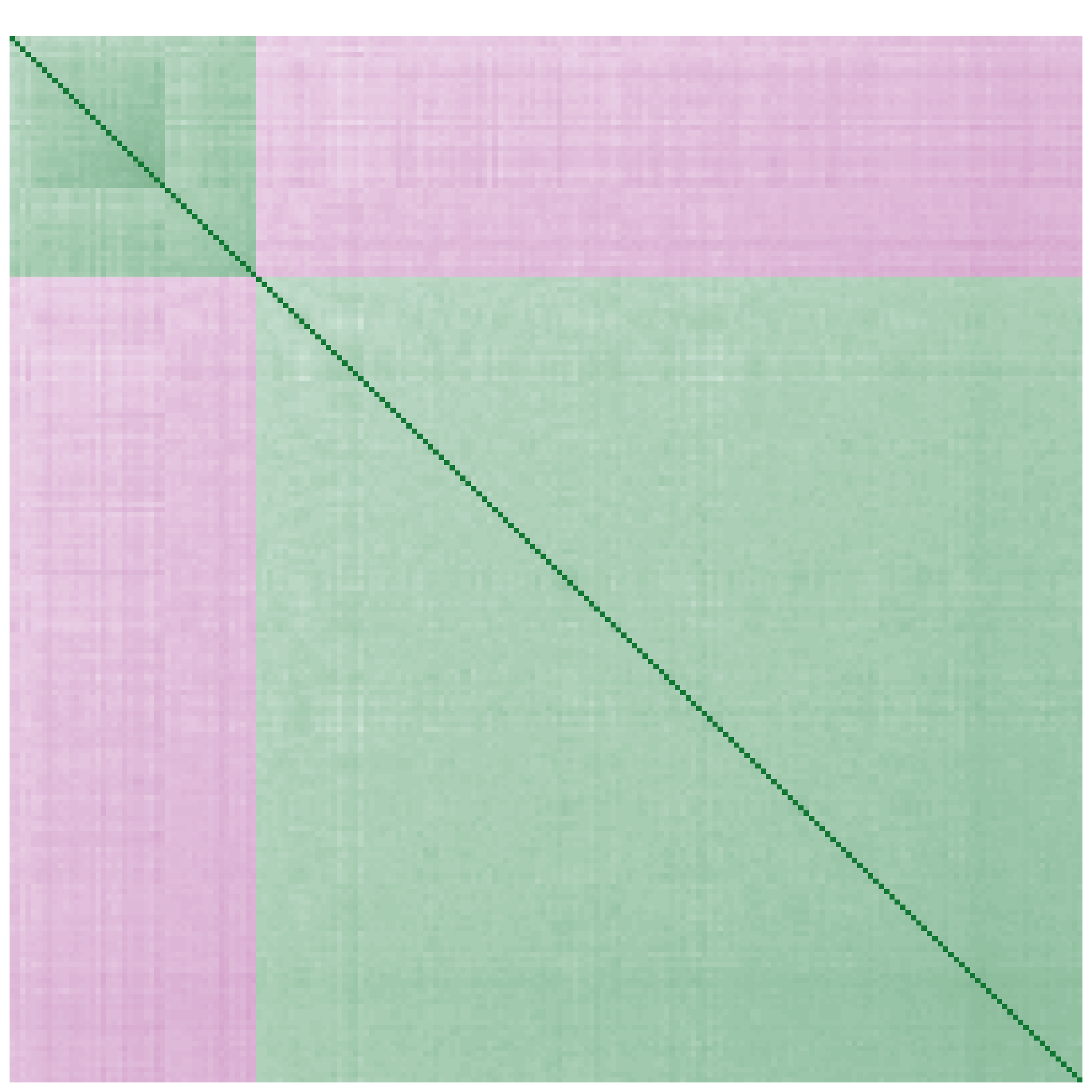}
    \caption{DNA methylation (DNA-met)}
\end{subfigure}
\hfill
\begin{subfigure}{0.45\textwidth}
    \includegraphics[width=\linewidth]{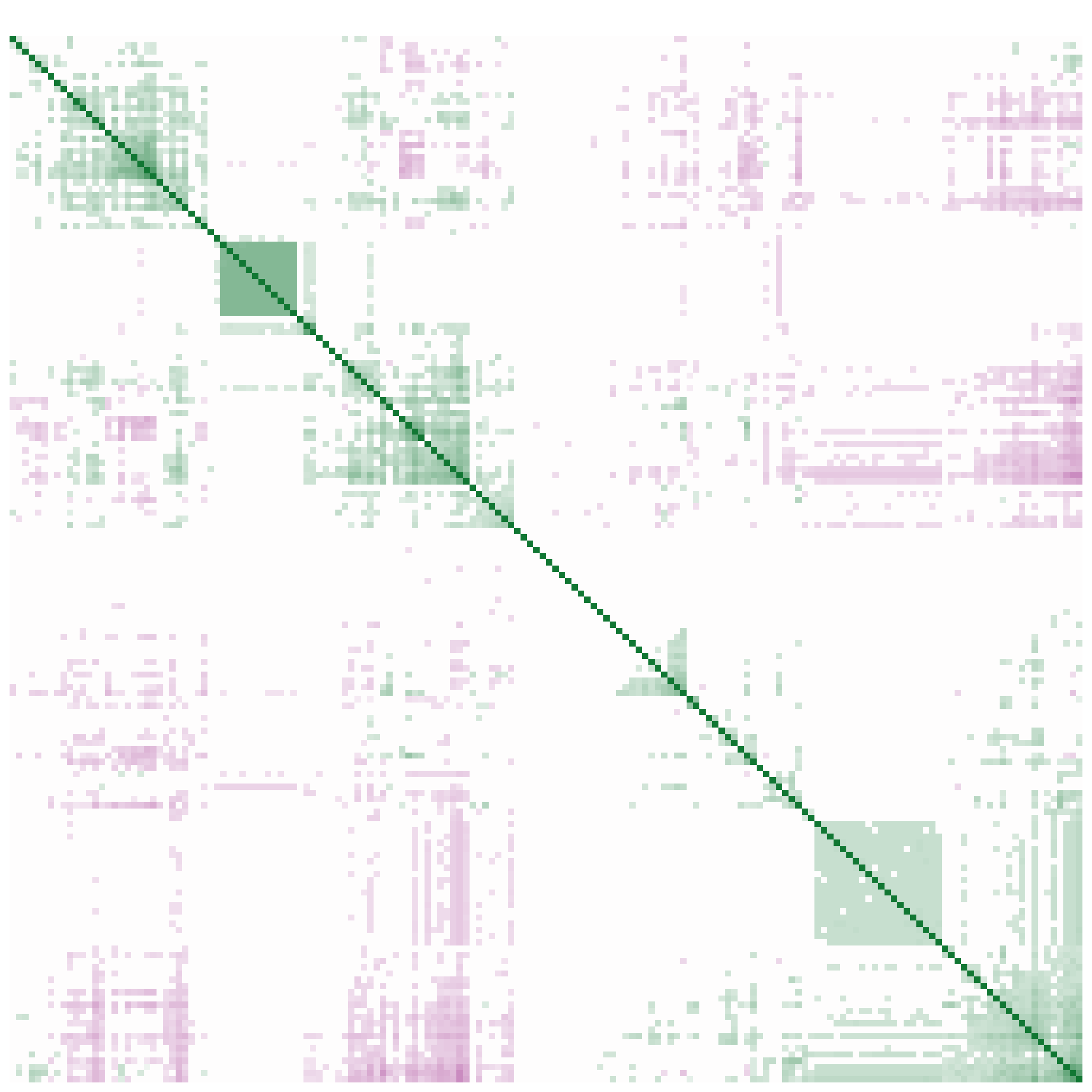}
    \caption{Reverse phase protein array (RPPA)}
\end{subfigure}
\caption{Within-view correlation plots inferred by \texttt{FAMA} for the cancer multiomics application. Positive (negative) correlations are showed in green (violet). Entries for which the $95\%$ posterior credible interval included 0 were set to 0.}
\label{fig:corr_plots}
\end{figure}
In the RNA-seq view, four clusters of genes emerge, with high positive intra-cluster correlations and mostly negative correlations across clusters. A prominent group consists of immune-associated genes (POU2AF1, CD38, ADAMDEC1, FAM30A, LCP1, CIITA), which are highly expressed in germinal center B cells or plasma cells \citep{immune_rna_2}. For example, POU2AF1 and FAM30A are part of a plasmablast gene module identified in studies of autoimmune disease and lymphoma \citep{immune_rna}. Another cluster includes genes that characterize epithelial or tumor-intrinsic programs (e.g. FOXA1, TTK, E2F8, FAP). For example, FOXA1 marks luminal breast/prostate tumors, which often exhibit reduced B-cell signatures \citep{nolan}. The SCNV view exhibits a more uniform behavior with a positive correlation among all features considered. There are 4 loci on chromosome 1q (TGFB2, LYPLAL1, C1orf143, RRP15) with a pairwise correlation greater than 0.75, consistent with a segment of chromosome 1q acquired together in a subset of tumors \citep{1q_amplification}. In the DNA-meth view, there is a bipartite structure with variables in the same (different) cluster(s) having strong positive (negative) correlation. One cluster contains neuronal genes (MAPT), DNA repair/apoptosis regulators (MLH3, HTT), and metabolic genes. The other includes FOXC2, a transcription factor linked to EMT (epithelial–mesenchymal transition) and development, COL9A3 (a collagen gene) and GPRC5B (a retinoic acid inducible gene) suggesting an EMT / mesenchymal and inflammation-associated methylation pattern. Such clustering could reflect a scenario in which a tumor subtype methylates a set of developmental genes and cell interaction genes, whereas some other subtype methylates a different set,  related to neural or differentiation pathways. Inverse methylation patterns are a known phenomenon in pan-cancer analyses \citep{inverse_met}. In the RPPA view, the correlation structure is more complex, but there are groups of proteins having high positive mutual correlation. One such group includes EMT markers (SNAI1, CDH2, ANXA1, GCN5, JAK2, GATA6) \citep{emt}. Another is related to the Receptor Tyrosine Kinase pathway (EGFR, MEK1, MAPK14 , YAP1, PDK1, PRKCA).

\begin{figure}
\centering
\begin{subfigure}{0.45\textwidth}
    \includegraphics[width=\linewidth]{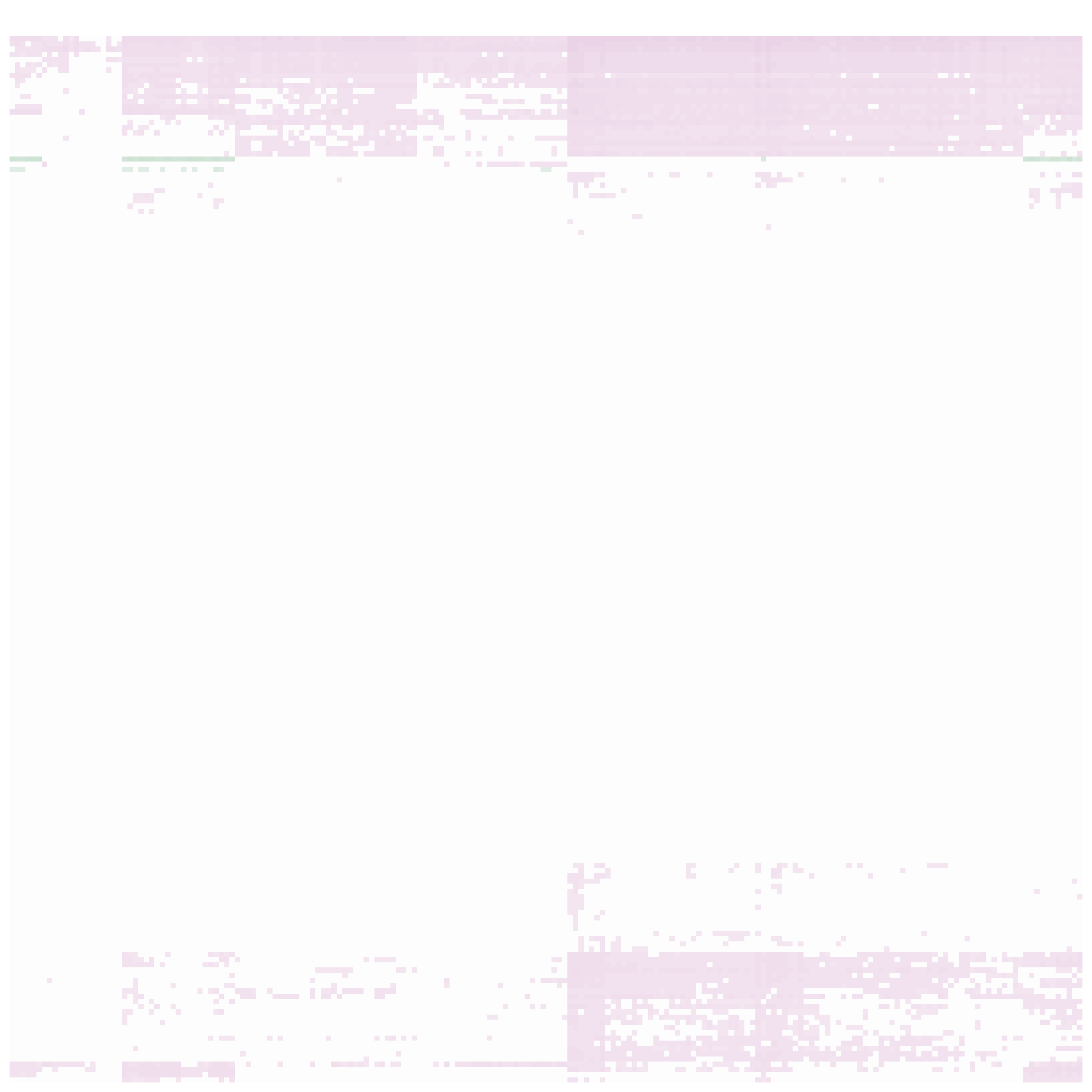}
    \caption{RNA-seq \& SCNV}
\end{subfigure}
\hfill
\begin{subfigure}{0.45\textwidth}
    \includegraphics[width=\linewidth]{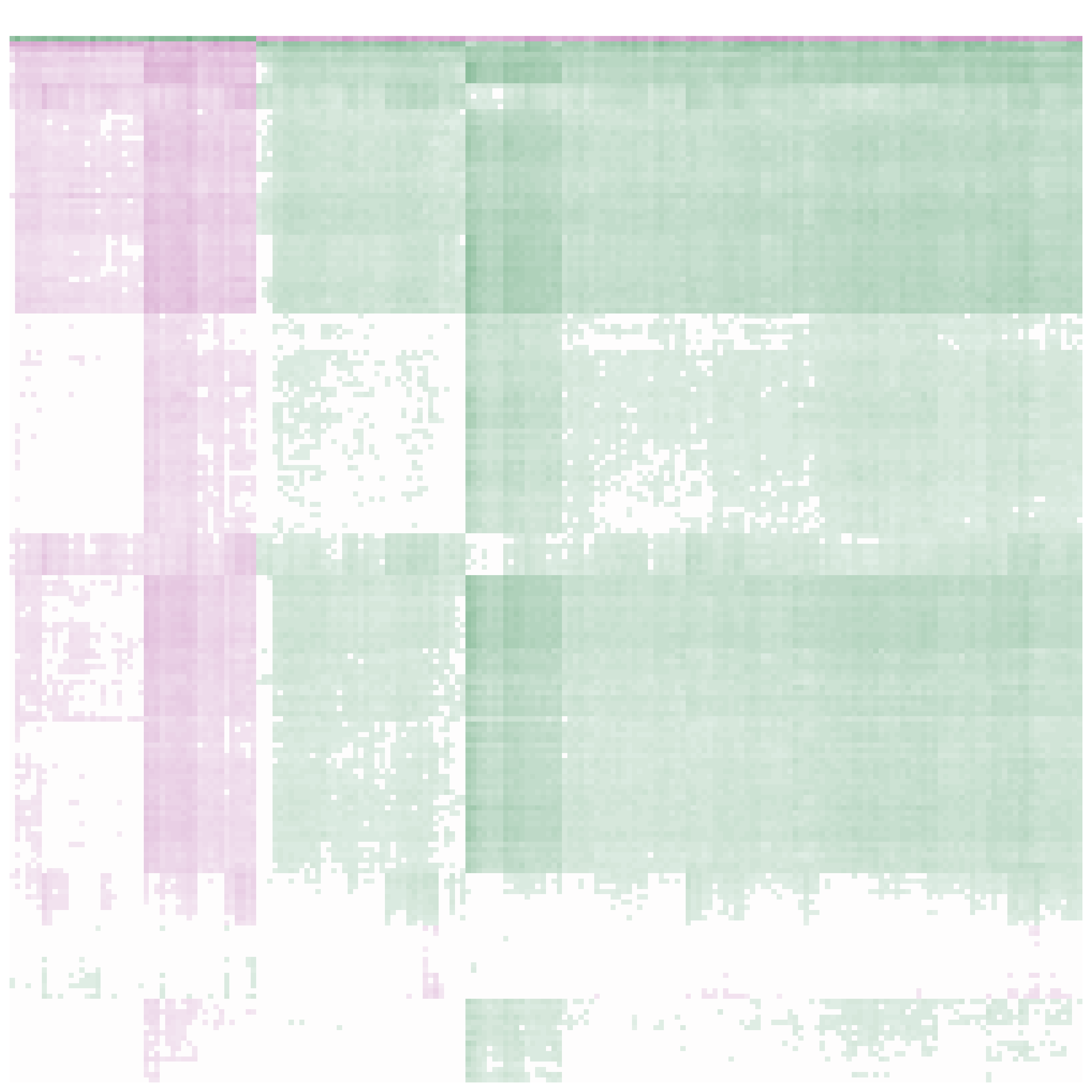}
    \caption{RNA-seq \& DNA-met}
\end{subfigure}
\vspace{0.5cm}
\begin{subfigure}{0.45\textwidth}
    \includegraphics[width=\linewidth]{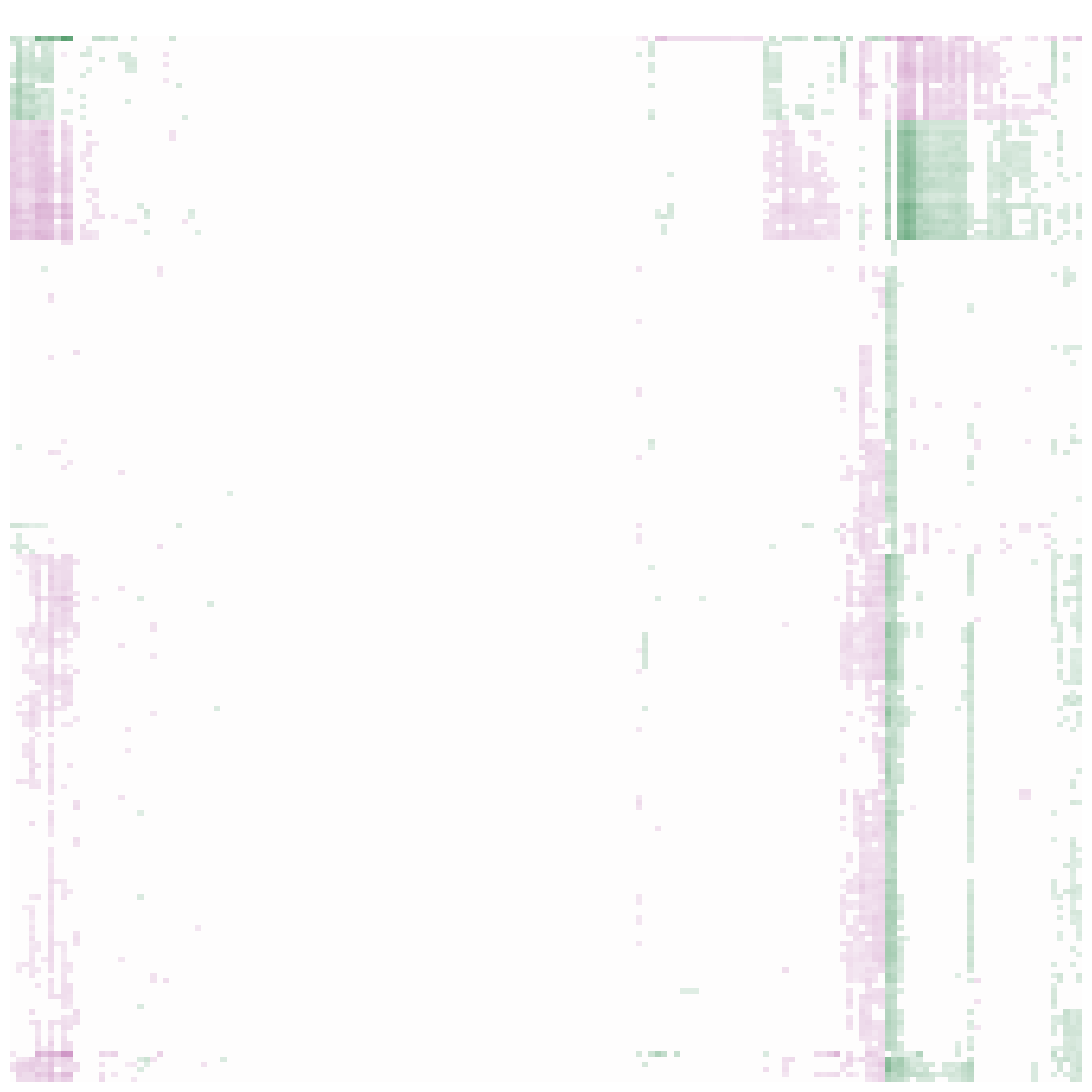}
    \caption{RNA-seq \& RPPA}
\end{subfigure}
\hfill
\begin{subfigure}{0.45\textwidth}
    \includegraphics[width=\linewidth]{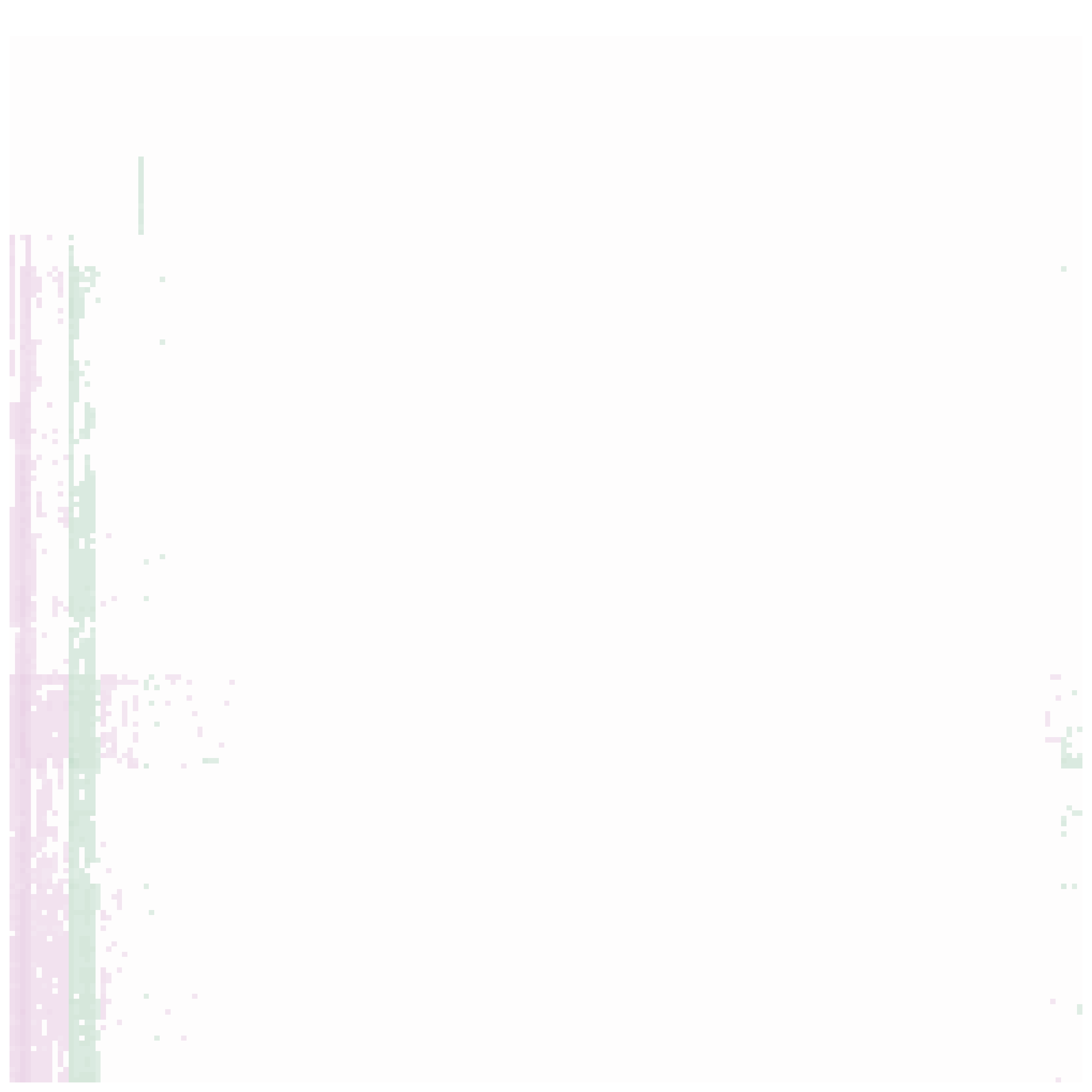}
    \caption{SCNV \& DNA-meth}
\end{subfigure}
\vspace{0.5cm}

\end{figure}


    \begin{figure}[htb]
        \ContinuedFloat

\begin{subfigure}{0.45\textwidth}
    \includegraphics[width=\linewidth]{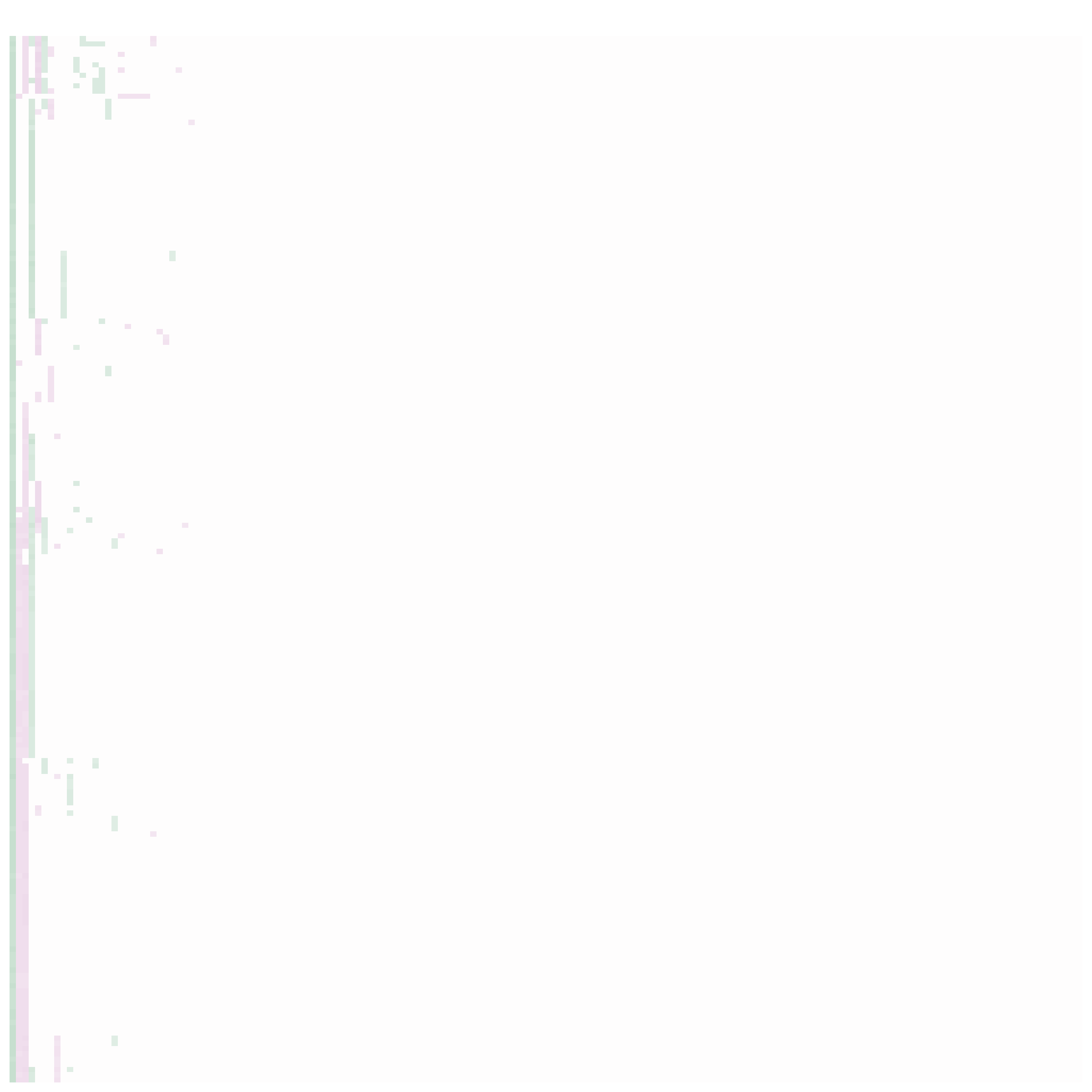}
    \caption{SCNV \& RPPA}
\end{subfigure}
\hfill
\begin{subfigure}{0.45\textwidth}
    \includegraphics[width=\linewidth]{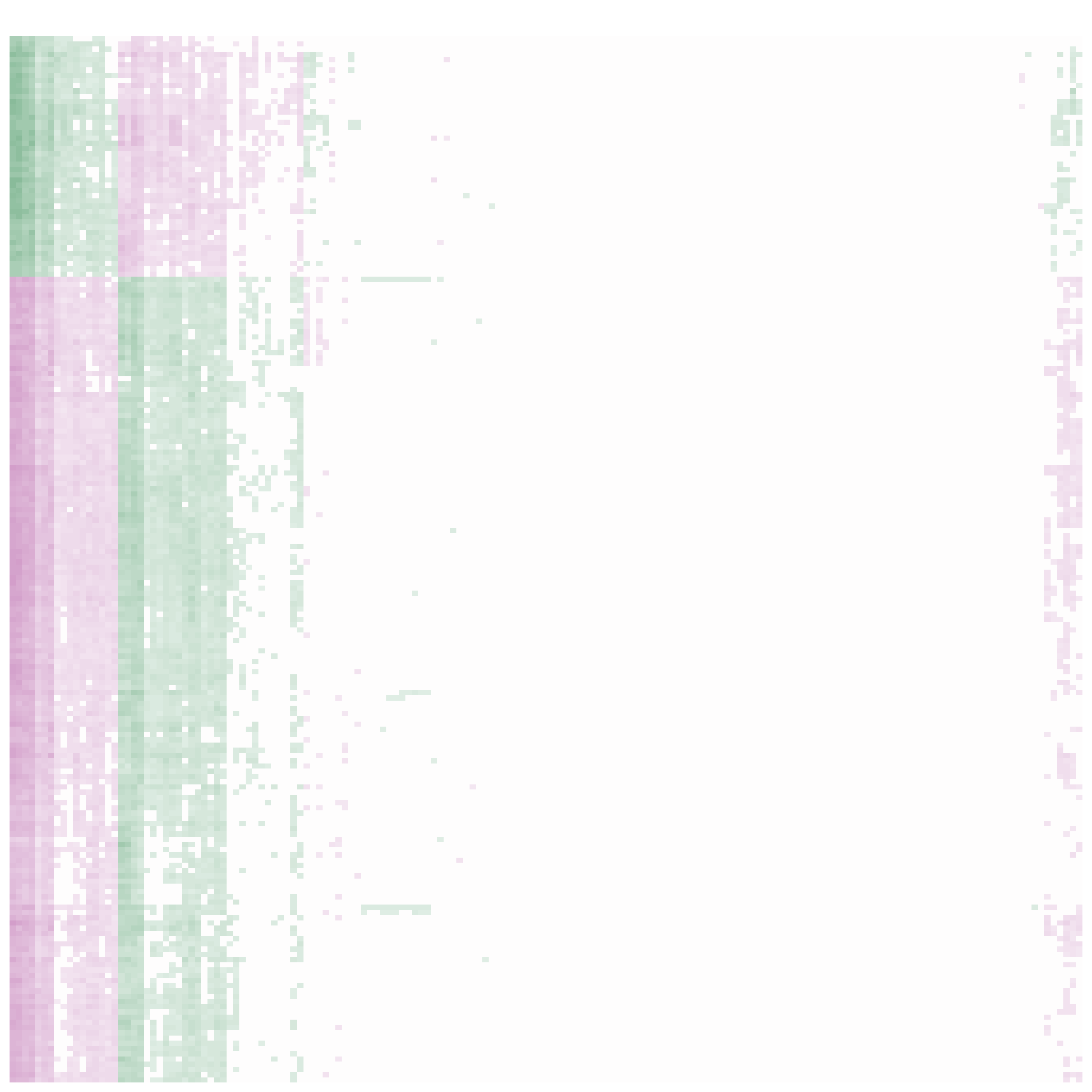}
    \caption{DNA-meth \& RPPA}
\end{subfigure}
\caption{Across-view correlation plots inferred by \texttt{FAMA} for the cancer multiomics application. Positive (negative) correlations are showed in green (violet). Entries for which the $95\%$ posterior credible interval included 0 were set to 0.}
\label{fig:corr_across_plots}
\end{figure}

Figure \ref{fig:corr_across_plots} shows across-view correlation plots. The RNA-seq \& SCNV correlation matrix contains mostly zero or slightly negative values, indicating that many gene copy-number changes do not translate into proportional RNA change \citep{Bhattacharya2020}. Among the negative correlations, the RNA value of IL18RAP (an interleukin-18 receptor accessory protein) is negatively correlated with copy number of RABGAP1L (a 1q25 gene) consistently with \citet{Hutchinson2017}.  The RNA-seq \& DNA-met correlation plot shows a clear split between positive and negative correlations, reflecting the dual role of DNA methylation in gene regulation \citep{Anastasiadi2018}. The RNA-seq \& RPPA matrix is relatively sparse with only a few moderate correlation values that indicate that RNA-seq levels are not strongly predictive of protein levels \citep{gry_rna_corr}. The SCNV \& RPPA correlation matrix is the sparsest with $\sim2$\% elements significantly different from zero consistent with \cite{akbani2014}. The SCNV \& DNA-met covariance is mostly sparse with a moderate majority of negative correlations among significant values. This tendency for copy number increases to associate with decreased methylation is consistent with \cite{prall06}. Finally, among the DNA-meth \& RPPA significant correlations, Estrogen Receptor (ER) $\alpha$ protein shows an inverse relationship with DNA methylation at the ESR1 gene locus. In ER-positive breast cancers, the ESR1 promoter is typically unmethylated and the ER protein is expressed, while in ER-negative tumors the ESR1 promoter is often hypermethylated, silencing the gene for the expression of ER protein \citep{ros20}. 

In summary, \texttt{FAMA}'s estimates have competitive out-of-sample performance and find highly interpretable biological signals.

\section{Conclusion}
We have presented a novel approach to inferring covariance structure in multi-view data with rigorous theoretical guarantees. \texttt{FAMA} estimates the latent factors that are active in at least one view using an eigendecomposition approach and infers the loadings through surrogate regression tasks. We have also proposed two ways to characterize uncertainty in inferring the covariance components, one based on a central limit theorem result, and one using the posterior distribution arising from the surrogate regression tasks. 

There are many interesting directions for future work. For example, in ecology, it has become standard to collect presence/absence or abundance data for large numbers of species at each sampling site. In this case, it is natural to treat the phylum that each species belongs to as a data view. Then, it becomes necessary to extend \texttt{FAMA} to account for binary or count data. One promising direction is to maintain the factor analytic form of \texttt{FAMA}, but within the linear predictor of a generalized linear latent variable model that would allow each variable to have a different measurement scale (continuous, count, binary, categorical). Along similar lines, it is useful to extend the methodology to other families of latent factor models, such as low-rank and non-negative matrix factorizations  \citep{zito2024compressive, msnmf}. In addition, it would be worthwhile to consider supervised variants, where latent factors are used to predict outcome variables of interest \citep{fin, fan1, fan2}. 

On the theoretical side, it would be interesting to to establish joint Bernstein–von Mises results for collections of covariance entries or more general functionals. Such extensions would provide a foundation for simultaneous inference and enable formal guarantees for multiplicity adjustment, including control of false discoveries when many covariance entries are considered jointly. 

Finally, given the promising results presented in this paper, applying \texttt{FAMA} to other high-dimensional multi-view datasets to obtain fast and reliable estimates of intra- and inter-omics dependence structure may lead to useful new applied insights.

\section*{Acknowledgement}
This research was partially supported by the National Institutes of Health (grant ID R01ES035625), by the Office of Naval Research (Grant N00014-24-1-2626), and by the European Research Council under the European Union’s Horizon 2020 research and innovation programme (grant agreement No 856506).

\bibliographystyle{agsm}
\bibliography{References}

\newpage
\appendix

\section*{Supplementary material for \say{Inference on covariance structure in high-dimensional multi-view data}}

\section{Proofs of theoretical results}

 \subsection{Preliminary results}
\begin{proposition}[Adapted from Proposition 3.5 of \citet{fable}]\label{prop:reovery_P_m}
    Let $U_m$ be the matrix of left singular vectors associated with the leading $k_m$ singular values of $Y_m$ and $U_{0m}$ be the matrix of left singular vectors of $F_{0m} = F_0 A_m$. Then, under Assumptions \ref{assumption:model}--\ref{assumption:hyperparameters}, as $n, p_m \to \infty$, with probability at least $1-o(1)$, we have
    \begin{equation*}
        ||P_m - P_{0m} || \lesssim \frac{1}{n} + \frac{1}{p_m},
    \end{equation*}
    where $P_m = U_m U_m^\top$ and $P_{0m} = U_{0m}U_{0m}^\top$.
\end{proposition}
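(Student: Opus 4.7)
The plan is to reduce the statement to a direct application of \citet{fable}'s Proposition 3.5 after verifying that the single-view model $Y_m = F_{0m} \Lambda_{0m}^\top + E_m$ with $F_{0m} = F_0 A_m$ satisfies the hypotheses used there. The conditions translate as follows: (i) $F_{0m}$ is $n \times k_m$ with i.i.d.\ $N(0,1)$ entries because $A_m$ is a Boolean column selector applied to a matrix $F_0$ whose columns are i.i.d.\ $N(0, I_n)$; (ii) Assumption \ref{assumption:Lambda} gives $s_l(\Lambda_{0m}) \asymp \sqrt{p_m}$ and $\|\Lambda_{0m}\|_\infty < \infty$; (iii) Assumption \ref{assumption:sigma} bounds the idiosyncratic variances uniformly; and (iv) Assumption \ref{assumption:dimensions} guarantees that both $n$ and $p_m$ diverge. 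These are exactly the conditions under which the single-view result applies, yielding $\|P_m - P_{0m}\| \lesssim 1/n + 1/p_m$ with probability $1 - o(1)$.

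For completeness I would sketch the steps of that argument in our notation. First, I would control the signal singular values: writing $S_m = F_{0m} \Lambda_{0m}^\top$ and applying Gaussian concentration to $F_{0m}^\top F_{0m}$, one gets $s_l(F_{0m}) = \sqrt{n}\{1 + O_p(\sqrt{k_m/n})\}$; combined with Assumption \ref{assumption:Lambda} this yields $s_l(S_m) \asymp \sqrt{n p_m}$ for $l = 1,\dots, k_m$. Second, I would control the noise: since the rows of $E_m$ are independent $N(0, \Psi_m)$ with $\Psi_m \preceq c_{\sigma 2} I_{p_m}$, the standard Gaussian random matrix bound gives $\|E_m\| \lesssim \sqrt{n} + \sqrt{p_m}$ with high probability.

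Third, I would apply a Davis--Kahan-type perturbation bound to the gram matrices $Y_m Y_m^\top$ and $S_m S_m^\top$. The perturbation decomposes as $Y_m Y_m^\top - S_m S_m^\top = S_m E_m^\top + E_m S_m^\top + E_m E_m^\top$. The mean part of $E_m E_m^\top$ is a scalar multiple of $I_n$ that does not rotate the eigenspace of $S_m S_m^\top$, while the centered fluctuation has spectral norm $\lesssim \sqrt{n p_m}$ by matrix Bernstein. For the cross terms, the crucial observation is that $E_m$ projected onto the low-dimensional signal subspaces $U_{0m}$ and $V_{0m}$ is a Gaussian block of dimension only $k_m \times n$ or $k_m \times p_m$, so that $\|V_{0m}^\top E_m^\top\| \lesssim \sqrt{n}$ and $\|E_m U_{0m}\| \lesssim \sqrt{p_m}$. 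Combining these via a one-sided Davis--Kahan bound of the form $\|P_m - P_{0m}\| \lesssim \|(Y_m Y_m^\top - S_m S_m^\top) P_{0m}\| / s_{k_m}^2(S_m)$ and dividing the $n + p_m$ numerator by the $n p_m$ denominator produces the claimed $1/n + 1/p_m$.

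The main obstacle is this last step. A naive Wedin bound only gives $\|P_m - P_{0m}\| \lesssim \|E_m\|/s_{k_m}(S_m) \asymp 1/\sqrt{n} + 1/\sqrt{p_m}$, which is too weak. To obtain the sharper quadratic rate one must either use a second-order expansion of the eigenprojectors, so that the leading linear-in-$E_m$ contribution cancels on the orthogonal complement of the signal subspace, or equivalently exploit the fact that the noise restricted to that $k_m$-dimensional subspace behaves like a much smaller random matrix whose contribution is of order $\sqrt{n} + \sqrt{p_m}$ rather than the full $\|E_m\|$. This is precisely the technical core of \citet{fable}'s Proposition 3.5, and once one has verified that their hypotheses translate to the per-view model here, the conclusion follows.
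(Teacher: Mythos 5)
Your proposal is correct and follows essentially the same route as the paper, which likewise proves this by direct reduction to Proposition 3.5 of \citet{fable}; the extra sketch of that result's internals (the second-order perturbation argument needed to beat the naive Wedin rate) is accurate but not required. The only step the paper makes explicit that you gloss over is the identification $U_{0m}U_{0m}^\top = \tilde U_{0m}\tilde U_{0m}^\top$, i.e.\ that the column space of $F_{0m}$ coincides with that of the signal $F_{0m}\Lambda_{0m}^\top$ because $\Lambda_{0m}$ has full column rank under Assumption \ref{assumption:Lambda}.
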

\begin{proof}[of Proposition \ref{prop:reovery_P_m}]
    Since $U_{0m} = \tilde U_{0m} R$ for some orthogonal matrix $R$, where $\tilde U_{0m}$ is the matrix of left singular vectors of $F_{0m}\Lambda_{0m}^\top$, we have $U_{0m}U_{0m}^\top = \tilde U_{0m} \tilde U_{0m}^\top$. The result follows from Proposition 3.5 of \citet{fable}.
\end{proof}

\begin{proposition}\label{prop:recovery_P_0}
Let $\bar P$ be the orthogonal projection onto the subspace spanned by the singular vectors associated to the leading $k_0$  singular values of $\tilde P$, where $\tilde P$ is defined as in \eqref{eq:P_tilde}, and $U_0 \in  R^{n \times k_0}$ is the matrix of left singular vectors of $F_0$. Then, under Assumptions \ref{assumption:model}--\ref{assumption:hyperparameters}, we have
    \begin{equation}
         pr\left\{\left|\left|\bar P - U_0 U_0^\top \right|\right| > G_2 \left(\frac{1}{n} + \frac{1}{p_{\min}}\right)\right\} \to 0 \quad (n, p_1, \dots, p_M \uparrow \infty),
    \end{equation}
where $G_2 < \infty$, and $p_{\min} = \min_{m=1, \dots, M}p_m$. 
\end{proposition}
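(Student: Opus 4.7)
\noindent\textbf{Proof plan for Proposition \ref{prop:recovery_P_0}.} The strategy is a Davis--Kahan perturbation argument that compares $\tilde P$ with the deterministic (conditional on $F_0$) target
\begin{equation*}
\tilde P_0 := \frac{1}{M}\sum_{m=1}^M P_{0m}, \qquad P_{0m} = U_{0m}U_{0m}^\top,
\end{equation*}
where $U_{0m}$ is an orthonormal basis of $\mathrm{col}(F_0 A_m)$, as in Proposition~\ref{prop:reovery_P_m}. The proof will have three steps: (i) bound $||\tilde P-\tilde P_0||$ via Proposition~\ref{prop:reovery_P_m}; (ii) show that $\tilde P_0$ has rank $k_0$, leading eigenspace equal to $\mathrm{col}(U_0)$, and a spectral gap bounded below by a positive constant; (iii) combine via the $\sin\Theta$ theorem. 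For step (i), the triangle inequality and Proposition~\ref{prop:reovery_P_m} applied to each view give, with probability $1-o(1)$,
\begin{equation*}
||\tilde P - \tilde P_0|| \leq \frac{1}{M}\sum_{m=1}^M ||P_m - P_{0m}|| \lesssim \frac{1}{n} + \frac{1}{p_{\min}},
\end{equation*}
since $M$ is fixed by Assumption~\ref{assumption:hyperparameters}.

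For step (ii), since every $\mathrm{col}(F_0 A_m)$ is a subspace of $\mathrm{col}(F_0)=\mathrm{col}(U_0)$, I can write $\tilde P_0 = U_0 \bar B U_0^\top$ with
\begin{equation*}
\bar B = \frac{1}{M}\sum_{m=1}^M \Sigma_0^{1/2} A_m (A_m^\top \Sigma_0 A_m)^{-1} A_m^\top \Sigma_0^{1/2}, \qquad \Sigma_0 = F_0^\top F_0,
\end{equation*}
where each summand is a rank-$k_m$ orthogonal projection in $\mathbb R^{k_0 \times k_0}$. Since $k_0$ is fixed and $F_0$ has iid $N(0,I_{k_0})$ rows, $\Sigma_0/n = I_{k_0}+O_p(n^{-1/2})$, and a Neumann expansion yields $\bar B = M^{-1}\sum_m A_m A_m^\top + O_p(n^{-1/2})$. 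The matrix $\sum_m A_m A_m^\top$ is diagonal, with $j$-th entry equal to the number of views in which factor $j$ is active; by the modeling convention that every latent factor is active in at least one view (otherwise it would be unidentifiable), this count is at least one, so $M^{-1}\sum_m A_m A_m^\top \succeq (1/M) I_{k_0}$. Hence, with probability $1-o(1)$, $\lambda_{k_0}(\bar B) \geq 1/(2M)$, $\tilde P_0$ has rank exactly $k_0$, and its leading $k_0$-dimensional eigenspace is $\mathrm{col}(U_0)$. This provides a spectral gap of at least $1/(2M)$ between $\lambda_{k_0}(\tilde P_0)$ and $\lambda_{k_0+1}(\tilde P_0)=0$.

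For step (iii), on the event from step (i), for $n$ large enough so that $||\tilde P - \tilde P_0||<1/(4M)$, Weyl's inequality preserves a gap of at least $1/(4M)$ between the $k_0$-th and $(k_0+1)$-th eigenvalues of $\tilde P$. Applying the Davis--Kahan $\sin\Theta$ theorem to the leading $k_0$-dimensional invariant subspaces of $\tilde P$ and $\tilde P_0$ yields
\begin{equation*}
||\bar P - U_0 U_0^\top|| \leq \sqrt{2}\,||\sin\Theta(\bar P, U_0 U_0^\top)|| \lesssim \frac{||\tilde P - \tilde P_0||}{1/(2M)} \lesssim \frac{1}{n} + \frac{1}{p_{\min}},
\end{equation*}
so choosing a sufficiently large finite $G_2$ absorbs the implicit constants.

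The main obstacle is step (ii): establishing a constant-order lower bound on $\lambda_{k_0}(\tilde P_0)$, which drives the eigengap needed for Davis--Kahan. This relies on two ingredients that must be handled carefully: (a) the concentration $\Sigma_0/n = I_{k_0}+O_p(n^{-1/2})$, which ensures each summand $B_m$ is close to the idealized selection projection $A_m A_m^\top$, and (b) the structural assumption that every latent factor is active in at least one view, which guarantees $\sum_m A_m A_m^\top$ is positive definite. Once the eigengap is established, the sharp rate $1/n+1/p_{\min}$ from step (i) transfers essentially verbatim to the eigenspace.
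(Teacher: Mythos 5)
Your proposal is correct and follows essentially the same route as the paper: the same decomposition $\tilde P = \tilde P_0 + \frac{1}{M}\sum_m \Delta_m$ with the perturbation controlled by Proposition \ref{prop:reovery_P_m}, and the same key step of lower-bounding $s_{k_0}(\tilde P_0)$ by a constant of order $1/M$ using the concentration $F_0^\top F_0/n \approx I_{k_0}$ together with the fact that $\sum_m A_m A_m^\top \succeq I_{k_0}$ because every factor is active in at least one view. The only difference is the final subspace-perturbation tool --- you invoke Weyl plus the Davis--Kahan $\sin\Theta$ theorem, while the paper uses the explicit perturbed-singular-subspace characterization of Gratton et al.\ together with the Woodbury identity --- and these are interchangeable here, yielding the same rate.
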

\begin{proof}[of Proposition \ref{prop:recovery_P_0}]
Note that $\tilde P = \frac{1}{M} \sum_{m=1}^M P_m = \frac{1}{M} \sum_{m=1}^M (U_{0m} U_{0m}^\top + \Delta_m)$, where $\Delta_m = P_m - P_{0m}$. By Proposition \ref{prop:reovery_P_m}, we have $ ||\frac{1}{M} \sum_{m=1}^M \Delta_m || \lesssim \frac{1}{M} \sum_{m=1}^M \Big(\frac{1}{n} + \frac{1}{p_m}\Big) \asymp \frac{1}{n} + \frac{1}{p_{\min}}$, with probability at least $1-o(1)$. 
Moreover, we have $U_{0m} = U_0 R_m$ for $R_m = U_0^\top U_m \in \mathbb R^{k_0 \times k_0}$. Note that $R_m =  D_0^{-1} V_0 F_0^\top F_0 A_m V_{0m} D_{0m}^{-1}$, where $V_0$ and $V_{0m}$ are the matrices of right singular vectors of $F_0$ and $F_{0m }$, and  $D_0$ and $D_{0m}$ are diagonal matrices with the corresponding singular values on the diagonal.
Hence, $\frac{1}{M} \sum_{m=1}^M P_{0m} = U_0 D_0 V_0^\top \big(\frac{1}{M} \sum_{m=1}^M A_m V_{0m} D_{0m}^{-2} V_{0m}^\top A_m^\top \big)V_0 D_0 U_0^\top$. In particular, 
\begin{equation*}
    \begin{aligned}
        s_{k_0}\bigg(\frac{1}{M} \sum_{m=1}^M P_{0m}\bigg) & \geq s_{k_0}(U_0 D_0 V_0^\top )^2 s_{k_0} \bigg(\frac{1}{M} \sum_{m=1}^M A_m V_{0m} D_{0m}^{-2} V_{0m}^\top A_m^\top \bigg)\\
        & \geq \frac{1}{M}  s_{k_0}(D_0)^2  s_{k_0}\bigg(\sum_{m=1}^M A_m V_{0m} D_{0m}^{-2} V_{0m}^\top A_m^\top\bigg).
    \end{aligned}
\end{equation*}
We lower bound the second term. For all $m=1, \dots, M$, $A_m V_{0m} D_{0m}^{-2} V_{0m}^\top A_m^\top \succeq d_{0m 1}^{-2}A_m V_{0m}  V_{0m}^\top A_m^\top = d_{0m 1}^{-2} A_m  A_m^\top $, where $d_{0m k_m}$ is the $k_{m}$-th entry on the diagonal of $D_{0m}$. We let $d^2 = \max_{m=1, \dots, M} d_{0m 1}$. Then, we have
$ s_{k_0}(\sum_{m=1}^M A_m V_{0m} D_{0m}^{-2} V_{0m}^\top A_m^\top) \geq d^{-2} s_{k_0}(\sum_{m=1}^M A_m A_m^\top) = d^{2}$, since $\sum_{m=1}^M A_m A_m^\top$ is a diagonal matrix where the $l,l$-th entry is the number of views in which the $l$-th column of $F_0$ is active and, therefore is larger than or equal to $1$.
Combining all of the above, we obtain
\begin{equation*}
    \begin{aligned}
        s_{k_0}\bigg(\frac{1}{M} \sum_{m=1}^M P_{0m}\bigg) & \geq  \frac{1}{M}  s_{k_0}(D_0)^2  d^{-2} \asymp \frac{1}{M},
    \end{aligned}
\end{equation*}
with probability at least $1- o(1)$, since by Corollary 5.35 in \citet{vershynin_12}, we have $s_{k_0}(D_0) = \sqrt{n} - \mathcal{O}(k_0)$ and $d = \sqrt{n}+ \mathcal{O}(\sum_{m}k_m)$ with probability at least $1- o(1)$.
Therefore, denoting with 
$E = \frac{1}{M} \sum_{m=1}^M \Delta_m$, 
we have $||E || < 1/M$ with probability at least $1-o(1)$, for $n$ and $p_{\min}$ sufficiently large. Thus, we can apply Theorem 2.1 from \citet{gratton_16}, which characterizes the matrix left singular vectors associated to the $k_0$ leading singular values of $\tilde P$ up to an orthogonal transformation as 
\begin{equation}\label{eq:left_sv_tilde_P}
    \bar U = (U_0 -  U' S)(I_{k_0} +  S^\top  S)^{-1},
\end{equation}
where $ U' \in \mathbb R^{n \times n - k_0} $ is a matrix spanning the orthogonal complement of $\text{span}(U_0)$ such that $U^{' \top} U' = I_{n-k_0}$, and $ S \in  R^{n - k_0 \times k_0}$ such that $||S|| \asymp || E || \lesssim \frac{1}{n} + \frac{1}{p_{\min}}$ with probability at least $1- o(1)$. From \eqref{eq:left_sv_tilde_P}, we can get an expression for $\bar P = \bar U \bar U^\top$ and applying the Woodbury identity, we obtain
\begin{equation*}
    ||\bar P - U_0 U_0^\top|| \lesssim \frac{1}{n} + \frac{1}{p_{\min}},
\end{equation*}
 with probability at least $1- o(1)$, yielding the desired result.
\end{proof}

\subsection{Proof of the main results}

\begin{proof}[of Theorem \ref{thm:recovery_factors_overall}]
    Let $U_0 D_0 V_0^\top $ be the singular value decomposition of $F_0$. By Proposition \ref{prop:recovery_P_0}, we have $||\bar U \bar U^\top  - U_0 U_0^\top ||  \lesssim \frac{1}{n} + \frac{1}{p_{\min}}$, with probability at least $1 - o(1)$.
    Then, by Davis-Kahan theorem \citep{davis_kahan} we have
    $\min_{R: R^\top R = I_{k_0}}||U  - U_0R || = ||U  - U_0 \hat R || \lesssim \frac{1}{n} + \frac{1}{p_{\min}}$, with probability at least $1 - o(1)$, where $\hat R$ is the orthogonal matrix achieving the minimum of the quantity on the left hand side. Recalling that $\hat F = \sqrt{n} U$ and letting $\tilde R = V_0 \hat R$, we have \begin{equation*}
        \begin{aligned}
            ||\hat F - F_0 \tilde R  || &= ||\sqrt{n}U - U_0 D_0 \hat R || \leq  ||\sqrt{n} (U -U_0\hat R ) || + ||\sqrt{n} U_0\hat R - U_0 D_0 \hat R  ||\\
            &\leq \sqrt{n} \big(\frac{1}{n} + \frac{1}{p_{\min}}\big) + \max_{1 \leq l \leq k_0} |\sqrt{n} - d_{0l}|,
        \end{aligned}
    \end{equation*}where $d_{0l}$ is the $l$-th largest singular value of $D_0$. Moreover, by Corollary 5.35 of \citet{vershynin_12}, we have $|d_{0l} - \sqrt{n}| \lesssim \sqrt{k_0}$ with probability at least $1-o(1)$. 
\end{proof}

\begin{proof}[of Theorem \ref{thm:posterior_contraction}]
We start by establishing consistency of point estimates. Recall that $\hat \Lambda_m = Y_m^\top U \sqrt{n} / (n + \tau_m^{-2})$. Let $m, m' \in \{1, \dots, M\}$, not necessarily distinct, and $g_{mm'} = \frac{n}{(n + \tau_m^{-2})(n + \tau_{m'}^{-2})}$  then
\begin{equation*}
    \begin{aligned}
        \hat \Lambda_m \hat \Lambda_{m'}^\top &= g_{mm'}  Y_m^\top U U^\top Y_{m'}=  g_{mm'}  Y_m^\top U_0 U_0^\top Y_{m'}  + g_{mm'}  Y_m^\top \big(U U^\top -  U_0 U_0^\top\big) Y_{m'}
    \end{aligned}
\end{equation*}
and 
\begin{equation*}
    \begin{aligned}
         \frac{n}{(n + \tau_m^{-2})^2} Y_m U_0 U_0^\top Y_{m'}  = \frac{n}{(n + \tau_m^{-2})^2} & \big(F_{0} A_m \Lambda_{0m}^\top + E_m\big)^\top U_0 U_0^\top\big(F_{0} A_{m'} \Lambda_{0m'}^\top + E_{m'}\big)\\
         = \frac{n}{(n + \tau_m^{-2})^2} &\big( \Lambda_{0m} A_m^\top F_{0}^\top  F_{0}A_{m'}  \Lambda_{0m'}^\top + E_m^\top F_{0} A_{m'}\Lambda_{0m'}^\top \\
         & \quad +  \Lambda_{0m}A_m^\top F_{0}^\top E_{m'} + E_m^\top U_0 U_0^\top E_{m'}\big).
    \end{aligned}
\end{equation*}
Moreover, since 
$$A_m^\top F_{0}^\top  F_{0}A_{m'} = \begin{cases}
    F_{0m}^\top F_{0m}, \quad &\text{if } m=m',\\
    A_{m} B_{mm'} F_{0mm'}^\top F_{0mm'}  B_{mm'}^\top A_{m'}, \quad &\text{otherwise},\\
\end{cases}$$
where $B_{mm'} \in \mathbb R^{k_0 \times k_{mm'}^c}$ maps the columns of matrix $F_{0mm'}^c$ to the corresponding columns of $F_0$. Thus, for $m=m'$, we have
\begin{equation*}
    \begin{aligned}
       g_{mm}   \Lambda_{0m} A_m^\top F_{0}^\top  F_{0}A_{m'} \Lambda_{0m'}^\top & = g_{mm}  \big[\Lambda_{0m} n I_{k_m} \Lambda_{0m}^\top + \Lambda_{0m}\big(F_{0m}^\top F_{0m} - n I_{k_m}\big) \Lambda_{0m}^\top\big],
    \end{aligned}
\end{equation*}
and, for $m\neq m'$,
\begin{equation*}
    \begin{aligned}
      g_{mm'}  \Lambda_{0m} A_m^\top F_{0}^\top  F_{0}A_{m'} \Lambda_{0m'}^\top  = g_{mm'} &\big[\Lambda_{0m}A_m^\top n I_{k_{mm'}^c} A_{m'} \Lambda_{0m'}^\top \\
      & \quad + \Lambda_{0m}A_{m}^\top B_{mm'} \big(F_{0mm'}^{c\top} F_{0mm'}^c - n I_{k_{mm}^c}\big) \Pi_{mm'}^\top B_{mm'}^\top \Lambda_{0m'}^\top\big].
    \end{aligned}
\end{equation*}
since $B_{mm'}B_{mm'}^{\top}$ is a diagonal matrix with the $l$-th diagonal element equal to $1$ if and only if the $l$-th factor in $F_0$ is active in views $m$ and $m'$, which implies $A_m^\top B_{mm'}B_{mm'}^{\top} A_{m'} = A_m^\top  A_{m'} $. 
Then, we have 
\begin{equation*}
    \begin{aligned}
     ||  g_{mm'} n \Lambda_{0m}  \Lambda_{0m'}^\top - \Lambda_{0m}  \Lambda_{0m'}^\top|| \lesssim \frac{\sqrt{p_m p_{m'}}}{n},
    \end{aligned}
\end{equation*} 
\begin{equation*}
    \begin{aligned}
       g_{mm'}   || \Lambda_{0m}\big(F_{0m}^\top F_{0m} - n I_{k_m}\big) \Lambda_{0m}^\top || & \lesssim \frac{1}{n}||\Lambda_{0m}||^2 ||F_{0m}^\top F_{0m} - n I_{k_m}  || \asymp \frac{p_m}{\sqrt{n}}, 
    \end{aligned}
\end{equation*}
and 
\begin{equation*}
    \begin{aligned}
       g_{mm'}  &|| \Lambda_{0m} n I_{k_{mm'}^c} \Lambda_{0m'}^\top + \Lambda_{0m}A_{m}^\top B_{mm'}\big(F_{0mm'}^{c\top} F_{0mm'}^c - n I_{k_{mm}^c}\big) B_{mm'}^\top A_{m'} \Lambda_{0m'}^\top  || \\
&\lesssim \frac{1}{n}||\Lambda_{0m}||||\Lambda_{0m'}|| ||F_{0m}^\top F_{0m} - n I_{k_m}  || ||B_{mm'}||^2 ||A_m|| ||A_{m'}|| \asymp \frac{\sqrt{p_m p_{m'}}}{\sqrt{n}}, 
    \end{aligned}
\end{equation*}
with probability at least $1 - o(1)$, since $||\Lambda_{0m}|| \asymp \sqrt{p_m}$ and  $||\Lambda_{0m'}|| \asymp \sqrt{p_{m'}}$ by Assumption \ref{assumption:Lambda} and
$||F_{0m}^\top F_{0m} - n I_{k_m}  || \asymp ||F_{0mm'}^\top F_{0mm'} - n I_{k_{mm'}^c}  ||  \asymp \sqrt{n} + \sqrt{k_{mm'}^c}$ by Corollary 5.35 of \citet{vershynin_12} with probability at least $1 - o(1)$,  $ g_{mm'} = \frac{1}{n} + \mathcal{O}(1/n^2)$ and $||B_{mm'}|| \asymp ||A_m|| \asymp ||A_{m'}|| \asymp 1$. 

In addition, with probability at least $1 - o(1)$, we have 
\begin{equation*}
    \begin{aligned}
        ||E_m^\top  F_{0m'} \Lambda_{0m'}^\top  || &\lesssim (\sqrt{n} + \sqrt{p_m}) \sqrt{n p_{m'}},\\
        ||E_{m'}^\top  F_{0m} \Lambda_{0m}^\top  || &\lesssim (\sqrt{n} + \sqrt{p_{m'}}) \sqrt{n p_{m}}
    \end{aligned}
\end{equation*}
and 
$$
||E_m^\top U_0 U_0^\top E_{m'}|| \lesssim (\sqrt{n} + \sqrt{p_m})(\sqrt{n} + \sqrt{p_{m'}})
$$
since $||E_m|| \lesssim \sqrt{n} + \sqrt{p_m}$, $||E_{m'}|| \lesssim \sqrt{n} + \sqrt{p_{m'}}$ $|| F_{0m} || \asymp || F_{0m'} || \lesssim \sqrt{n}$, with probability at least $1 -o(1)$, by Lemma \ref{lemma:Y_m} and Corollary 5.35 of \citet{vershynin_12}, respectively, and $||\Lambda_{0m} || \lesssim \sqrt{p_m}$,   $||\Lambda_{0m'}|| \asymp \sqrt{p_{m'}}$ Assumption \ref{assumption:Lambda}. Finally, with probability at least $1 - o(1)$,
\begin{equation*}
    \begin{aligned}
       g_{mm'}||Y_m \big(U U^\top -  U_0 U_0^\top\big) Y_{m'}^\top || &\lesssim ||Y_m || ||Y_{m'} || || U U^\top -  U_0 U_0^\top||\\
       & \lesssim \frac{1}{n} n \sqrt{p_m p_{m'}} \big(\frac{1}{n} + \frac{1}{p_{\min}}\big)
    \end{aligned}
\end{equation*}
 since, with probability at least $1 - o(1)$, $||Y_m|| \asymp \sqrt{n p_m}$, $||Y_{m'}||  \asymp \sqrt{n p_{m'}}$ by Lemma \ref{lemma:Y_m}, and $|| U U^\top -  U_0 U_0^\top|| \lesssim \frac{1}{n} + \frac{1}{p_{\min}}$ by Proposition \ref{prop:recovery_P_0}. 
Combining all of the above with $||\Lambda_{0m}  \Lambda_{0m}^\top|| \asymp p_m$ which is implied by Assumption $\ref{assumption:Lambda}$, for $m=m'$, we get that with probability at least $1 - o(1)$, 
\begin{equation*}
    \begin{aligned}
       \frac{||\hat \Lambda_m\hat \Lambda_m^\top  - \Lambda_{0m}  \Lambda_{0m}^\top||}{||\Lambda_{0m}  \Lambda_{0m}^\top||} &\lesssim \frac{1}{\sqrt{n}} + \frac{1}{\sqrt{p_m}} + \frac{1}{p_{\min}}.
    \end{aligned}
\end{equation*}
For $m \neq m'$, under the condition $||\Lambda_{0m}A_m^\top A_{m'}  \Lambda_{0m}^\top|| \asymp \sqrt{p_m p_{m'}}$, with probability at least $1 - o(1)$, we have 
\begin{equation*}
    \begin{aligned}
       \frac{||\hat \Lambda_m\hat \Lambda_{m'}^\top  - \Lambda_{0m}A_m^\top A_{m'}  \Lambda_{0m'}^\top||}{||\Lambda_{0m} A_m^\top A_{m'} \Lambda_{0m'}^\top||} &\lesssim \frac{1}{\sqrt{n}} + \frac{1}{\sqrt{p_m}} + \frac{1}{\sqrt{p_{m'}}} + \frac{1}{p_{\min}}.
    \end{aligned}
\end{equation*}

Next, we show posterior contraction. Recall that a sample for $\tilde \Lambda_l$ from $\tilde \Pi$ admits the representation
\begin{equation}
 \tilde \Lambda_l = \hat \Lambda_l + \tilde E_{\Lambda_l}, \quad  \tilde E_{\Lambda_l} = [ \tilde e_{\lambda_{l1}} ~ \cdots ~\tilde e_{\lambda_{lp_l}}]^\top, \quad  \tilde e_{\lambda_{l j}} \sim N_{k_0}\bigg(0, \frac{\rho_l ^2 \tilde \sigma_{lj}^2}{n + \tau_l^{-2}} I_{k_0}\bigg), \quad \tilde \sigma_{lj} \sim IG ( \nu_n/2, \nu_n \delta_{lj}^2/2 )
\end{equation}
with $j=1, \dots, p_l$ and $l=m, m'$. 
In particular, 
\begin{equation*}
\begin{aligned}
    ||\tilde E_{\Lambda_l}|| & \lesssim \rho_l \frac{\sqrt{p_l}}{\sqrt{n}} \max_{j=1, \dots p_l} \tilde \sigma_{lj} \lesssim  \frac{\sqrt{p_l}}{\sqrt{n}},
\end{aligned}
\end{equation*}
with $\tilde \Pi$-probability at least $1-o(1)$, since $ \max_{j}\tilde \sigma_{mj}^2  \lesssim 1$ with probability at least $1-o(1)$, by Lemma \ref{lemma:convergence_tilde_sigma}, which implies 
\begin{equation*}
\begin{aligned}
    ||\tilde \Lambda_m \tilde \Lambda_{m'}^\top - \hat \Lambda_m \hat \Lambda_{m'}^\top|| &\lesssim ||\tilde E_{\Lambda_m}||||\tilde E_{\Lambda_{m'}}|| + ||\hat \Lambda_m || |||\tilde E_{\Lambda_{m'}}|| + ||\hat \Lambda_{m'} || |||\tilde E_{\Lambda_m}||\\
  & \lesssim \frac{\sqrt{p_m p_{m'}}}{\sqrt{n}},
\end{aligned}
\end{equation*}
with probability at least $1 - o(1)$, since $||\hat \Lambda_m|| \lesssim \sqrt{p_m}$ and $||\hat \Lambda_{m'}|| \lesssim \sqrt{p_{m'}}$ with probability at least $1-o(1)$, by Lemma \ref{lemma:Lambda_m_hat}, yielding the desired result.
\end{proof}

\begin{proof}[of Theorem \ref{thm:clt}]
Recall $\hat \lambda_{lj} = \frac{\sqrt{n}}{n + \tau_l^{-2}}U^\top y_l^{(j)}$, where $y_l^{(j)}$ denotes the $j$-th column of $Y_l$.
Thus, 
\begin{equation*}
    \begin{aligned}
       \hat \lambda_{mj}^\top \hat \lambda_{m'j'} &= g_{mm'}  y_m^{(j)\top } U U^\top y_{m'}^{(j') }\\
      & =g_{mm'}  y_m^{(j)\top } U_0 U_0^\top y_{m'}^{(j') } + g_{mm'}\big[y_m^{(j)\top }\big(U U^\top - U_0 U_0^\top\big) y_{m'}^{(j') }\big].
    \end{aligned}
\end{equation*}
where $g_{mm'} = \frac{n}{(n + \tau_m^{-2})(n + \tau_{m'}^{-2})}$ and we note $g_{mm'} = \frac{1}{n} + h_{mm'}$, with $h_{mm'} \asymp O(1/n^2)$. We decompose the terms:
\begin{equation*}
    \begin{aligned}
        y_m^{(j)\top } U_0 U_0^\top y_{m'}^{(j')}& = 
        & = \lambda_{0mj}^\top A_m^\top F_0^\top F_0 A_{m'} \lambda_{0m'j'} + \lambda_{0mj}^\top A_m^\top F_0^\top e_{m'}^{(j')}  + \lambda_{0m'j'}^\top A_{m'}^\top F_0^\top e_{m}^{(j)} + e_{m}^{(j)\top} U_0 U_0^\top e_{m'}^{(j')} 
     \end{aligned}
\end{equation*}
Since $U_0^\top e_{l}^{(h)} \sim N_{k_0}(0, \sigma_{0lh}^2 I_{k_0})$, we have $|| U_0^\top e_{l}^{(h)} || \lesssim 1$ with probability at least $1-o(1)$ by Corollary 5.35 of \citet{vershynin_12} since $\sigma_{0lh} = \mathcal O(1)$ by Assumption \ref{assumption:sigma}. Hence, 
$$
\sqrt{n} \frac{1}{n} |e_{m}^{(j)\top} U_0 U_0^\top e_{m'}^{(j')} | \leq || U_0^\top e_{m}^{(j)}  || || U_0^\top e_{m'}^{(j')} || \lesssim \frac{1}{\sqrt{n}}  
$$
Moreover, 
\begin{equation*}
    \begin{aligned}
       \sqrt{n} h_{mm'} | y_m^{(j)\top } U_0 U_0^\top y_{m'}^{(j')} | \lesssim  \frac{1}{n^{3/2}} ||U_0^\top y_m^{(j)}|| || U_0^\top y_{m'}^{(j')} ||  \asymp \frac{1}{\sqrt{n} }
    \end{aligned}
\end{equation*}
with probability at least $1-o(1)$, since $||U_0^\top y_m^{(j)}||  \leq ||y_m^{(j)}||$, $  ||U_0^\top y_{m'}^{(j')} ||\leq ||y_{m'}^{(j')}||$ and $ ||y_m^{(j)}|| \asymp ||y_{m'}^{(j')}|| \lesssim \sqrt{n}$ with probability at least $1-o(1)$ by Lemma \ref{lemma:y_j}, and 
\begin{equation*}
    \begin{aligned}
     \sqrt{n} g_{mm'}  |y_m^{(j)\top }\big(U U^\top - U_0 U_0^\top\big) y_{m'}^{(j')}|  &\lesssim \sqrt{n} \frac{1}{n}|| y_m^{(j) }||  ||y_{m'}^{(j')} || || U U^\top - U_0 U_0^\top || \\
     & \lesssim  \sqrt{n} \big(\frac{1}{n} + \frac{1}{p_{\min}}\big) \asymp \frac{1}{\sqrt{n}} + \frac{\sqrt{n}}{p_{\min}},
    \end{aligned}
\end{equation*}
with probability least $1-o(1)$, 
which vanishes under the condition $\sqrt{n} / p_{\min} = o(1)$,  since,  with probability at least $1-o(1)$, $||y_m^{(j)}|| \asymp || y_{m'}^{(j')} || \asymp \sqrt{n}$ by Lemma \ref{lemma:y_j} and $|| U U^\top - U_0 U_0^\top || \lesssim \frac{1}{n} + \frac{1}{p_{\min}} $   by Proposition \ref{prop:recovery_P_0}.

First, we focus on the case $m=m'$. In particular, by the central limit theorem, we have 
\begin{equation*}
    \begin{aligned}
     \sqrt{n} \big(\frac{1}{n}\lambda_{0mj}^\top A_m^\top F_0^\top F_0 A_{m} \lambda_{0mj'} - \lambda_{0mj}^\top \lambda_{0mj'}\big) & =    \sqrt{n}  \big( \lambda_{0mj}^\top \frac{1}{n}F_{0m}^\top F_{0m} \lambda_{0mj'} - \lambda_{0mj}^\top \lambda_{0mj'}\big)  \\
     & = \sqrt{n} \big[\frac{1}{n}\sum_{i=1}^n (\eta_{0mi}^\top\lambda_{0mj})(\eta_{0mi}^\top \lambda_{0mj'})-\lambda_{0mj}^\top \lambda_{0mj'}\big]\\
     & \Rightarrow N(0, \Omega_{0mjj'}^2),
    \end{aligned}
\end{equation*}
where $$
\Omega_{0mjj'}^2 = \begin{cases}
    (\lambda_{0mj}^\top \lambda_{0mj'})^2 + ||\lambda_{0mj}||^2 ||\lambda_{0mj'}||^2, \quad &\text{if } j \neq j',\\
  2||\lambda_{0mj}||^4, \quad &\text{otherwise},
  \end{cases}
$$
since the $(\eta_{0mi}^\top\lambda_{0mj})(\eta_{0mi}^\top \lambda_{0mj'})$'s are independent with $E[(\eta_{0mi}^\top\lambda_{0mj})(\eta_{0mi}^\top \lambda_{0mj'})] = \lambda_{0mj}^\top \lambda_{0mj'}$ and $\text{var}[(\eta_{0mi}^\top\lambda_{0mj})(\eta_{0mi}^\top \lambda_{0mj'})] = \Omega_{0 mjj'}^2$.
For $j \neq j'$, we have 
\begin{equation*}
    \begin{aligned}
    \frac{\sqrt{n}}{n} \big(\lambda_{0mj}^\top A_m^\top F_0^\top e_{m}^{(j')}  & + \lambda_{0mj'}^\top A_{m}^\top F_0^\top e_{m}^{(j)}\big)\\
    & =  \frac{\sqrt{n}}{n}   \big( \lambda_{0mj}^\top  F_{0m}^\top e_{m}^{(j')}  + \lambda_{0mj'}^\top  F_{0m'}^\top e_{m}^{(j)}\big) \sim N(0, s_{mjj'}^2(F_0)) \overset{d}{=} s_{mjj'}(F_0) Z_{mjj'},
    \end{aligned}
\end{equation*}
where $Z_{mjj'} \sim N(0,1)$, $Z_{mjj'}$ is independent of $F_0$, and $$
s_{mjj'}^2(F_0) = \sigma_{0mj'}^2\lambda_{0mj}^\top \frac{1}{n}F_{0m}^\top F_{0m}\lambda_{0mj} + \sigma_{0mj}^2\lambda_{0mj'}^\top \frac{1}{n}F_{0m}^\top F_{0m}\lambda_{0mj'}.
$$
Let $s_{0mjj'}^2 =  \sigma_{0mj'}^2||\lambda_{0mj}||^2   + \sigma_{0mj}^2||\lambda_{0mj'}||^2$ and note that 
\begin{equation*}
    \begin{aligned}
    \frac{\sqrt{n}}{n} \big(  \lambda_{0mj}^\top A_m^\top F_0^\top F_0 A_{m} \lambda_{0mj'} &+ \lambda_{0mj}^\top A_m^\top F_0^\top e_{m}^{(j')}   + \lambda_{0mj'}^\top A_{m}^\top F_0^\top e_{m}^{(j)}\big)\\
    & =  \sqrt{n}   \big[\lambda_{0mj}^\top \frac{1}{n} F_{0m}^\top F_{0m} \lambda_{0mj'}  + s_{0mjj'} Z_{mjj'} + (s_{mjj'}(F_0)-s_{0mjj'}) Z_{mjj'}\big]\\
    & \Rightarrow N(\lambda_{0mj}^\top \lambda_{0mj'}, \Omega_{0mjj'}^2 + s_{0mjj'}^2)
    \end{aligned}
\end{equation*}
as $(s_{mjj'}(F_0)-s_{0mjj'}) Z_{mjj'} \lesssim \frac{1}{\sqrt{n}}$ with probability at least $1-o(1)$, as 
$$
(s_{mjj'}(F_0)-s_{0mjj'}) Z_{mjj'} = \frac{s_{mjj'}^2(F_0)-s_{0mjj'}^2}{s_{mjj'}(F_0)+s_{0mjj'}} Z_{mjj'}
$$
and $|s_{mjj'}^2(F_0)-s_{0mjj'}^2| \lesssim \frac{1}{\sqrt{n}}$ since $||F_{0m}^\top F_{0m} - n I_{k_m}|| /n \lesssim \frac{1}{\sqrt{n}}$ with probability at least $1-o(1)$ by Corollary 5.35 of \citet{vershynin_12}, $|s_{mjj'}(F_0)+s_{0mjj'}| \asymp 1$ as $||F_{0m}|| \lesssim \sqrt{n}$ with probability at least $1-o(1)$ by Corollary 5.35, and $|s_{mjj'}(F_0)+s_{0mjj'}| \geq s_{0mjj'} \geq c_\Lambda c_\sigma$. \\
For $j=j'$, we have 
\begin{equation*}
    \begin{aligned}
   2 \frac{\sqrt{n}}{n} \lambda_{0mj}^\top A_m^\top F_0^\top e_{m}^{(j')}  & = 2 \frac{\sqrt{n}}{n}  \lambda_{0mj}^\top  F_{0m}^\top e_{m}^{(j)} \big| F_0 \sim N(0, s_{mjj'}^2(F_0)) \overset{d}{=} s_{mjj'}(F_0) Z_{mj} \\
    \end{aligned}
\end{equation*}
where $Z_{mj} \sim N(0,1)$ and $Z_{mj}$ is independent of $F_0$, and $$
\chi_{mj}^2(F_0) = 4 \sigma_{0mj}^2\lambda_{0mj}^\top \frac{1}{n}F_{0m}^\top F_{0m}\lambda_{0mj}.
$$
Repeating the same steps as above with  $s_{0mjj}^2 = 4 \sigma_{0mj'}^2||\lambda_{0mj}||^2$ we obtain the desired result.
Next, for $m \neq m'$,  note that 
$$
\lambda_{0mj}^\top A_m^\top F_0^\top F_0 A_{m'} \lambda_{0m'j'}  = \lambda_{0mj}^\top A_m^\top B_{mm'} F_{0mm'}^{c \top}  F_{0mm'}^c B_{mm'}^\top A_{m'} \lambda_{0m'j'}
$$
for the matrix $B_{mm'} \in \mathbb R^{k_0 \times k_{mm'}^c}$ described in the proof for Theorem \ref{thm:posterior_contraction}.  
Hence, similarly as above, by the central limit theorem, we have 
\begin{equation*}
    \begin{aligned}
     \sqrt{n} \big(\frac{1}{n}\lambda_{0mj}^\top A_m^\top F_0^\top F_0 A_{m'} \lambda_{0m'j'} - \lambda_{0mj}^\top \lambda_{0mj'}\big) 
    & = \sqrt{n} \big[\frac{1}{n}\sum_{i=1}^n (\eta_{0i}^{\top}  A_m\lambda_{0mj})(\eta_{0i}^{\top} A_{m'}\lambda_{0m'j'})-\lambda_{0mj}^\top \lambda_{0m'j'}\big]\\
     & \Rightarrow N(0, \Omega_{0mm'jj'}^2),
    \end{aligned}
\end{equation*}
where $$
\Omega_{0mm'jj'}^2 =    (\lambda_{0mj}^\top A_m^\top A_{m'} \lambda_{0m'j'})^2 + ||\lambda_{0mj}||^2 || \lambda_{0m'j'}||^2.
$$ 
Analogously, we have 
\begin{equation*}
    \begin{aligned}
    \frac{\sqrt{n}}{n} \big(\lambda_{0mj}^\top A_m^\top F_0^\top e_{m'}^{(j')}  & + \lambda_{0m'j'}^\top A_{m'}^\top F_0^\top e_{m}^{(j)}\big)\\
    & =  \frac{\sqrt{n}}{n}   \big( \lambda_{0mj}^\top  F_{0m}^\top e_{m'}^{(j')}  + \lambda_{0m'j'}^\top  F_{0m'}^\top e_{m}^{(j)}\big) \sim N(0, \chi_{mm'jj'}^2(F_0)) \overset{d}{=} s_{mjj'}(F_0) Z_{mm'jj'},
    \end{aligned}
\end{equation*}
where $Z_{mmjj'} \sim N(0,1)$, $Z_{mmjj'}$ is independent of $F_0$, and $$
s_{mm'jj'}^2(F_0) = \sigma_{0m'j'}^2\lambda_{0mj}^\top \frac{1}{n}F_{0m}^\top F_{0m}\lambda_{0mj} + \sigma_{0mj}^2\lambda_{0m'j'}^\top \frac{1}{n}F_{0m'}^\top F_{0m'}\lambda_{0mj'}.
$$
Applying the same steps as for the case $m=m'$, we let 
$$s_{0mmjj'}^2 =  \sigma_{0m'j'}^2||\lambda_{0mj}||^2   + \sigma_{0mj}^2||\lambda_{0m'j'}||^2,$$ and obtain 
\begin{equation*}
    \begin{aligned}
    \frac{\sqrt{n}}{n} \big(  \lambda_{0mj}^\top A_m^\top F_0^\top F_0 A_{m'} \lambda_{0m'j'} &+ \lambda_{0mj}^\top A_m^\top F_0^\top e_{m'}^{(j')}   + \lambda_{0m'j'}^\top A_{m'}^\top F_0^\top e_{m}^{(j)}\big) \\
    & \quad \Rightarrow N(\lambda_{0mj}^\top A_m^\top A_{m'} \lambda_{0mj'}, \Omega_{0mm'jj'}^2 + s_{0mm'jj'}^2).
    \end{aligned}
\end{equation*}
\end{proof}

\begin{proof}[of Theorem \ref{thm:bvm}]
    We follows similar steps as in Theorem 3.7 of \citet{fable}.     Recall that a sample for $\tilde \lambda_{ml}$ from $\tilde \Pi$ can be represented as $\tilde \lambda_{ml} = \hat \lambda_{ml} + \frac{\rho_m \tilde \sigma_{mj}}{\sqrt{n + \tau_m^{-2}}} \tilde e_{mj}$, where $\tilde e_{mj} \sim N_{k_0} (0, I_{k_0})$, $ \tilde \sigma_{mj} \sim IG(\nu_n/2 \nu_n \delta_{mj}^2/2)$.  
    Thus, $$\tilde \lambda_{mj}^\top \tilde \lambda_{m'j'} =  \hat \lambda_{ml}^\top  \hat \lambda_{m'j'} +  \frac{\tilde \sigma_{m'j'}\rho_{m'}}{\sqrt{n + \tau_{m'}^{-2}}} \hat \lambda_{ml}^\top \tilde e_{m'j'} + \frac{\tilde \sigma_{mj} \rho_m }{\sqrt{n + \tau_m^{-2}}} \hat \lambda_{m'l'}^\top \tilde e_{mj} +  \frac{\tilde \sigma_{mj} \tilde \sigma_{m'j'} \rho_m \rho_{m'}}{\sqrt{(n + \tau_{m}^{-2})(n + \tau_{m'}^{-2})}} \tilde e_{mj}^\top \tilde e_{m'j'}. $$
    First of all, with $\tilde \Pi$ probability at least $1-o(1)$, 
    \begin{equation*}
        \begin{aligned}
            \sqrt{n}\frac{\tilde \sigma_{mj} \tilde \sigma_{m'j'} \rho_m \rho_{m'}}{\sqrt{(n + \tau_{m}^{-2})(n + \tau_{m'}^{-2})}} |\tilde e_{mj}^\top \tilde e_{m'j'}| \lesssim \sqrt{n} \frac{1}{n} \asymp 1,
        \end{aligned}
    \end{equation*}
since $\frac{1}{\sqrt{(n + \tau_{m}^{-2})(n + \tau_{m'}^{-2})}} \asymp \frac{1}{n}$, $\tilde \sigma_{mj} \lesssim 1$ and $\tilde \sigma_{m'j'} \lesssim 1 $ as a consequence of Lemma \ref{lemma:convergence_tilde_sigma}, $ |\tilde e_{mj}^\top \tilde e_{m'j'}| \leq ||\tilde e_{mj} || || \tilde e_{m'j'} || \lesssim 1$ with $\tilde \Pi$ probability at least $1-o(1)$. 
Let $\frac{1}{\sqrt{n + \tau_{l}^{-2}}} = \frac{1}{\sqrt{n}} + h_l$, where $h_l \asymp \frac{1}{n}$. 
Then,
\begin{equation*}
    \begin{aligned}
       \sqrt{n} \big( \frac{\tilde \sigma_{m'j'}\rho_{m'}}{\sqrt{n + \tau_{m'}^{-2}}} \hat \lambda_{ml}^\top \tilde e_{m'j'} + \frac{\tilde \sigma_{mj} \rho_m }{\sqrt{n + \tau_m^{-2}}} \hat \lambda_{m'l'}^\top \tilde e_{mj} \big) =&  \tilde \rho_{m'} \sigma_{m'j'} \hat \lambda_{ml}^\top \tilde e_{m'j'} + \rho_m  \tilde \sigma_{mj} \hat \lambda_{m'l'}^\top \tilde e_{mj} \\
       & + \sqrt{n}h_{m'}\tilde \rho_{m'} \sigma_{m'j'} \hat \lambda_{ml}^\top \tilde e_{m'j'} + \sqrt{n}h_m \rho_m  \tilde \sigma_{mj} \hat \lambda_{m'l'}^\top \tilde e_{mj},
    \end{aligned}
\end{equation*}
with 
\begin{equation*}
    \begin{aligned}
       |\sqrt{n}h_{m'}\tilde \rho_{m'} \sigma_{m'j'} \hat \lambda_{ml}^\top \tilde e_{m'j'} + \sqrt{n}h_m \rho_m  \tilde \sigma_{mj} \hat \lambda_{m'l'}^\top \tilde e_{mj} | \lesssim \frac{1}{\sqrt{n}} 
    \end{aligned}
\end{equation*}
with $\tilde \Pi$ probability at least $1-o(1)$.
Next, for $m, m' = 1, \dots, M$, $j=1, \dots, p_m$ and $j' = 1, \dots, p_{m'}$
$$
 T^2_{mm'jj'}(\rho_m, \rho_{m'}) = \begin{cases}
  4 \rho_m^2 \tilde \sigma_{mj}^2 ||\hat \lambda_{mj}||^2, \quad &\text{if } j=j' \text{ and } m=m',\\
    \rho_{m'}^2 \tilde \sigma_{m'j'}^2 ||\hat \lambda_{mj}||^2 + \rho_m ^2 \tilde \sigma_{mj}^2 ||\hat \lambda_{m'j'}||^2 , \quad &  \text{otherwise},
 \end{cases}
$$
and 
$$
T^2_{0mm'jj'}(\rho_m, \rho_{m'})  = \begin{cases}
4\rho_m^2 \sigma_{0mj}^2 || \lambda_{0mj}||^2, \quad &\text{if } j=j' \text{ and } m=m',\\
  \rho_{m'}^2  \sigma_{0m'j'}^2 || \lambda_{0mj}||^2 + \rho_m^2 \sigma_{0mj}^2 || \lambda_{0m'j'}||^2 , \quad &  \text{otherwise},
\end{cases}
$$
then, we have 
\begin{equation*}
    \begin{aligned}
      \tilde \rho_{m'} \sigma_{m'j'} \hat \lambda_{ml}^\top \tilde e_{m'j'} + \rho_m  \tilde \sigma_{mj} \hat \lambda_{m'l'}^\top \tilde e_{mj}  =& T_{0mjj'}(\rho_m, , \rho_{m'}) Z_{mm'jj'} \\
      &+ (T_{mm'jj'}(\rho_m, , \rho_{m'}) -  T_{0mm'jj'}(\rho_m, , \rho_{m'}))Z_{mm'jj'}
    \end{aligned}
\end{equation*}
where $Z_{mm'jj'} \sim N(0,1)$. Moreover, since $||\hat \lambda_{mj}|| \overset{pr}{\to} ||\lambda_{0mj}||$, for all $m=1, \dots, M$ and $j=1, \dots, p_m$ by Lemma \ref{lemma:norm_lambda_hat} and $\tilde \sigma_{mj}^2$ contracts around $\sigma_{0mj}^2$ under $\tilde \Pi$, for all $m=1, \dots, M$ and $j=1, \dots, p_m$, by Lemma \ref{lemma:convergence_tilde_sigma}, we have $T_{mm'jj'}(\rho_m, \rho_{m'}) -  T_{0mm'jj'}(\rho_m, , \rho_{m'}) \overset{pr}{\to} 0$. An application of Lemma F.2 of \citet{fable} concludes the proof.    
\end{proof}

\subsection{Additional Results}\label{sec:additional_results}
\begin{proposition}\label{prop:consistency_S}
      If Assumptions \ref{assumption:model}--\ref{assumption:hyperparameters} hold, and $n/\sqrt{p_{\min}} = o(1)$, 
       where $p_{\min} = \min_{m=1, \dots M}p_m$. For $m=1, \dots M$ and $j,j' = 1, \dots, p_m$, let $\hat S_{mjj'}$ be
       \begin{equation*}\label{eq:S_sq_hat}
              \hat S_{mm'jj'}^2 =    \begin{cases}
                 4\delta_{mj'}^2||\hat \lambda_{mj}||^2 + 2||\hat \lambda_{mj'}||^4 , \quad & \text{if } j=j' \text{ and } m=m',\\
                 \begin{aligned}
                     &\delta_{mj'}^2||\hat \lambda_{mj}||^2   + \delta_{mj}^2||\hat \lambda_{mj'}||^2 + (\hat \lambda_{mj}^\top \hat\lambda_{mj'})^2 \\
                     &+ ||\hat \lambda_{mj}||^2 ||\hat \lambda_{mj'}||^2, \\
                 \end{aligned}
              \quad & \text{if } j \neq j'  \text{ and } m=m',\\
              \begin{aligned}
             & \delta_{m'j'}^2||\hat\lambda_{mj}||^2   + \delta_{mj}^2||\hat \lambda_{m'j'}||^2 + (\hat \lambda_{mj}^\top  \hat \lambda_{m'j'})^2 \\
             &+  || \hat \lambda_{mj}||^2 || \hat \lambda_{m'j'}||^2,     
             \end{aligned}
 \quad & \text{otherwise}.
             \end{cases}
       \end{equation*}
       Then, as $n, p_1, \dots, p_M \to \infty$, for all $m, m'=1, \dots, M$,
       $$
        \hat S_{mm'jj'} \overset{pr}{\to}  S_{0mm'jj'},
       $$
where $ S_{0mm'jj'}$ is defined in Theorem \ref{thm:clt}.
\end{proposition}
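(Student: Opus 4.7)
The approach is to observe that $\hat S_{mm'jj'}^2$ is a fixed continuous (polynomial) function of the building blocks $\delta_{mj}^2, \delta_{m'j'}^2, \|\hat\lambda_{mj}\|^2, \|\hat\lambda_{m'j'}\|^2$, and $\hat\lambda_{mj}^\top \hat\lambda_{m'j'}$, and $S_{0mm'jj'}^2$ is the same function evaluated at the population counterparts $\sigma_{0mj}^2, \sigma_{0m'j'}^2, \|\lambda_{0mj}\|^2, \|\lambda_{0m'j'}\|^2$, and $\lambda_{0mj}^\top A_m^\top A_{m'} \lambda_{0m'j'}$. By the continuous mapping theorem, it then suffices to establish in-probability convergence of each building block.

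I would then dispatch the blocks one by one. For the cross-products $\hat\lambda_{mj}^\top \hat\lambda_{m'j'} \overset{pr}{\to} \lambda_{0mj}^\top A_m^\top A_{m'}\lambda_{0m'j'}$, this is an immediate consequence of Theorem~\ref{thm:clt}: the central limit theorem gives
\[
\sqrt{n}\bigl(\hat\lambda_{mj}^\top \hat\lambda_{m'j'} - \lambda_{0mj}^\top A_m^\top A_{m'}\lambda_{0m'j'}\bigr) \Rightarrow N(0, S_{0mm'jj'}^2),
\]
and dividing by $\sqrt{n}$ produces convergence in probability to zero. Specialising to $m=m', j=j'$ gives $\|\hat\lambda_{mj}\|^2 \overset{pr}{\to} \|\lambda_{0mj}\|^2$ (this is also the content of Lemma~\ref{lemma:norm_lambda_hat} used in the proof of Theorem~\ref{thm:bvm}), and to $m=m', j\ne j'$ gives the off-diagonal inner product within a view.

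For the residual variance estimates, the claim $\delta_{mj}^2 \overset{pr}{\to} \sigma_{0mj}^2$ is precisely Lemma~\ref{lemma:convergence_tilde_sigma} (invoked in the proofs of Theorems~\ref{thm:posterior_contraction} and~\ref{thm:bvm}), so I would cite it. The underlying argument is that $\nu_n \delta_{mj}^2 = \nu_0 \sigma_0^2 + \|y_m^{(j)}\|^2 - \mu_{\lambda_j}^\top K^{-1}\mu_{\lambda_j}$, so $\delta_{mj}^2 = n^{-1}\|(I_n-UU^\top)y_m^{(j)}\|^2 + o_{pr}(1)$ after expanding the ridge correction; replacing $UU^\top$ by $U_0U_0^\top$ via Proposition~\ref{prop:recovery_P_0} and using $y_m^{(j)} = F_0 A_m \lambda_{0mj} + e_m^{(j)}$ together with Corollary~5.35 of \citet{vershynin_12} to control the projected signal and the cross-terms, the residual sum of squares is $\|(I_n - U_0 U_0^\top)e_m^{(j)}\|^2/n + o_{pr}(1) = \sigma_{0mj}^2 + o_{pr}(1)$ by the law of large numbers applied to the chi-squared sum of $n-k_0$ independent Gaussians scaled by $\sigma_{0mj}^2$.

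The main (minor) obstacle is bookkeeping of the case analysis: the three branches of $\hat S_{mm'jj'}^2$ (diagonal within a view, off-diagonal within a view, cross-view) each require a slightly different combination of the block convergences above, but all arguments reduce to the continuous mapping theorem applied to a finite-degree polynomial. Assembling the pieces yields $\hat S_{mm'jj'}^2 \overset{pr}{\to} S_{0mm'jj'}^2$ in every case, and taking square roots (another continuous map on the positive reals, recalling $S_{0mm'jj'}^2 \geq c_\Lambda^2 c_{\sigma 1} > 0$ by Assumptions~\ref{assumption:sigma} and~\ref{assumption:Lambda}) gives the stated result $\hat S_{mm'jj'} \overset{pr}{\to} S_{0mm'jj'}$.
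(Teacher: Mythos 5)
Your proof is correct and follows essentially the same route as the paper: consistency of the inner products $\hat\lambda_{mj}^\top\hat\lambda_{m'j'}$ from Theorem~\ref{thm:clt}, consistency of $\delta_{mj}^2$ for $\sigma_{0mj}^2$, and the continuous mapping theorem to assemble the pieces. One small slip: the convergence $\delta_{mj}^2 \overset{pr}{\to} \sigma_{0mj}^2$ is the content of Lemma~\ref{lemma:delta_j} (whose proof is exactly the decomposition you sketch), not Lemma~\ref{lemma:convergence_tilde_sigma}, which concerns the posterior samples $\tilde\sigma_{mj}^2$.
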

\begin{proof}[of Proposition \ref{prop:consistency_S}]
    By Theorem \ref{thm:clt}, we have $\hat \lambda_{mj}^\top \hat \lambda_{m'j'} \overset{pr}{\to} \lambda_{0mj}^\top A_m^\top A_{m'} \lambda_{0m'j'}$ for all $j= 1, \dots, p_m $, $j' = 1, \dots, p_{m'}$ and $m, m' = 1, \dots, M$, and $\hat \lambda_{mj}^\top \hat \lambda_{mj'} \overset{pr}{\to} \lambda_{0mj}^\top  \lambda_{0mj'}$, for all $j,j'= 1, \dots, p_m $, and $m= 1, \dots, M$.  By Lemma \ref{lemma:delta_j}, we have $\delta_{mj}^2 \overset{pr}{\to} \sigma_{0mj}^2$ for all $j= 1, \dots, p_m $, and $m= 1, \dots, M$. An application of continuous mapping theorem proves the result.
\end{proof}

\subsection{Additional lemmas}\label{sec:additional_lemmas}

\begin{lemma}\label{lemma:sigma_hat}
    Let $\hat \sigma_{mj}^2  =  ||(I - U_m U_m)^\top y_m^{(j)}||^2/n$ for $j=1, \dots p_m$, $m=1, \dots, M$. Then, if Assumptions \ref{assumption:model}--\ref{assumption:hyperparameters} hold, as $n, p_m \to \infty$,  
    $$
    \hat \sigma_{mj}^2 \overset{d}{=} \frac{\sigma_{0mj}^2}{n}\chi_{n-k_m}^2 + r_{mj},
$$
where $\max_{m=1, \dots, M; j=1, \dots, p_m}|r_{mj}| \lesssim \frac{1}{n} + \frac{1}{p_{\min}}$ with probability at least $1-o(1)$,  and 
    $$\hat \sigma_{mj}^2 \overset{pr}{\to} \sigma_{0mj}^2.$$
\end{lemma}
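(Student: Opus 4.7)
The plan is to replace the estimated projection $P_m = U_m U_m^\top$ with the ``oracle'' projection $P_{0m} = U_{0m} U_{0m}^\top$ onto the column space of $F_{0m} = F_0 A_m$, control the replacement error using Proposition \ref{prop:reovery_P_m}, and then observe that the resulting oracle residual has an exact chi-square distribution.

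First, I would decompose $y_m^{(j)} = F_{0m} \lambda_{0mj} + e_m^{(j)}$ with $e_m^{(j)} \sim N_n(0, \sigma_{0mj}^2 I_n)$. Since the column space of $F_{0m}$ is spanned by $U_{0m}$, we have $(I - P_{0m}) F_{0m} = 0$, hence $(I - P_{0m}) y_m^{(j)} = (I - P_{0m}) e_m^{(j)}$. Conditionally on $F_0$ (and therefore on $U_{0m}$), the vector $(I-P_{0m}) e_m^{(j)}$ is Gaussian living in a fixed subspace of dimension $n-k_m$, so $\|(I-P_{0m}) e_m^{(j)}\|^2 / \sigma_{0mj}^2 \sim \chi^2_{n-k_m}$. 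This conditional distribution does not depend on $F_0$, so it is also the marginal distribution.

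Next, writing $I - P_m = (I - P_{0m}) + (P_{0m} - P_m)$ and expanding $n \hat\sigma_{mj}^2 = \|(I-P_m) y_m^{(j)}\|^2$ gives
\begin{equation*}
n \hat\sigma_{mj}^2 = \|(I - P_{0m}) e_m^{(j)}\|^2 + 2 \langle (I-P_{0m}) e_m^{(j)}, (P_{0m}-P_m) y_m^{(j)} \rangle + \|(P_{0m}-P_m) y_m^{(j)}\|^2.
\end{equation*}
Dividing by $n$ yields the decomposition $\hat\sigma_{mj}^2 = (\sigma_{0mj}^2/n)\chi^2_{n-k_m} + r_{mj}$ with $r_{mj}$ equal to the last two terms divided by $n$. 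To bound $r_{mj}$, I would apply Cauchy-Schwarz and the operator-norm inequality, using Proposition \ref{prop:reovery_P_m} to get $\|P_{0m} - P_m\| \lesssim 1/n + 1/p_m$, together with uniform Gaussian tail bounds giving $\max_j \|e_m^{(j)}\|^2 \lesssim n$ and $\max_j \|y_m^{(j)}\|^2 \lesssim n$ with probability $1-o(1)$ (via union bound and chi-square concentration, using Assumption \ref{assumption:dimensions} to handle the $\log p_m$ factor). Putting these together,
\begin{equation*}
|r_{mj}| \lesssim \frac{1}{n}\bigl(\sqrt{n}\cdot(1/n + 1/p_m)\cdot\sqrt{n}\bigr) + \frac{1}{n}\bigl((1/n+1/p_m)^2 \cdot n\bigr) \lesssim \frac{1}{n} + \frac{1}{p_m},
\end{equation*}
uniformly in $j$, and taking $m$-maxima gives the stated $1/n + 1/p_{\min}$ rate.

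Finally, for the convergence in probability statement, since $\chi^2_{n-k_m}/n \overset{pr}{\to} 1$ by the law of large numbers (with $k_m$ fixed by Assumption \ref{assumption:hyperparameters}) and $r_{mj} \overset{pr}{\to} 0$ by the first part, Slutsky's theorem gives $\hat\sigma_{mj}^2 \overset{pr}{\to} \sigma_{0mj}^2$. The main obstacle I anticipate is the uniform control of $\|e_m^{(j)}\|$ and $\|y_m^{(j)}\|$ across all $j = 1,\dots,p_m$: this requires a union bound that is tight enough given Assumption \ref{assumption:dimensions} (which I read as $(\log p_m)/n = o(1)$), but is otherwise standard Gaussian concentration.
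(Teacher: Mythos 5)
Your proposal is correct and follows essentially the same route as the paper's proof: the same decomposition $I - P_m = (I - P_{0m}) + (P_{0m} - P_m)$, the same exact $\chi^2_{n-k_m}$ identification for the oracle residual, and the same control of the cross and quadratic remainder terms via Proposition \ref{prop:reovery_P_m} together with $\max_j\|e_m^{(j)}\| \lesssim \sqrt{n}$ and $\max_j\|y_m^{(j)}\| \lesssim \sqrt{n}$ (the paper's Lemma \ref{lemma:y_j}), finishing with the weak law of large numbers. No gaps.
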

\begin{proof}[of Lemma \ref{lemma:sigma_hat}]
    Note that \begin{equation*}
        \begin{aligned}
            (I - U_m U_m^\top) y_m^{(j)} &= (I  - U_{0m}U_{0m}^\top) y_m^{(j)} + (U_{0m}U_{0m}^\top-U_m U_m^\top) y_m^{(j)} \\
            &= (I  - U_{0m}U_{0m}^\top) \epsilon_m^{(j)} + (U_{0m}U_{0m}^\top-U_m U_m^\top) y_m^{(j)} 
        \end{aligned}
    \end{equation*}
    Thus,
    \begin{equation*}
    \begin{aligned}
         n\hat \sigma_{mj}^2 = &||(I  - U_{0m}U_{0m}^\top)\epsilon_m^{(j)}||^2 + ||(U_{0m}U_{0m}^\top-U_m U_m^\top) y_m^{(j)} ||^2 \\
        &+ 2 \epsilon_m^{(j)\top} (I  - U_{0m}U_{0m}^\top) (U_{0m}U_{0m}^\top-U_m U_m^\top) y_m^{(j)}. 
    \end{aligned}
    \end{equation*}
    In particular, $$||(I  - U_{0m}U_{0m}^\top)\epsilon_m^{(j)}||^2  \overset{d}{=} \sigma_{0mj}^2\chi_{n-k_m}^2$$
    and, with probability at least $1-o(1)$,
     \begin{equation*}
    \begin{aligned}
         ||(U_{0m}U_{0m}^\top-U_m U_m^\top) y_m^{(j)} || &\leq  ||U_{0m}U_{0m}^\top-U_m U_m^\top|||| y_m^{(j)} ||\lesssim \frac{1}{\sqrt{n}} + \frac{\sqrt{n}}{p_m},\\
        |\epsilon_m^{(j)\top} (I  - U_{0m}U_{0m}^\top) (U_{0m}U_{0m}^\top-U_m U_m^\top) y_m^{(j)}|  &\lesssim ||\epsilon_m^{(j)}|| ||I  - U_{0m}U_{0m}^\top||  ||(U_{0m}U_{0m}^\top-U_m U_m^\top) y_m^{(j)} || \lesssim 1 + \frac{n}{p_m}.
    \end{aligned}
    \end{equation*}
since $  ||U_{0m}U_{0m}^\top-U_m U_m^\top|| \lesssim \frac{1}{n} + \frac{1}{p_m}$, $||y_m^{(j)}|| \asymp ||\epsilon_m^{(j)}|| \asymp \sqrt{n}$, with probability at least $1-o(1)$, by Proposition \ref{prop:reovery_P_m} and Lemma \ref{lemma:y_j} respectively.
Combining all of the above, we get 
$$
\hat \sigma_{mj}^2 \overset{d}{=} \frac{\sigma_{0mj}^2}{n}\chi_{n-k_m}^2 + r_{mj},
$$
where $\max_{m=1, \dots, M; j=1, \dots, p_m}|r_{mj}| \lesssim \frac{1}{n} + \frac{1}{p_m}$ with  probability at least $1-o(1)$. The second result follows from a straightforward application of the weak law of large numbers. 
\end{proof}

\begin{lemma}\label{lemma:delta_j}
    Let $\delta_{mj}^2$ be the scalar defined in \eqref{eq:posterior_params} for $j=1, \dots p_m$, $m=1, \dots, M$. Then, if Assumptions \ref{assumption:model}--\ref{assumption:hyperparameters} hold, as $n,p_1, \dots, p_M \to \infty$,  
    $$
   \delta_{mj}^2 \overset{d}{=} \frac{\sigma_{0mj}^2}{n}\chi_{n-k_0}^2 + s_{mj},
$$
where $\max_{m=1, \dots, M; j=1, \dots, p_m}|s_{mj}| \lesssim \frac{1}{n} + \frac{1}{p_{\min}}$ with probability at least $1-o(1)$,  and 
    $$\delta_{mj}^2\overset{pr}{\to} \sigma_{0mj}^2.$$
\end{lemma}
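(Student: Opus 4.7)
The plan is to follow the same template used in the proof of Lemma \ref{lemma:sigma_hat}, but with the global projection $UU^\top$ in place of the view-specific $U_m U_m^\top$, which means invoking Proposition \ref{prop:recovery_P_0} in place of Proposition \ref{prop:reovery_P_m}. First I would rewrite $\delta_{mj}^2$ in a form that isolates a projected residual. Using \eqref{eq:posterior_params} together with $\hat F = \sqrt{n}\, U$, $K^{-1} = (n+\tau_m^{-2}) I_{k_0}$ and $\hat\lambda_{mj} = (n+\tau_m^{-2})^{-1}\hat F^\top y_m^{(j)}$, one obtains
\begin{equation*}
    \delta_{mj}^2 = \frac{1}{\nu_n}\left[ \nu_0 \sigma_0^2 + \|(I-UU^\top) y_m^{(j)}\|^2 + \frac{\tau_m^{-2}}{n+\tau_m^{-2}}\, y_m^{(j)\top} UU^\top y_m^{(j)} \right].
\end{equation*}
The last term is of order $O(1)$ since $\|y_m^{(j)}\|^2 \asymp n$ with high probability by Lemma \ref{lemma:y_j}, and the leading constant $\nu_0\sigma_0^2/\nu_n$ is $O(1/n)$. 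So both contribute $O(1/n)$ to $s_{mj}$ after dividing by $\nu_n$.

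Next I would analyze the key quantity $\|(I-UU^\top)y_m^{(j)}\|^2$ by decomposing around the true signal subspace. Writing $(I-UU^\top) = (I - U_0 U_0^\top) + (U_0 U_0^\top - UU^\top)$, and using the crucial identity $(I - U_0 U_0^\top) F_0 A_m = 0$ (because the columns of $F_0 A_m$ lie in $\text{span}(F_0)$), the leading part reduces to $(I - U_0 U_0^\top)\epsilon_m^{(j)}$, whose squared norm is exactly $\sigma_{0mj}^2 \chi_{n-k_0}^2$. Expanding the square then gives
\begin{equation*}
\|(I-UU^\top)y_m^{(j)}\|^2 = \sigma_{0mj}^2 \chi_{n-k_0}^2 + \|(U_0 U_0^\top - UU^\top)y_m^{(j)}\|^2 + 2\epsilon_m^{(j)\top}(I-U_0U_0^\top)(U_0 U_0^\top - UU^\top) y_m^{(j)}.
\end{equation*}

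The main obstacle is controlling these two cross/residual terms so that their contribution to $s_{mj}$ stays at order $1/n + 1/p_{\min}$. Proposition \ref{prop:recovery_P_0} provides $\|UU^\top - U_0 U_0^\top\| \lesssim 1/n + 1/p_{\min}$ with probability $1-o(1)$, and Lemma \ref{lemma:y_j} gives $\|y_m^{(j)}\| \asymp \|\epsilon_m^{(j)}\| \asymp \sqrt n$ with the same probability. Combining these, the second term is $O\!\left(n(1/n+1/p_{\min})^2\right)$ and the cross term is $O\!\left(n(1/n+1/p_{\min})\right)$. Dividing by $\nu_n \asymp n$ yields contributions of order $(1/n+1/p_{\min})^2$ and $1/n + 1/p_{\min}$ respectively, both absorbed into $s_{mj}$. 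Finally, one must convert $\sigma_{0mj}^2 \chi_{n-k_0}^2 / \nu_n$ to $\sigma_{0mj}^2 \chi_{n-k_0}^2 / n$; the discrepancy is $\sigma_{0mj}^2\chi_{n-k_0}^2 \cdot \nu_0/(n\nu_n) = O(1/n)$ with high probability, which is again absorbed into $s_{mj}$.

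The convergence in probability $\delta_{mj}^2 \overset{pr}{\to}\sigma_{0mj}^2$ then follows by the weak law of large numbers applied to $\chi_{n-k_0}^2/n$ together with the $o(1)$ bound on $s_{mj}$. Since Assumption \ref{assumption:sigma} gives uniform bounds on $\sigma_{0mj}^2$, and the spectral-norm bound from Proposition \ref{prop:recovery_P_0} is independent of $j$, the bound on $s_{mj}$ can be taken uniformly over $j$ and $m$, giving the stated $\max_{m,j}|s_{mj}|$ control.
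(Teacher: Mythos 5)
Your proposal is correct and follows essentially the same route as the paper's proof: the same algebraic rewriting of $\delta_{mj}^2$ around the projected residual, the same reduction of $(I-U_0U_0^\top)y_m^{(j)}$ to $(I-U_0U_0^\top)\epsilon_m^{(j)}$ yielding the $\sigma_{0mj}^2\chi^2_{n-k_0}$ term, and the same control of the remainder via Proposition \ref{prop:recovery_P_0} and Lemma \ref{lemma:y_j}, with the $\nu_n$-versus-$n$ and prior-constant discrepancies absorbed into $s_{mj}$. No gaps.
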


\begin{proof}[of Lemma \ref{lemma:delta_j}]
    Note that \begin{equation*}
        \begin{aligned}
        \delta_{mj}^2 =&  \frac{1}{n} || (I - U_0 U_0^\top) y_m^{(j)}||^2 + \frac{\nu_0\delta_0^2}{\nu_n} + \frac{1}{n}y_m^{(j)\top}(U_0 U_0^\top - U U^\top)  y_m^{(j)} + \frac{1}{n}  \big(1 - \frac{n}{n+\tau_m^{-2}}\big) y_m^{(j)\top} U U^\top y_m^{(j)} \\
        &- \big(\frac{1}{\nu_n} + \frac{1}{n}\big) y_m^{(j)\top} \big(I_n - U U^\top\big) y_m^{(j)} 
        \end{aligned}
    \end{equation*}
    where 

    \begin{equation*}
    \begin{aligned}
      \frac{1}{n} || (I - U_0 U_0^\top) y_m^{(j)}||^2 =  \frac{1}{n} || (I - U_0 U_0^\top) \epsilon_m^{(j)}||^2  &\sim \frac{\sigma_{0mj}^2}{n} \psi_m, \quad\psi_m  \sim \chi_{n-k_0}^2,\\
        \frac{\nu_0\delta_0^2}{\nu_n} &\lesssim \frac{1}{n},\\
\max_{j=1, \dots, p_m}       \frac{1}{n} \big|y_m^{(j)\top}(U_0 U_0^\top - U U^\top)  y_m^{(j)} \big| & \lesssim  \big(\frac{1}{n} + \frac{1}{p_{\min}}\big),\\
\max_{j=1, \dots, p_m}      \frac{1}{n}  \big(1 - \frac{n}{n+\tau_m^{-2}}\big) \big|y_m^{(j)\top} U U^\top y_m^{(j)}\big|  &\lesssim \frac{1}{n^2} \sqrt{n} = \frac{1}{n^{3/2}},\\
   \max_{j=1, \dots, p_m}   \big(\frac{1}{\nu_n} + \frac{1}{n}\big)\big| y_m^{(j)\top} \big(I_n - U U^\top\big) y_m^{(j)} \big|& \lesssim \frac{1}{n^2} n = \frac{1}{n},
    \end{aligned}
    \end{equation*}
   since, with probability at least $1-o(1)$, $  ||U_{0}U_{0}^\top-U U^\top|| \lesssim 1/n + 1/p_{\min}$ by Proposition \ref{prop:recovery_P_0}, $\max_{j=1, \dots, p_m} ||y_m^{(j)}||  \asymp \sqrt{n}$ by Lemma \ref{lemma:y_j} respectively, and $1 - \frac{n}{n+\tau_m^{-2}} \asymp \frac{1}{n}$.
The second result follows from a straightforward application of the weak law of large numbers. 
\end{proof}

\begin{lemma}\label{lemma:y_j}
    Suppose Assumptions \ref{assumption:model}--\ref{assumption:sigma} hold. Then, with probability at least $1-o(1)$,
    \begin{equation*}
       \max_{j=1, \dots, p_m} ||e_m^{(j)}|| \lesssim \sqrt{n}, \quad \max_{j=1, \dots, p_m} ||y_m^{(j)}|| \lesssim \sqrt{n}, \quad (m=1, \dots, M).
    \end{equation*}
\end{lemma}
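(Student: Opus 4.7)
The plan is to decompose $y_m^{(j)} = F_0 A_m \lambda_{0mj} + e_m^{(j)}$ and control each piece separately, first proving the noise-column bound and then combining it with an operator-norm bound on the signal part via the triangle inequality.

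For the noise columns, I would use that under Assumption \ref{assumption:model} the entries of $E_m$ are independent with $e_m^{(j)} \sim N_n(0, \sigma_{0mj}^2 I_n)$, so $\|e_m^{(j)}\|^2 / \sigma_{0mj}^2 \sim \chi_n^2$. By the Laurent--Massart tail bound, for any $t>0$,
\[
pr\bigl(\|e_m^{(j)}\|^2 \geq \sigma_{0mj}^2 (n + 2\sqrt{nt} + 2t)\bigr) \leq e^{-t}.
\]
Choosing $t = 2 \log p_m$ and taking a union bound over $j = 1, \dots, p_m$, together with $\sigma_{0mj}^2 \leq c_{\sigma 2}$ from Assumption \ref{assumption:sigma}, yields
\[
\max_{j=1,\dots,p_m} \|e_m^{(j)}\|^2 \lesssim n + \sqrt{n \log p_m} + \log p_m
\]
with probability at least $1 - p_m^{-1}$. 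Since $\log p_m = o(n)$ (as implied by Assumption \ref{assumption:dimensions}), the right-hand side is $O(n)$, giving the desired $\max_j \|e_m^{(j)}\| \lesssim \sqrt{n}$.

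For the signal part, $\|F_0 A_m \lambda_{0mj}\| \leq \|F_0\| \, \|A_m\| \, \|\lambda_{0mj}\|$. Corollary 5.35 of \citet{vershynin_12} applied to the $n \times k_0$ Gaussian matrix $F_0$ gives $\|F_0\| \leq \sqrt{n} + \sqrt{k_0} + t$ with probability at least $1 - 2e^{-t^2/2}$, so with $t$ a moderate constant we obtain $\|F_0\| \lesssim \sqrt{n}$ with probability $1-o(1)$. The selection matrix $A_m$ has orthonormal columns and thus $\|A_m\| \leq 1$, while $\|\lambda_{0mj}\| \leq \sqrt{k_0}\, \|\Lambda_{0m}\|_\infty = O(1)$ uniformly in $j$ by Assumption \ref{assumption:Lambda}. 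Hence $\max_j \|F_0 A_m \lambda_{0mj}\| \lesssim \sqrt{n}$. The triangle inequality then gives $\max_j \|y_m^{(j)}\| \lesssim \sqrt{n}$ on the intersection of the two high-probability events, and a final union bound over the fixed number of views $M$ preserves the $1-o(1)$ probability.

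The only delicate step is the Gaussian tail/union bound for the noise columns, whose success hinges on the dimension condition $\log p_m = o(n)$; everything else reduces to standard operator-norm estimates and the boundedness conditions in Assumptions \ref{assumption:sigma}--\ref{assumption:Lambda}.
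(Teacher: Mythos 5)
Your proposal is correct and follows essentially the same route as the paper: a Laurent--Massart chi-square tail bound with a union bound over columns for the noise part (the paper cites their Lemma 1 directly and is more terse about the union-bound details you spell out), and the triangle inequality $\|y_m^{(j)}\| \leq \|F_{0m}\|\,\|\lambda_{0mj}\| + \|e_m^{(j)}\|$ combined with Corollary 5.35 of Vershynin and the boundedness of $\|\Lambda_{0m}\|_\infty$ for the signal part. No substantive differences.
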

\begin{proof}[of Lemma \ref{lemma:y_j}]
The first result follows from $\max_{j=1, \dots, p_m} || {e}_m^{(j)} || \lesssim \sqrt{n}$ Lemma 1 of \citet{laurent_massart} combined with Assumption \ref{assumption:sigma}. 
Next, consider the following 
\begin{equation*}
    \begin{aligned}
        ||y_m^{(j)}|| \leq ||F_{0m}|| ||{\lambda}_{0mj}|| + ||{e}_m^{(j)}||.
    \end{aligned}
\end{equation*}
Recall that, with probability at least $1-o(1)$,
$||{F}_{0m}|| \asymp \sqrt{n}$ by Corollary 5.35 of \citet{vershynin_12} and $\max_{j=1, \dots, p_m}  ||\mathbf{\lambda}_{0mj}|| \leq ||{{\Lambda}}_0 ||_\infty \sqrt{k_m} \asymp 1$ by Assumption \ref{assumption:Lambda}. The result follows.
\end{proof}

\begin{lemma}\label{lemma:Y_m}
    Suppose Assumptions \ref{assumption:model}--\ref{assumption:sigma} hold. Then, with probability at least $1-o(1)$,
    \begin{equation*}
     ||E_m|| \lesssim \sqrt{n} + \sqrt{p_m}, \quad   ||Y_m|| \lesssim \sqrt{np_m}, \quad (m=1, \dots, M).
    \end{equation*}
\end{lemma}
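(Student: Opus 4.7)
The plan is to reduce both bounds to a single high-probability spectral norm estimate for a matrix with independent standard Gaussian entries, namely Corollary 5.35 of \citet{vershynin_12} which has already been used repeatedly in the preceding arguments. This keeps the proof short and self-contained, since the only additional ingredient needed is Assumption \ref{assumption:sigma} to handle the nontrivial diagonal covariance of the noise rows and Assumption \ref{assumption:Lambda} to control the signal part of $Y_m$.

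First I would bound $\|E_m\|$. The rows of $E_m$ are independent $N_{p_m}(0, \Psi_m)$ with $\Psi_m = \operatorname{diag}(\sigma_{1m}^2, \dots, \sigma_{p_m m}^2)$, so I can write $E_m = Z_m \Psi_m^{1/2}$ where $Z_m \in \mathbb{R}^{n \times p_m}$ has i.i.d.\ $N(0,1)$ entries. By the submultiplicativity of the spectral norm,
\begin{equation*}
\|E_m\| \leq \|Z_m\| \cdot \|\Psi_m^{1/2}\| \leq \sqrt{c_{\sigma 2}} \, \|Z_m\|,
\end{equation*}
using Assumption \ref{assumption:sigma} to bound $\|\Psi_m^{1/2}\| = \max_j \sigma_{jm} \leq \sqrt{c_{\sigma 2}}$. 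Corollary 5.35 of \citet{vershynin_12} (applied with deviation parameter $t = \sqrt{\log n}$, say) yields $\|Z_m\| \leq \sqrt{n} + \sqrt{p_m} + \sqrt{\log n}$ with probability at least $1 - 2\exp(-\log n / 2) = 1 - o(1)$. Combined, $\|E_m\| \lesssim \sqrt{n} + \sqrt{p_m}$ with probability at least $1 - o(1)$.

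Next I would bound $\|Y_m\|$ via the triangle inequality on the model decomposition $Y_m = F_0 A_m \Lambda_{0m}^\top + E_m$:
\begin{equation*}
\|Y_m\| \leq \|F_0\| \cdot \|A_m\| \cdot \|\Lambda_{0m}\| + \|E_m\|.
\end{equation*}
Since $F_0 \in \mathbb{R}^{n \times k_0}$ has i.i.d.\ $N(0,1)$ entries and $k_0$ is fixed, another application of Corollary 5.35 of \citet{vershynin_12} gives $\|F_0\| \lesssim \sqrt{n}$ with probability at least $1 - o(1)$. The Boolean selector matrix $A_m$ satisfies $\|A_m\| \leq 1$, while $\|\Lambda_{0m}\| \asymp \sqrt{p_m}$ by Assumption \ref{assumption:Lambda}. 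Hence the signal term contributes $\lesssim \sqrt{n p_m}$. The already-proven bound $\|E_m\| \lesssim \sqrt{n} + \sqrt{p_m} \leq 2\sqrt{n p_m}$ is also of lower order, so adding these gives $\|Y_m\| \lesssim \sqrt{n p_m}$ with probability at least $1 - o(1)$, as claimed.

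There is no substantive obstacle here: both bounds are routine applications of a single classical concentration inequality for Gaussian matrices, and the only place any care is required is in handling the diagonal covariance $\Psi_m$ and the multiplication by the bounded $\|A_m\|$ and $\|\Lambda_{0m}\|$. A union bound over $m = 1, \dots, M$ is harmless since $M$ is fixed.
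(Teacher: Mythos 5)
Your proof is correct, and it follows the same overall structure as the paper's: bound the noise term $\|E_m\|$ by $\sqrt{n}+\sqrt{p_m}$, bound the signal term $\|F_0A_m\Lambda_{0m}^\top\|$ by $\sqrt{np_m}$ using Corollary 5.35 of \citet{vershynin_12} together with Assumption \ref{assumption:Lambda}, and combine via the triangle inequality. The one genuine difference is in how the noise term is handled. The paper invokes Corollary 3.11 of \citet{bandeira_van_handel_16}, which bounds the spectral norm of a Gaussian matrix with independent, heterogeneously-scaled entries directly in terms of $\sqrt{n}\,\sigma_{m,\max}$ and $\bigl(\sum_j \sigma_{0mj}^2\bigr)^{1/2}$, and then applies Assumption \ref{assumption:sigma} to the resulting bound. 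You instead whiten the noise by writing $E_m = Z_m\Psi_m^{1/2}$ (valid because $\Psi_m$ is diagonal and the rows of $E_m$ are i.i.d.\ $N_{p_m}(0,\Psi_m)$), use submultiplicativity with $\|\Psi_m^{1/2}\|\le\sqrt{c_{\sigma 2}}$, and apply the same Vershynin bound already used for $F_0$. Your route is slightly more self-contained, since it relies on a single concentration tool throughout; the paper's route via \citet{bandeira_van_handel_16} is the more general-purpose one (it would survive, e.g., row-dependent variance profiles where a clean whitening factorization is unavailable), but under Assumption \ref{assumption:sigma} the two arguments deliver the identical rate, and the union bound over the fixed number of views $M$ is harmless in both.
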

\begin{proof}[ of Lemma \ref{lemma:Y_m}] 
Let $\sigma_{m, max}^2 = \max_{j=1, \dots, p_m} \sigma_{0mj}^2$ and $\sigma_{m, sum}^2 = \sum_{j=1}^{p_m} \sigma_{0mj}^2 \leq p_m \sigma_{m, max}^2$. Then, by Corollary 3.11 in \citet{bandeira_van_handel_16}, we have $||E_m|| \lesssim \sqrt{n} \sigma_{m, max} + \sigma_{m, sum} \leq  \sigma_{m, max}(\sqrt{n} + \sqrt{p_m})$, with probability at least $1-o(1)$. The first result follows from Assumption \ref{assumption:sigma}. 
For the second result, note that $||F_m \Lambda_{0m}^\top|| \lesssim \sqrt{n p_m}$ with probability at least $1-o(1)$, since $||\Lambda_{0m}|| \asymp \sqrt{p_m}$ by Assumption \ref{assumption:Lambda} and $||F_m|| \asymp \sqrt{n}$ with probability at least $1-o(1)$ by Corollary 5.35 in \citet{vershynin_12}.
\end{proof}

\begin{lemma}\label{lemma:Lambda_m_hat}
Let $\hat \Lambda_m$ be the estimator defined in \eqref{eq:mu_Lambda}. Then, with probability at least $1 - o(1)$, 
\begin{equation*}
    ||\hat \Lambda_m|| \lesssim \sqrt{p_m}, \quad (m=1, \dots, M).
\end{equation*}
\end{lemma}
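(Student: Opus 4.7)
The plan is to recognize that this lemma is a direct consequence of the closed form for $\hat\Lambda_m$ together with the spectral norm bound on $Y_m$ already established in Lemma \ref{lemma:Y_m}. First, I would recall from \eqref{eq:mu_Lambda} that
\begin{equation*}
 \hat \Lambda_m = \frac{1}{n + \tau_m^{-2}} Y_m^\top \hat F = \frac{\sqrt{n}}{n + \tau_m^{-2}} Y_m^\top U,
\end{equation*}
since $\hat F = \sqrt{n}\, U$ with $U\in\mathbb{R}^{n\times k_0}$ the leading left singular vectors of $\tilde P$.

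Next, I would apply submultiplicativity of the spectral norm to obtain
\begin{equation*}
  ||\hat \Lambda_m|| \leq \frac{\sqrt{n}}{n + \tau_m^{-2}} ||Y_m^\top U|| \leq \frac{\sqrt{n}}{n + \tau_m^{-2}} ||Y_m||\cdot ||U||,
\end{equation*}
and note that $||U|| = 1$ since the columns of $U$ are orthonormal. Under Assumption \ref{assumption:hyperparameters}, $\tau_m^{-2}$ is a fixed constant, so $\sqrt{n}/(n+\tau_m^{-2}) \asymp 1/\sqrt{n}$.

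Finally, Lemma \ref{lemma:Y_m} gives $||Y_m|| \lesssim \sqrt{n p_m}$ with probability at least $1-o(1)$ under Assumptions \ref{assumption:model}--\ref{assumption:sigma}, which (combined with Assumption \ref{assumption:Lambda} implicit in that lemma) yields
\begin{equation*}
  ||\hat \Lambda_m|| \lesssim \frac{1}{\sqrt{n}} \cdot \sqrt{n p_m} = \sqrt{p_m}
\end{equation*}
on the same high-probability event, simultaneously for all $m=1,\dots,M$ by a union bound over the $M$ (fixed) views. There is essentially no obstacle here: the lemma is a bookkeeping statement that isolates the bound $||\hat\Lambda_m||\lesssim\sqrt{p_m}$ used repeatedly in the proof of Theorem \ref{thm:posterior_contraction}, and all the genuine probabilistic work is already absorbed into Lemma \ref{lemma:Y_m}.
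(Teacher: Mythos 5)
Your proof is correct and follows exactly the paper's argument: write $\hat\Lambda_m = \frac{\sqrt{n}}{n+\tau_m^{-2}} Y_m^\top U$, use submultiplicativity with $\|U\|=1$, and invoke the bound $\|Y_m\|\lesssim\sqrt{np_m}$ from Lemma \ref{lemma:Y_m}. The only difference is that you make the union bound over the fixed number of views explicit, which the paper leaves implicit.
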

\begin{proof}[of Lemma \ref{lemma:Lambda_m_hat}]
Consider
    \begin{equation*}
         ||\hat \Lambda_m|| \leq \frac{\sqrt{n}}{n + \tau_m^{-2}} ||Y_m|| ||U|| \lesssim \frac{1}{\sqrt{n}} \sqrt{n p_m} = \sqrt{p_m},
    \end{equation*}
    since $ ||Y_m|| \lesssim \sqrt{n p_m}$ with probability at least $1-o(1)$ by Lemma \ref{lemma:Y_m}.
\end{proof}

\begin{lemma}\label{lemma:norm_lambda_hat}
 Let $\hat \lambda_{mj}$ be the posterior mean for $\lambda_{mj}$ defined in \eqref{eq:posterior_params}. Then, if Assumptions \ref{assumption:model}--\ref{assumption:hyperparameters} hold and $n/\sqrt{p_{\min}} = o(1)$, 
       where $p_{\min} = \min_{m=1, \dots M}p_m$, for all $j,j'=1, \dots, p_m$ and $m =1, \dots, M$, we have 
 \begin{equation*}
     \begin{aligned}
         ||\hat \lambda_{mj}|| &\overset{pr}{\to} ||\lambda_{0mj}||,\\
         \hat \lambda_{mj}^\top \hat \lambda_{mj'}  &\overset{pr}{\to}\lambda_{0mj}^\top  \lambda_{0mj'}.  
     \end{aligned}
 \end{equation*}
Moreover, for all $j=1, \dots, p_m$, $j'=1, \dots, M$ and $m,m' =1, \dots, M$, with $m \neq m'$, we have 
 \begin{equation*}
     \begin{aligned}
         \hat \lambda_{mj}^\top \hat \lambda_{m'j'}  \overset{pr}{\to}\lambda_{0mj}^\top A_m^\top A_{m'} \lambda_{0m'j'}.  
     \end{aligned}
 \end{equation*}
\end{lemma}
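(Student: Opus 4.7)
The plan is to deduce this lemma as a direct corollary of Theorem \ref{thm:clt}, which has already been proved under the same set of assumptions and under the same rate condition $n/\sqrt{p_{\min}} = o(1)$. The CLT states that, for every $m, m' \in \{1,\dots,M\}$ and every admissible pair $(j, j')$,
\begin{equation*}
\sqrt{n}\bigl(\hat\lambda_{mj}^\top \hat\lambda_{m'j'} - \lambda_{0mj}^\top A_m^\top A_{m'}\lambda_{0m'j'}\bigr) \;=\; S_{0mm'jj'}\, Z_n + o_p(1),
\end{equation*}
where $Z_n \Rightarrow N(0,1)$ and $S_{0mm'jj'}$ is the deterministic quantity defined in \eqref{eq:S_0_sq}. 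My plan is to show that $S_{0mm'jj'}$ is bounded above by a constant independent of $n$ and the $p_m$'s, so that multiplying through by $1/\sqrt{n}$ produces an $o_p(1)$ term.

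To verify boundedness of $S_{0mm'jj'}$, I would combine Assumptions \ref{assumption:sigma} and \ref{assumption:Lambda} with the fact that all latent dimensions are fixed constants by Assumption \ref{assumption:hyperparameters}. Specifically, $\sigma_{0mj}^2 \leq c_{\sigma 2}$ uniformly in $(m,j)$, and each row $\lambda_{0mj}$ of $\Lambda_{0m}$ satisfies $\|\lambda_{0mj}\| \leq \sqrt{k_m}\,\|\Lambda_{0m}\|_\infty \lesssim 1$, since $k_m$ is fixed and $\|\Lambda_{0m}\|_\infty$ is bounded. Inspecting each of the three cases in \eqref{eq:S_0_sq} then shows that every term is of constant order, and a Cauchy–Schwarz bound handles the cross term $(\lambda_{0mj}^\top A_m^\top A_{m'}\lambda_{0m'j'})^2 \leq \|\lambda_{0mj}\|^2 \|\lambda_{0m'j'}\|^2 \lesssim 1$.

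Having $S_{0mm'jj'} = O(1)$, I would conclude by Slutsky's theorem that $S_{0mm'jj'}Z_n/\sqrt{n} \to 0$ in probability, because convergence in distribution to a constant is equivalent to convergence in probability. This immediately yields
\begin{equation*}
\hat\lambda_{mj}^\top \hat\lambda_{m'j'} \;\overset{pr}{\to}\; \lambda_{0mj}^\top A_m^\top A_{m'}\lambda_{0m'j'},
\end{equation*}
which is exactly the third (cross-view) assertion. For the second assertion, I take $m=m'$ and use that $A_m$ is a Boolean column-selection matrix with orthonormal columns, so $A_m^\top A_m = I_{k_m}$; the centering collapses to $\lambda_{0mj}^\top \lambda_{0mj'}$. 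For the first assertion, I take the further specialization $j=j'$ to obtain $\|\hat\lambda_{mj}\|^2 \overset{pr}{\to} \|\lambda_{0mj}\|^2$, and then invoke the continuous mapping theorem with the square-root function (which is continuous on $[0,\infty)$ and where the limit $\|\lambda_{0mj}\|$ is bounded away from $0$ by Assumption \ref{assumption:Lambda}) to conclude $\|\hat\lambda_{mj}\| \overset{pr}{\to} \|\lambda_{0mj}\|$.

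There is no real obstacle in this argument since the heavy lifting has already been done in Theorem \ref{thm:clt}; the only mild care needed is to check that the identity $A_m^\top A_m = I_{k_m}$ is correctly used and that $S_{0mm'jj'}$ is indeed bounded uniformly—both of which reduce to the fixed-dimension and bounded-entry assumptions. The lemma is essentially a restatement of the CLT at the level of consistency, packaged for use in the subsequent proof of the Bernstein–von Mises result (Theorem \ref{thm:bvm}), where these plug-in limits are needed to identify the asymptotic posterior variance $T^2_{0mm'jj'}(\rho_m,\rho_{m'})$.
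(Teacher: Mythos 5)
Your proposal is correct and takes exactly the same route as the paper, whose entire proof of this lemma is the one-line remark that the results follow from Theorem \ref{thm:clt}; you have simply spelled out the details (uniform boundedness of $S_{0mm'jj'}$ via Assumptions \ref{assumption:sigma}--\ref{assumption:hyperparameters}, Slutsky, the specializations $m=m'$ and $j=j'$, and continuity of the square root) that the paper leaves implicit.
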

\begin{proof}[of Lemma \ref{lemma:norm_lambda_hat}]
    The results follows from Theorem \ref{thm:clt}.
\end{proof}

\begin{lemma}\label{lemma:convergence_tilde_sigma}
     Suppose Assumptions \ref{assumption:model}--\ref{assumption:sigma} hold and $\sqrt{n}/p_m = o(1)$ for all $m=1,\dots, M$. Then, as $n, p_1, \dots, p_M \to \infty$, we have \begin{equation*}
       \max_{j=1, \dots, p_m} |   \tilde \sigma_{mj}^2 - \sigma_{0mj}^2| \lesssim \big(\frac{\log{p_m}}{n}\big)^{1/3} + \frac{1}{p_{\min}}, \quad (m=1, \dots, M),
\end{equation*}
with probability at least $1-o(1)$,  where $\tilde \sigma_{mj}^2$ is a sample for the idiosyncratic error variance of the $j-$th variable in the $m$-th view from $\tilde \Pi$ and $p_{\min} = \min(p_1, \dots, p_M)$.
\end{lemma}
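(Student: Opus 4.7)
The plan is to bound the quantity of interest by splitting the discrepancy between the posterior draw and the truth into a posterior-sampling fluctuation and a data-deterministic bias,
\begin{equation*}
|\tilde\sigma_{mj}^2 - \sigma_{0mj}^2| \,\le\, |\tilde\sigma_{mj}^2 - \delta_{mj}^2| + |\delta_{mj}^2 - \sigma_{0mj}^2|,
\end{equation*}
and then to control each piece uniformly over $j=1,\dots,p_m$ by combining tail bounds for chi-squared random variables with a union bound. Assumptions \ref{assumption:model}--\ref{assumption:hyperparameters} together with the condition $\sqrt{n}/p_m = o(1)$ will be invoked to absorb lower-order terms into the rate.

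For the bias term, I would invoke Lemma~\ref{lemma:delta_j}, which gives the stochastic representation $\delta_{mj}^2 \overset{d}{=} (\sigma_{0mj}^2/n)\chi^2_{n-k_0} + s_{mj}$ with $\max_{m,j}|s_{mj}|\lesssim 1/n + 1/p_{\min}$ with probability $1-o(1)$. Because the columns $y_m^{(j)}$ are independent and the remainder $U U^\top - U_0 U_0^\top$ is shared (and small in operator norm by Proposition~\ref{prop:recovery_P_0}), the leading chi-squared summands across $j$ are independent. A Laurent--Massart deviation inequality together with a union bound over the $p_m$ indices produces $\max_j |\chi^2_{n-k_0,j}/n - 1| \lesssim \sqrt{\log p_m / n}$, and hence $\max_j |\delta_{mj}^2 - \sigma_{0mj}^2| \lesssim \sqrt{\log p_m / n} + 1/p_{\min}$. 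This contribution is dominated by the target rate $(\log p_m/n)^{1/3} + 1/p_{\min}$.

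For the posterior-fluctuation term, I would use that $\tilde\sigma_{mj}^2 \sim IG(\nu_n/2,\nu_n\delta_{mj}^2/2)$ admits the representation $\tilde\sigma_{mj}^2 \stackrel{d}{=} \nu_n \delta_{mj}^2/W_j$ with $W_j \sim \chi^2_{\nu_n}$ and, conditional on the data, with the $W_j$'s independent across $j$. Hence $\tilde\sigma_{mj}^2 - \delta_{mj}^2 = \delta_{mj}^2(\nu_n-W_j)/W_j$. A Laurent--Massart bound gives $|W_j - \nu_n| \lesssim \sqrt{\nu_n \log p_m}$ uniformly with $\tilde\Pi$-probability $1-o(1)$, and combining this with $\delta_{mj}^2 \lesssim 1$ (uniformly in $j$, by the previous step) yields $\max_j|\tilde\sigma_{mj}^2 - \delta_{mj}^2| \lesssim \sqrt{\log p_m/n}$, again dominated by the stated rate; alternatively, a Markov-type bound on a suitable low-order central moment of the inverse-gamma (say third moments controlled by $\delta_{mj}^6/n^{3/2}$) combined with a union bound over $p_m$ components produces the weaker cube-root form $(\log p_m/n)^{1/3}$, which the statement retains as the conservative overall rate. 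Adding the $1/p_{\min}$ term from Lemma~\ref{lemma:delta_j} delivers the claim.

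The main obstacle will be ensuring that the independence used in the two union bounds is genuine. Specifically, the $y_m^{(j)}$'s across $j$ are independent by the model, but $\delta_{mj}^2$ depends on $U$, which is shared across $j$; so I would treat the chi-squared part of $\delta_{mj}^2$ as independent across $j$ (driven by the independent noise vectors $\epsilon_m^{(j)}$) while absorbing the $U-U_0$ contribution into the deterministic remainder $s_{mj}$ via Proposition~\ref{prop:recovery_P_0}. Similarly for the posterior step, conditional independence of $\{\tilde\sigma_{mj}^2\}_j$ given the data follows from the parallel, variable-by-variable form of the surrogate posterior in \eqref{eq:lambda_sigma_cc}, so the union bound goes through once the data-side event has been fixed.
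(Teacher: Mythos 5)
Your proposal is correct and follows essentially the same route as the paper: the paper controls the posterior fluctuation through the exact multiplicative identity $\tilde\sigma_{mj}^2-\sigma_{0mj}^2=\big(1-2U_{mj}/\nu_n\big)\big\{\delta_{mj}^2-\sigma_{0mj}^2-2U_{mj}\sigma_{0mj}^2/\nu_n\big\}$ with $U_{mj}=\tfrac{\nu_n}{2}\big(\delta_{mj}^2\tilde\sigma_{mj}^{-2}-1\big)$ --- which is precisely your centered $\chi^2_{\nu_n}$ variable $(W_j-\nu_n)/2$ --- and then applies the same two uniform chi-squared concentration bounds together with Lemma \ref{lemma:delta_j}, exactly as you do. Your additive triangle-inequality split versus the paper's multiplicative identity is a cosmetic difference, and your sharper $\sqrt{\log p_m/n}$ bounds imply the stated $(\log p_m/n)^{1/3}$ rate.
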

\begin{proof}[of Lemma \ref{lemma:convergence_tilde_sigma}]
We follow similar steps of the proof of part (b) of Theorem 3.6 in \citet{fable}. Consider the following decomposition
\begin{equation*}
    \begin{aligned}
       \tilde \sigma_{mj}^2 - \sigma_{0mj}^2 &= \big(1 - 2 \frac{U_{mj}}{\nu_n} \big) \big\{\delta_{mj}^2- \sigma_{0mj}^2 - 2\frac{U_j}{\nu_n} \sigma_{0j}^2 \big\},
    \end{aligned}
\end{equation*}
where $U_{mj}= \frac{\nu_n}{2} \big(\delta_{mj}^2 \tilde \sigma_{mj}^{-2} -1 \big)$. With probability at least $1-o(1)$, we have $\max_{j=1,\dots, p_m} |U_{mj}|/\mathbf{\gamma}_n \lesssim (\log{p_m}/n)^{1/3}$, by Lemma E.7 in \citet{fable}, which implies $\min_{j=1,\dots, p_m}\big|1 + \frac{2}{\mathbf{\gamma}_n} U_{mj}\big| \gtrsim 1/2$ with probability at least $1-o(1)$.
Thus,
\begin{equation*}
    \max_{j=1, \dots, p_m} |\tilde \sigma_{mj}^2 - \sigma_{0mj}^2| \lesssim  \max_{j=1, \dots, p_m} | \delta_{mj}^2 - \sigma_{0mj}^2| + \sigma_{0mj}^2 \max_{j=1, \dots, p_m} \frac{|U_{mj}|}{\mathbf{\gamma}_n}.
\end{equation*}
Consider the following decomposition
\begin{equation*}
    \begin{aligned}
       \delta_{mj}^2- \sigma_{0mj}^2 &=  \frac{\sigma_{0mj}^2}{n} \big( \zeta -  (n-k) \big) + \frac{k}{n}\sigma_{0mj}^2  + s_{mj}, \quad \zeta \sim \chi_{n-k_m}^2
    \end{aligned}
\end{equation*}
where we used the representation of $\delta_{mj}$ of Lemma \ref{lemma:delta_j} and, with probability at least $1-o(1)$, $\max_{j=1, \dots, p_m} |s_{mj}| \lesssim \frac{1}{n} + \frac{1}{p_{\min}}$ by Lemma \ref{lemma:delta_j}. 
Moreover, with probability at least $1-o(1)$, 
 \begin{equation*}
         \max_{j=1, \dots, p_m} \bigg| \frac{C_j}{\sigma_{0mj}^2} - (n-k_m)  \bigg|
 \lesssim n \big(\frac{\log{p_m}}{n}\big)^{1/3},   \end{equation*}
 by Lemma E.7 of \citet{fable}.
   Combining all of the above, we get
\begin{equation*}
       \max_{j=1, \dots, p_m} |   \tilde \sigma_{mj}^2 - \sigma_{0mj}^2| \lesssim \big(\frac{\log{p_m}}{n}\big)^{1/3} + \frac{1}{p_{\min}},
\end{equation*}
with probability at least $1-o(1)$.
\end{proof}

\section{Additional details about the selection of the number of latent factors}\label{sec:additional_details_k}
To implement the criterion in \eqref{eq:k_hat}, we approximate the joint maximum likelihood estimates as follows. For each value of the latent dimension $k$, we estimate the latent factors active in the $m$-th view by the matrix of left singular vectors of $Y_m$ scaled by $\sqrt{n}$, and the factor loadings through the penalized regression   
\begin{equation*}
     \hat \Lambda_m^{(k)}  = \underset{ B_m \in \mathbb R^{p_m \times k}}{\operatorname{argmin}} || Y_m - \hat F_m^{(k)} B_m^\top ||_F^2 + \tau_m^{-2} ||B_m||_F^2,
\end{equation*}
where $\hat F_m^{(k)}$ is the estimate of the latent factors and $\tau_m$ is chosen using the criterion described in \ref{subsec:hyperparams}. Finally, we estimate the residual error variances through the empirical variances of columns $Y_m - \hat F_m^{(k)}  \hat \Lambda_m^{(k) \top}$.

\section{Additional details about the numerical experiments}\label{sec:additional_details_simulations}
We fit \texttt{ROTATE} setting $\lambda_0 = 5$, $\lambda_1 = 0.001$, and $\epsilon=0.05$, as in the example provided in the supplemental code of \citet{rotate}.
For both \texttt{MOFA} and \texttt{ROTATE}, we set the upper-bound to the number of factors to $k_0 +5$. 
For \texttt{AJIVE}, we set each view's signal rank to the true value $k_m$, thereby giving it an advantage over competitors. To choose the rank of the shared signal, let $UDV^\top $ be the singular value decomposition of $U_{\text{joint}} = \begin{bmatrix}
    U_1 & \cdots & U_M
\end{bmatrix} \in \mathbb R^{n \times \sum_m k_m}$. We set the rank of the shared signal to $$
\underset{j=1,\dots, \min\{k_1, \dots, k_M\}}{\operatorname{argmin}} \frac{d_{j}}{d_{j+1}},
$$
where $d_l$ is  the $l$-th entry on the diagonal of $D$.
As for the alternative spectral estimator (\texttt{SVD}), we estimate the low-rank component of the intra-view covariance via $V_m D_m^2 V_m^\top /n$ and the inter-view covariances via $V_m D_mU_m^\top U_{m'} D_{m'} V_{m'}^\top / n$. For all methods that do not provide an estimate of the residual error variances, we estimate it via the empirical variances of the unexplained variation.

For the coverage results shown in Table \ref{tab:uq}, we report the average coverage of the entries off the diagonal for the intra-view covariance since \texttt{FABLE} does not directly provide samples for $\Lambda_m \Lambda_m$ but only for $\Lambda_m \Lambda_m + \Psi_m$.

The code to implement the \texttt{FAMA} methodology and replicate the experiments is available at \url{https://github.com/maurilorenzo/FAMA}. All experiments were run on a Laptop with 11th Gen Intel(R) Core(TM) i7-1165G7 @ 2.80GHz and 16GB RAM.
\section{Additional details about the application}\label{sec:additional_details_application}
For \texttt{MOFA} and \texttt{ROTATE}, we consider two options for the upper bound of the number of factors: $\hat k_0^{(1)} +5$ and $\hat k_0^{(2)} +5$, where $\hat k_0^{(1)}$ and $\hat k_0^{(2)}$ are the estimates of the general latent factors used by \texttt{FAMA} obtained via the criterion in \eqref{eq:k_0_hat} and by \texttt{FABLE} respectively. 
\texttt{MOFA} does not directly provide an estimate for the residual error variances. Therefore, we estimated the $j$-th element on the diagonal of $\Psi_m$ via the empirical variance of $y_m^{(j)} - \hat F_m \hat \lambda_{mj}$, where $\hat F_m$ and $\lambda_{mj}$ are the \texttt{MOFA} estimates for the factors active in view $m$ and the corresponding loadings for the $j$-th outcome.

Applying the criterion in \eqref{eq:k_hat}, for each view $m$, we set $k_{m, \max}$ at the smallest value such that the first $k_{m, \max}$ principal components explain at least $90\%$ of the variation of the data in this view. 

We pre-processed the data by transform each variable as $\tilde y_{mij} = \Phi^{-1}(\hat F_{mj}(y_{ij}^{m}))$, where $\hat F_{mj}(\cdot)$ is the empirical cumulative distribution function of the $j$-th variable in the $m$-th view.

\section{Bayesian decoupling for inference on specific vs shared latent factors}\label{sec:bd}
An important task arising in the analysis of multi-view data is inferring which factors are active in each view, and, similarly, for each pair of views, which factors are shared and which are view-specific. Our estimation approach detailed in Section \ref{sec:method} bypasses this task to avoid some of the computational and statistical hurdles associated to identifying where latent factors are active. To amend this, we develop an approach based on Bayesian decoupling (\texttt{BD}). \texttt{BD} was originally proposed in \citet{bd}, and subsequently developed in, for instance, \citet{fold, bd_li}. In summary, \texttt{BD} is a decision theoretic framework that starts by inferring the posterior distribution of the parameters, often without sparsity or other constraints that are desired for interpretable inference but not need to accurately characterize the data likelihood, and then obtains estimates by minimizing the posterior expectation of a loss. The loss trades-off fidelity to the Bayesian posterior with desirable properties of the estimate, such as sparsity. To infer the factors that are active in the $m$-th view, $F_m \in \mathbb R^{n \times k_m}$, we let 
\begin{equation}\label{eq:bd_F_m}
     (\bar F_m, \bar \Lambda_m )  \in \underset{\bar \Lambda_m \in\in \mathbb R^{p_m \times k_m}, F_m \in \mathbb R^{n \times k_m}~ :~ F_m^\top F_m = I_n}{\operatorname{argmin}} E\big[|| F_m \Lambda_m^\top - \hat F\tilde \Lambda_m^\top ||_F^2 \mid Y],
\end{equation}
where $k_m$ is the number of latent factors for view $m$, $\hat F$ is the estimate of the overall latent factors obtained with the procedure described in Section \ref{subsec:factor_estimation} and the expectation is taken with respect to the posterior distribution of $\tilde \Lambda_m$, which is defined in \eqref{eq:lambda_sigma_cc}. \eqref{eq:bd_F_m} is  intuitive in that it minimizes the \textit{a posteriori} reconstruction error of the estimated signal while enforcing the correct rank for the $m$-th view. The solution to \eqref{eq:bd_F_m} is not unique, since for a solution $(\bar F_m, \bar \Lambda_m )$, any tuple of the type $(\bar F_m R, \bar \Lambda_m R)$ for some orthogonal marix $R$ is a solution as well. Such rotational ambiguity is typical of factor models. The problem in \eqref{eq:bd_F_m} is equivalent to 
\begin{equation*}\label{eq:bd_F_m_2}
     (\bar F_m, \bar \Lambda_m )  \in \underset{\bar \Lambda_m \in\in \mathbb R^{p_m \times k_m}, F_m \in \mathbb R^{n \times k_m}~ :~ F_m^\top F_m = I_n}{\operatorname{argmin}} || F_m \Lambda_m^\top - \hat F\hat \Lambda_m^\top ||_F^2,
\end{equation*}
where $\hat \Lambda_m $ is the posterior expectation of $\tilde \Lambda_m$, which is given by \eqref{eq:mu_Lambda}. Letting $D_m$ be the diagonal matrix with the leading $k_m$ singular values of $\hat F \hat \Lambda_m^\top$ on the diagonal and $\hat U_m$, $\hat V_m$ be the matrices of corresponding left and right singular vectors, a solution to \eqref{eq:bd_F_m} is given by $\bar F_m = \sqrt{n} \hat U_m$ and $\bar \Lambda_m = \hat V_m \hat D_m / \sqrt{n}$, and $\bar F_m$ provides an estimate of $F_m$. $ (\bar F_m, \bar \Lambda_m )$ corresponds to the PCA-estimate of latent factors and loadings using $\hat F\hat \Lambda_m^\top$ as the data matrix. Hence, it benefits from the signal denoising step (estimating of the low-rank signal), which pools information across views via the steps in Section \ref{subsec:factor_estimation}. This can improve accuracy in particular for smaller or noisier views, where estimates using only view-specific data tend to have low accuracy. If inferring shared vs view-specific latent factors is of interest, once estimates of the matrices $\{\bar F_m\}$ are obtained via the step above, one could apply any off-the-shelf method, such as \texttt{AJIVE}, to identify shared and specific component for any collection of views.  This inference problem becomes massively easier by using estimates for $F_m's$ compared to working with the original data matrices. 

We compare the estimate proposed in \eqref{eq:bd_F_m} with the estimates of the latent factors obtained via \texttt{SVD} (which is equivalent to the PCA-estimator \citep{bai_03} applied to each view separately) and \texttt{AJIVE}. We use the same data generating mechanism for the numerical experiments in Section \ref{sec:numerical_experiments}. Since our focus is solely on comparing the accuracy of different methods in estimating the latent factors, we set the number of active latent factors in each view equal to its true value for every method to avoid confounding due to rank misspecification. We measure estimation accuracy via the relative Frobenius Procrustes error averaged across views (reported in Table \ref{tab:factors}). Our proposed methodology achieves a better performance than the alternatives. 

\begin{table}[ht]
\caption{Average relative Frobenius Procrustes error in estimating the latent factors. Values are computed over 50 independent replications. We report mean and standard deviation (sd) for each method and sample size. Values have been multiplied by $10^2$. \label{tab:factors}}
\centering
\footnotesize
\resizebox{0.7\textwidth}{!}{%
\begin{tabular}{lcccccc}

& \multicolumn{6}{c}{Balanced scenario} \\
$n$ 
& \multicolumn{2}{c}{\texttt{FAMA}} 
& \multicolumn{2}{c}{\texttt{AJIVE}} 
& \multicolumn{2}{c}{\texttt{SVD}} \\
& mean & sd & mean & sd & mean & sd \\
250  & 16.18 & 0.43 & 27.73 & 11.99 & 19.48 & 0.40 \\
500  & 12.96 & 0.37 & 35.25 & 14.20 & 16.22 & 0.31 \\
1000 & 10.56 & 0.18 & 45.65 & 0.37  & 14.35 & 0.16 \\
& \multicolumn{6}{c}{Unbalanced scenario} \\
$n$ 
& \multicolumn{2}{c}{\texttt{FAMA}} 
& \multicolumn{2}{c}{\texttt{AJIVE}} 
& \multicolumn{2}{c}{\texttt{SVD}} \\
& mean & sd & mean & sd & mean & sd \\
250  & 19.07 & 0.39 & 30.09 & 10.54 & 24.51 & 0.40 \\
500  & 16.01 & 0.32 & 38.19 & 12.54 & 21.28 & 0.31 \\
1000 & 13.90 & 0.15 & 45.63 & 0.37  & 14.36 & 0.16 \\

\end{tabular}%
}
\end{table}


\end{document}